\newif\ifcode
\newtheorem{theorem}{Theorem}
\newtheorem{definition}[theorem]{Definition}
\newtheorem{lemma}[theorem]{Lemma}
\newenvironment{proof}[1][Proof]{\noindent\textbf{#1.} }{\hfill $\Box$\\[2mm]} 
\newenvironment{proofsketch}[1][Proof sketch]{\noindent\textbf{#1.} }{\hfill $\Box$\\[2mm]} 
\newenvironment{reptheorem}[1][Theorem]{\noindent\textbf{#1}}{}
\newcounter{linenumber}
\def\Nat{\ensuremath{\mathbb{N}}}
\newcommand{\true}{\mathit{true}}
\newcommand{\false}{\mathit{false}}
\newcommand{\remove}[1]{}
\newcommand{\Wset}{\textit{Wset}}
\newcommand{\Rset}{\textit{Rset}}
\newcommand{\Dset}{\textit{Dset}}
\newcommand{\parts}{\textit{parts}}
\newcommand{\Read}{\textit{read}}
\newcommand{\Write}{\textit{write}}
\newcommand{\TryC}{\textit{tryC}}
\newcommand{\TryA}{\textit{tryA}}
\newcommand{\ok}{\textit{ok}}
\newcommand{\multitrylock}{\textit{multi-trylock}}
\newcommand{\CAS}{\textit{CAS}}
\newcommand{\mCAS}{\textit{mCAS}}
\newcommand{\ignore}[1]{}
\begin{document}

\bibliographystyle{plain}

\title{On the Cost of Concurrency in Transactional Memory} 

\author{Petr Kuznetsov\\ 
\\
\normalsize TU Berlin/Deutsche Telekom Laboratories
\and 
Srivatsan Ravi\\
\\
\normalsize TU Berlin/Deutsche Telekom Laboratories
\thanks{The research leading to these results has received funding
  from the European Union Seventh Framework 
Programme (FP7/2007-2013) under grant agreement N 238639, ITN project TRANSFORM.}
}


\date{}
\maketitle

\begin{abstract}
The promise of software transactional memory (STM) is 
to combine an easy-to-use programming interface
with an efficient utilization of the concurrent-computing abilities 
provided by modern machines.
But does this combination come with an inherent cost?

We evaluate the cost of concurrency by measuring the amount of expensive synchronization
that must be employed in an STM implementation that ensures \emph{positive} concurrency, 
i.e., allows for concurrent transaction processing in some executions. 
We focus on two popular progress conditions that provide positive concurrency: 
\emph{progressiveness} and \emph{permissiveness}.

We show that in permissive STMs, providing a very high degree of concurrency, 
a transaction performs a linear number of expensive 
synchronization patterns with respect to its read-set size. 
In contrast, progressive STMs provide a very small degree of concurrency
but, as we demonstrate, can be implemented using at most one expensive synchronization 
pattern per transaction. 
However, we show that even in progressive STMs, a transaction has 
to ``protect'' (e.g., by using locks or strong synchronization primitives)
a linear amount of data with respect to its write-set size. 
Our results suggest that looking for high degrees of concurrency
in STM implementations may bring a considerable synchronization cost.
\end{abstract}

\begin{center}
\textbf{Keywords:} Transactional Memory, Concurrency, RAW/AWAR complexity
\end{center}
%
%
\thispagestyle{empty}
\clearpage
\pagenumbering{arabic}
\section{Introduction}
\label{sec:intro}
The software transactional memory (STM) paradigm promises to efficiently 
exploit the concurrency provided  
by modern computers while offering an easy-to-use programming interface.
It allows a programmer to write a concurrent program as a sequence of 
\emph{transactions}.
A transaction is a series of read and write operations on
\emph{transactional objects} (or \emph{t-objects}).
An STM implementation turns this series into
a sequence of accesses to underlying \emph{base objects} and exports  
``all-or-nothing'' semantics: 
every transaction either \emph{commits} in which case all its operations 
are expected to instantaneously ``take effect'', or \emph{aborts} in which case the transaction 
does not affect any other transaction.
In this paper, the default STM correctness property is \emph{opacity}~\cite{GK08-opacity,tm-book} 
that, informally, requires that in every execution, there is a total order on 
all transactions, \emph{including aborted ones}, 
where every read operation returns the argument of 
the last committed write operation on the read t-object.

An STM implementation that aborts every transaction is trivially correct but useless. 
Therefore, we need to specify a \emph{progress condition} that captures the execution scenarios in
which a transaction should commit.
Consider, for example, a simple non-trivial progress condition that 
requires a transaction to commit if it does not overlap with any other 
transaction.
This condition can be implemented using a single lock that is acquired at 
the beginning of a transaction and released at its end. 
The resulting ``single-lock'' STM will be running one transaction at a time, thus ignoring 
the potential benefits of multiprocessing.
Similarly, an \emph{obstruction-free} STM~\cite{OFTM}
that only requires a transaction to commit if it 
eventually runs with no contention allows for no concurrency at all. 
But to exploit the power of modern multiprocessor machines, 
an STM implementation must allow at least \emph{some} transactions to make progress
concurrently. 
If this is the case, we say that the implementation provides \emph{positive}
concurrency, in contrast to zero concurrency provided by 
``single-lock'' and obstruction-free STMs. 

In this paper, we try to understand the inherent costs of 
allowing multiple concurrent transactions to commit.  
Therefore, we focus on progress conditions that provide 
positive concurrency: 
\emph{progressiveness}~\cite{GK09-progressiveness} 
and \emph{permissiveness}~\cite{GHS08-permissiveness}.
Informally, a progressive STM~\cite{GK09-progressiveness}
provides a very small degree of concurrency by only enforcing a transaction $T$
to commit if it encounters no concurrent \emph{conflicting} transaction $T'$: 
$T$ and $T'$ conflict on a t-object $X$ if they concurrently access $X$ 
and one of the transactions tries to update $X$. 
A stronger variant of progressiveness, called \emph{strong progressiveness}, 
additionally requires that in case a set of transactions 
conflict on at most one t-object, at least one transaction commits.
A much more demanding permissive STM~\cite{GHS08-permissiveness}
stipulates that a transaction must commit, unless 
committing it violates correctness, which, informally, provides
the highest degree of concurrency.

To understand the inherent cost of positive concurrency
in STM implementations, 
we first consider the number of \emph{RAW/AWAR} 
synchronization patterns~\cite{AGK11-popl} 
that must be performed by a process in the course of a transaction.
A \emph{read-after-write} (RAW) pattern consists of a write
to a (shared) base object $x$ followed 
by a read from a different base object $y$ (without a write to $y$ in between). 
An \emph{atomic write-after-read} (AWAR) pattern consists of 
an atomic (indivisible) execution of a read 
of a base object followed by a write on (possibly the same) base object.  
Accounting for RAW/AWAR patterns is important since most modern processor 
architectures use relaxed memory models, where maintaining the order of operations 
in a RAW requires a \emph{memory fence}~\cite{McKenney10}  
and each AWAR is manifested as an atomic instruction such as 
Compare-and-Swap (CAS). 
In most architectures, memory fences and atomic instructions 
are believed to be considerably slower than
regular shared-memory accesses~\cite{AdveG96,Lee99,McKenney10,LWN-mckenney}.

We show that every permissive 
and opaque STM implementation 
has, for any $m\in\Nat$, an execution in which a transaction with a read set of size $m$ 
incurs $\Omega(m)$ consecutive RAW/AWAR patterns.
This contrasts with a single-lock STM that uses only one
such pattern, since a successful lock acquisition can be implemented using 
only one (multi-) RAW~\cite{Lamport74-bakery}\footnote{A multi-RAW
consists of a series of writes followed by a series of reads from a distinct locations.
Maintaining the multi-RAW order can be achieved with a single memory fence.}
or AWAR~\cite{anderson-90-tpds}.
We show that one RAW/AWAR is in fact optimal for single lock STMs.
Moreover, we present implementations of \emph{progressive} STMs 
that employ just a single RAW or AWAR pattern per transaction.
Also, we describe a \emph{strongly progressive} space-bounded STM implementation that incurs four RAWs per transaction. 

These implementations suggest that the RAW/AWAR metric is too coarse-grained 
to evaluate the complexity of progressive STMs.
Therefore, we introduce a new metric called \emph{protected data size}
that, intuitively, captures the amount of data that a transaction 
must exclusively control at some point of its execution. 
All progressive STM implementations we are aware of 
(see, e.g., an overview in~\cite{GK09-progressiveness})
use locks or timing assumptions to give an updating transaction exclusive access  
to all objects in its write set at some point of its execution.
E.g., lock-based progressive implementations require that a
transaction grabs all locks on its write set
before updating  the corresponding base objects.    
Our results show that this is an inherent price to pay for
providing progressive concurrency: every committed transaction in a progressive and 
\emph{strict disjoint-access-parallel}\footnote{A disjoint-access-parallel
STM implementation~\cite{israeli-disjoint,AHM09} guarantees that
transactions accessing 
disjoint sets of transactional objects are executed independently of each other, i.e., 
without conflicting on the base objects.} STM implementation 
must, at some point of its execution, protect every object 
in its write set.
Interestingly, as our progressive implementations show, 
the transaction's read set does not need to be protected.

In brief, our results imply that providing high degrees of concurrency 
in opaque STM implementations incurs a considerable synchronization cost.
Permissive STMs, while providing the best possible concurrency in theory,
require a strong synchronization primitive or a memory fence per
read operation, which may result in excessively slow execution times.
Progressive STMs provide only basic concurrency but perform considerably better in this respect: 
we present progressive implementations that incur constant RAW/AWAR complexity.
Does this mean that maximizing the ability of processing multiple transactions
in parallel should not be an important factor in STM design?
Should we rather assume little positive concurrency provided by progressiveness or  
even focus on speculative single-lock solutions \'a la \emph{flat combining}~\cite{HendlerIST10}?    
Difficult to say affirmatively, but our results suggest so.  

The rest of the paper is organized as follows. Section~\ref{sec:model} 
briefly introduces our system model and recalls the correctness criteria
in STM. Section~\ref{sec:prel} presents some useful properties of STM implementations and Section~\ref{sec:liveness} recalls the definitions of progress 
conditions of STM, including progressiveness and permissiveness.
Section~\ref{sec:raw} presents the definitions of RAW/AWAR complexity.
Sections~\ref{sec:perm} presents a linear lower bound on the number of RAW/AWAR patterns 
executed by a transaction in a permissive STM. 
Section~\ref{sec:prog} describes our progressive STM implementations 
that perform constant RAWs or AWARs per transaction and 
presents a lower bound on the amount of data to be protected by a transaction in
a progressive STM.
Section~\ref{sec:related}
summarizes some related work and Section~\ref{sec:conclusion} concludes the paper. 
Detailed proofs are delegated to the optional Appendix.
\section{Model}
\label{sec:model}
Our STM model, while keeping the spirit of the original definitions of~\cite{GK08-opacity,tm-book}, 
introduces some refinements that are instrumental for our results. 

\paragraph{Transactions.} 
Transactional memory provides the ability of reading and writing to a set 
of \emph{transactional} objects, or \emph{t-objects} using atomic \emph{transactions}.
A transaction is a sequence of accesses (reads or writes) to t-objects.
We assume that every transaction $T_k$ has a unique identifier $k$. 
Formally, STM exports the following operations (called \emph{tm-operations} in the paper): 
(1) \textit{read}$_k(X)$ that returns a value in a set $V$ or a special value $A_k\notin V$ (\emph{abort});
(2) \textit{write}$_k(X,v)$ that returns \textit{ok}$_k$ or $A_k$;
(3) \textit{tryC}$_k$ that returns $C_k\notin V$ (\emph{commit})or $A_k$ and
(4) \textit{tryA}$_k$ that returns $A_k$. 

A {\it history} $H$ is a sequence of invocations and responses of tm-operations.
A history $H$ is \emph{sequential} if every invocation is either the last event in $H$ or 
is immediately followed by a matching response.  
%
$H|k$ denotes the subsequence of $H$ restricted to events with index $k$. 
If $H|k$ is non-empty we say that $T_k$ \emph{participates} in $H$, and \parts$(H)$ 
denotes the set of transactions that participate in $H$. 
A history is \emph{well-formed} if for all $T_k$, $H|k$ is sequential and 
contains no events that appear after $A_k$ or $C_k$.
Throughout this paper, we assume that all histories are well-formed, i.e., 
the user of transactional memory never invokes a new operation before receiving a response from 
the current one and does not invoke any operation $op_k$ after $T_k$ has returned $C_k$ or $A_k$.
A history $H$ is \emph{complete} if for every $T_k\in\textit{parts}(H)$,  
$H|k$ ends with a response event.
A transaction $T_k\in\textit{parts}(H)$ is \emph{live} in $H$ if $H|k$ does not 
end with $A_k$ or $C_k$.
Otherwise, $T_k$ is called \emph{complete}.
A history is \emph{t-complete} if $\parts(H)$ 
contains only complete transactions.
A transaction $T_k\in\parts(H)$ is \emph{forcefully aborted} 
in $H$ if some operation $op_k\neq \TryA_k$ returns $A_k$. 
Two histories $H$ and $H'$ are \emph{equivalent} if for every transaction $T_k$, $H|k=H'|k$. 

The \emph{read set} (resp., the \emph{write set}) 
of a transaction $T_k\in\parts(H)$, denoted  $\Rset(T_k)$ (resp., $\Wset(T_k)$), 
is the set of t-objects that $T_k$ reads (resp.,  writes to) in $H$. 
$\Dset(T_k)=\Rset(T_k)\cup\Wset(T_k)$ is called the \emph{data set} of $T_k$.  
A transaction $T_k$ is called \emph{read-only} if $\Wset(T_k)=\emptyset$, 
otherwise, it is called \emph{updating}.

\paragraph{Real-time and deferred-update orders.}

For $T_k,T_m\in\textit{parts}(H)$, we say that  $T_k$ {\it precedes} 
$T_m$ in the \emph{real-time order} in $H$, and we write $T_k\prec_H T_m$, if $T_k$ 
is committed or aborted and the last event of $T_k$ precedes the first event of $T_m$ in $H$.
If neither $T_k\prec_H T_m$ nor $T_m\prec_H T_k$, then we say that $T_k$ and $T_m$ 
are \emph{concurrent} in $H$.
A transaction $T_k\in\parts(H)$ which is not concurrent with any other transaction in $H$ 
is called \emph{uncontended} in $H$.  
A history $H$ is \emph{t-sequential} if no two transactions are concurrent in $H$.

For $T_k,T_m\in\textit{parts}(H)$, we say that  $T_k$ {\it precedes} 
$T_m$ in the \emph{deferred-update order}, 
and we write $T_k\prec_H^{DU} T_m$ if there exists
$X\in Rset(T_k)\cap Wset(T_m)$, $T_m$ has committed, such that the response
of \textit{read}$_k(X)$ precedes the invocation of $\TryC_m()$ in $H$.  
For $T_k,T_m\in\textit{parts}(H)$, we write $T_k {}_{\prec_{H}}^{X} T_m$, if $T_k$ has committed
and the response of \textit{read}$_m(X)$, $X \in \Rset(T_m)\cap \Wset(T_k)$ returns $v$, the
value of $X$ updated in \textit{write}$_k(X,v)$.

\paragraph{Legal histories.}

Let $H$ be a complete t-sequential history. 
For every operation $\Read_k(X)$ in $H$ that reads a t-object $X$, 
we define the \emph{latest written value} of $X$ as follows:
(1) If $T_k$ contains a $\Write_k(X,v)$ preceding $\Read_k(X)$ 
then the latest written value of $X$ is the value of the latest such write. 
(2) Otherwise, if $H$ contains a $\Write_m(X,v)$ such that $m\neq k$, 
$T_m$ precedes $T_k$, and $T_m$ commits in $H$, then
the latest written value of $X$ is the value of the latest such write in $H$. 
(3) Otherwise, the latest written value of $X$ is the initial value of $X$. 
Without loss of generality, we assume that  
$H$ starts with a fictitious initializing transaction $T_0$ 
that writes $0$ to every t-object. 
We say that a complete t-sequential history $H$ is \emph{legal}
if for every t-object $X$, every read of $X$ in $H$ 
returns the latest written value of $X$.     

\paragraph{Opacity.}

Let $H$ be any complete sequential history. 
Now $\bar H$ denotes a history constructed from $H$ as follows:
(1) For every live  transaction $T_k$ in $H$, we insert $\textit{tryC}_k\cdot A_k$ 
immediately after the last event of $T_k$ in $H$ and 
(2) For every aborted transaction $T_k$ in $H$, 
we remove all write operations in $T_k$ with the matching responses.  
\begin{definition}
\label{def:opacity}
A complete sequential history $H$ is \emph{opaque} if there exists 
a legal complete t-sequential history $S$ such that (1) $\bar H$ and $S$ are equivalent and
(2) $S$ respects $\prec_H$ and $\prec_H^{DU}$.
\end{definition}
We call such a legal complete t-sequential history $S$ a \emph{serialization} of $H$. 
A weaker property, called \emph{strict serializability}~\cite{Pap79-serial}, guarantees 
opacity with respect to committed transactions in $H$. 
Obviously, every opaque history is also strictly serializable. 

\paragraph{Implementations.}

We consider an asynchronous shared-memory system in which 
processes $p_1, \ldots p_N$ communicate by 
executing atomic operations on shared \emph{base objects}.

An STM \emph{implementation} provides the processes with algorithms for 
operations $\textit{read}_k$, $\textit{write}_k$, 
$\textit{tryC}_k$ and $\textit{tryA}_k$.
Without loss of generality, we assume that base objects are accessed 
with atomic read-write operations, 
but we allow the programmer to aggregate a sequence of operations on base objects using
clearly demarcated \emph{atomic sections}: the operations within an atomic section are
to be executed sequentially.
The atomic-section construct is general enough to implement various
strong synchronization primitives, such as \emph{test-and-set} (TAS)
or \emph{compare-and-swap} (CAS). 
We assume that atomic sections may only contain a bounded 
number of base-object operations.

An \emph{execution} of an implementation $M$ is a sequence of atomic accesses 
to base objects (\emph{base-object events}),  
and invocation and responses 
of the TM operations (\emph{TM-events}). 
If a base-object event is a write or an atomic-section that contains a write
(in one of its execution paths), we say that the event is \emph{non-trivial}.    

A \textit{configuration} of $M$ (after some execution $E$) 
is determined by the states of all base objects and the states 
of the processes.  
An \emph{initial state} of $M$ is determined by the initial states of base objects
and t-objects.
We assume that each base object and each t-object is initialized to $0$.
A \emph{history} of an execution $E$, denoted by $E|_{TM}$ is the subsequence of 
$E$ restricted to TM-events. $E|_{TM,p_i}$ denotes the subsequence 
of  $E|_{TM}$ restricted to events issued by process $p_i$. 

The \emph{interval of a transaction $T_k$ in $E$} is the fragment of $E$
that starts  with the first event of $T_k$ in $E$ and ends with 
the completing event of $T_k$ ($A_k$ or $C_k$) in $E$, or, if $T_k$ has
not completed in $E$,  with the last event of $E$. A tm-operation $op_1$ 
\emph{precedes} $op_2$ in $H$ if the invocation of $op_2$ appears after 
the response of $op_1$ in $H$. 
An execution $E$ is \emph{well-formed} if 
every atomic section is executed sequentially in $E$, 
$E|_{TM,p_i}$ is t-sequential for each $p_i$, 
and no event on behalf of a transaction $T_k$ is taking place outside 
of an interval between invocation and response of some TM-operation in $T_k$. 
We assume here that a TM implementation generates only 
well-formed executions.  

A \emph{completion} of $H$ is a history constructed from $H$ by removing
some pending invocations and adding responses to the remaining pending invocations 
to the end of $H$.  
To account for initial values of t-objects, we add to the beginning of $H$
a (fictitious) transaction $T_0$ that writes $0$ to every t-object and commits.  

A complete sequential history $H'$ is a \emph{linearization} of $H$ if there exists 
a history $H''$, a completion of $H$, such that (1)
$H'$ respects the precedence order of $H$, and (2) $H'$ and $H''$ are equivalent.
\begin{definition}
An STM implementation $M$ is \emph{opaque} if for every execution $E$ of $M$, 
there exists an opaque linearization of $E|_{TM}$. 
\end{definition}

\section{Preliminaries}
\label{sec:prel}
In this section, we define some useful properties of STM implementations and
prove some simple facts that follow from these definitions.

\paragraph{Access patterns.}

The definition of STM allows a process to alternate reading and writing to t-objects arbitrarily 
in the course of a transaction. Moreover, it allows a process to read from a t-object 
that was previously written within the same transaction.
We show that this flexibility can be obtained ``for free'' given an implementation that 
only allows a user to read from a set of t-object and then to write to a set of t-objects
within a transaction.

We say that a transaction $T_k$ is \emph{canonic} in a history $H$ if $H|k$ consists 
is a sequence of reads (of distinct t-objects) followed by a sequence of writes 
(to distinct t-objects). 
A general \emph{complexity} of an STM implementation $M$ accounts for the number of accesses 
to base-objects used to implement every given transaction in every execution of $M$.

\begin{lemma}
\label{lem:access}
Let $M$ be an opaque STM implementation that can only be accessed with canonic transactions.
Then there exists an opaque STM implementation $M'$ that preserves the complexity of $M$.
\end{lemma}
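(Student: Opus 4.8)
The plan is to give an explicit simulation: any opaque implementation $M$ restricted to canonic transactions is turned into an implementation $M'$ for arbitrary (non-canonic) transactions that internally maintains a local "scratch" view of the transaction's accesses and only invokes $M$'s operations on a canonic reordering of them. Concretely, $M'$ runs each transaction $T_k$ with process-local state: a partial map $\mathit{localW}$ from t-objects to values (writes issued so far) and a set $\mathit{localR}$ of t-objects together with the values $M$ returned for them. The key point is that $M'$ defers all interaction with $M$ until $\TryC_k$, except for reads of t-objects that have not yet been written locally, which must go through to $M$ immediately.

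The simulation rules are as follows. On $\Write_k(X,v)$, $M'$ simply records $\mathit{localW}[X]:=v$ and returns $\ok_k$ without touching any base object (zero base-object accesses). On $\Read_k(X)$: if $X\in\mathrm{dom}(\mathit{localW})$, return $\mathit{localW}[X]$ from the local state, again without base-object accesses; otherwise, if $X$ was already read, return the stored value; otherwise invoke $M$'s $\Read_k(X)$, which is legitimate since at this point $T_k$'s activity inside $M$ is still a (growing) sequence of reads of distinct t-objects — i.e., canonic — and record the returned value (or propagate $A_k$ if $M$ aborts). On $\TryC_k$: for each $X\in\mathrm{dom}(\mathit{localW})$ not previously read, first invoke $M$'s $\Read_k(X)$ if needed is unnecessary — instead we simply invoke $M$'s $\Write_k(X,\mathit{localW}[X])$ for every $X\in\mathrm{dom}(\mathit{localW})$ in some fixed order, then invoke $M$'s $\TryC_k$ and return its response; if any of these returns $A_k$, return $A_k$. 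On $\TryA_k$, invoke $M$'s $\TryA_k$. By construction, the sub-history $H_M|k$ that $M$ sees for each $T_k$ is a block of reads of distinct t-objects followed by a block of writes to distinct t-objects, so every transaction presented to $M$ is canonic and $M$'s guarantees apply.

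For correctness, one shows that for every execution $E'$ of $M'$ there is a corresponding execution $E$ of $M$ (obtained by the simulation above) whose $\TM$-history $E|_{\TM}$ is opaque, and then argues that an opaque serialization $S$ of $E|_{\TM}$ can be transformed into an opaque linearization of $E'|_{\TM}$. The transformation is purely "syntactic" per transaction: replace $M$'s canonic $H_M|k$ by the original $H'|k$, which is legitimate because (i) for committed $T_k$, the values returned by reads are exactly those $M$ returned, and a committed canonic transaction in $S$ sees the same latest-written values; internal read-your-own-write reads in $H'|k$ return $\mathit{localW}[X]$ which is consistent with the latest local write by the definition of "latest written value"; (ii) for aborted $T_k$, opacity discards its writes anyway, so nothing about write ordering matters; (iii) real-time order is preserved because $M'$ does not start or finish a transaction earlier or later than the corresponding deferral in $E$ changes — the first event of $T_k$ in $E'$ and in $E$ coincide in order relative to other transactions, and similarly for the last event. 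One also checks the deferred-update order $\prec^{DU}$ is respected: a read of $X$ by $T_k$ in $M'$ that is served locally corresponds to no read in $M$, but such $X$ was locally written by $T_k$ before, so $T_k$ does not participate in any $\prec^{DU}$ constraint via $X$ as a reader; reads served through $M$ keep exactly the timing $M$ used.

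Finally, for complexity: the number of base-object accesses $M'$ performs on behalf of $T_k$ in $E'$ equals the number $M$ performs on behalf of the corresponding canonic transaction in $E$, since local reads and local writes cost nothing and the $\TryC_k$-time batch of writes to $M$ plus the immediate reads to $M$ are exactly the base-object operations $M$ issues for its canonic transaction with the same read set and write set. Hence $M'$ preserves the general complexity of $M$, and in particular its RAW/AWAR complexity and protected-data size, which is all later sections use. The main obstacle is the correctness argument, specifically verifying that the per-transaction rewriting of $S$ back to the original histories still yields a \emph{legal} t-sequential history that respects $\prec_H$ and $\prec_H^{DU}$ — the delicate case being read-your-own-write reads and the timing of deferred writes — but since $M'$ never reorders anything \emph{across} transactions and only collapses/defers events \emph{within} a transaction in a way that is invisible to opacity's legality and ordering conditions, this goes through.
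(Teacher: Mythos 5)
Your construction is essentially identical to the paper's own proof: defer writes into a local write set, serve read-your-own-write reads locally, pass other reads through to $M$ immediately, and flush the buffered writes to $M$ at $\TryC_k$, with opacity and complexity inherited directly from $M$. Your version is somewhat more careful than the paper's (you also cache repeated reads so that $M$ only sees reads of distinct t-objects, and you spell out the serialization-transfer argument the paper leaves implicit), but the approach is the same.
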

\begin{proof}
Let $\Read^M$, $\Write^M$, $\TryC^M$ and $\TryA^M$ denote the implementations of the operations 
provided by $M$. Now $M'$ is constructed as follows. 

We associate every transaction $T_k$ with a local variable $\Wset(T_k)$ 
which contains, at any moment of time, the current write set of $T_k$ with 
the values to be written.

When $\Write_k(X,v)$ is invoked, $(X,v)$ is simply added to $\Wset(T_k)$ and all other 
entries of the form $(X,v')$ are removed from $\Wset(T_k)$.  
When $\Read_k(X)$ is invoked, we first check if $X$ is in $\Wset(T_k)$ and if so, we return 
the value stored in $\Wset(T_k)$.
Otherwise, we invoke $\Read_k^M(X)$ and return the obtained value. 

When $\TryC_k()$ is invoked, we first execute $\Write_k(X,v)$ for each $(X,v)\in\Wset_k$. 
Since for each $X$ there can be at most one entry of the form $(X,v)$, 
the order in which these operations are invoked does not matter.  
Also, since all invocations of $\Write_k$ succeed all invocations of $\Read_k$, 
the resulting sequence of invocations of $M$ on behalf of $T_k$ is a canonic transaction. 
Operation $\TryA_k()$ is implemented as $\TryA_k^M()$.

Since $M$ is opaque, the resulting implementation is also opaque: just use the serialization 
of the resulting history of $M$. 
Since the modifications of $M$ involve only local variables, 
the base-object complexity of $M'$ is the same 
as that of $M$.
\end{proof}
Therefore, in the rest of the paper, we only consider canonic transactions, which 
simplifies the analysis without sacrificing generality.

\paragraph{Disjoint-access parallelism (DAP).}

In STM implementations, it is considered important to allow transactions that are not
related through  their data sets that they access to execute
independently.

Let $I$ be a fragment of an execution $E$. 
Following~\cite{israeli-disjoint,AHM09}, 
we first define a \emph{conflict graph} which relates transactions that are live
in $I$.
Vertices of the graph represent t-objects. 
The vertices representing distinct t-objects $X$ and $Y$ are related with an edge if and only if
there is a  transaction $T$ such that $\{X,Y\}\subseteq\Dset(T)$ and 
the interval of $T$ overlaps with $I$ in $E$.

Two transactions $T_i$ and $T_j$ are \emph{disjoint-access in $E$} if 
there is no path between an item in $\Dset(T_i)$ and an item in
$\Dset(T_j)$ in the conflict graph of the minimal execution interval
containing the intervals of $T_i$ and $T_j$.    

Two transactions \emph{contend on a base-object $x$} in an execution if both of them access $x$ and
and one of these accesses is non-trivial. 

Two transactions \emph{concurrently contend on a base-object $x$} in an execution if both of them have pending events on $x$ in the same configuration and
and one of them is non-trivial. 
\begin{definition}\label{def:dap}
An STM implementation $M$ is \emph{disjoint-access parallel} (DAP) if, for
all executions $E$ of $M$, 
two processes executing $T_i$ and $T_j$ 
concurrently contend on the same base object in $E$ only if   
$T_i$ and $T_j$ are not disjoint-access.
\end{definition}
\begin{definition}\label{def:dap}
An STM implementation $M$ is \emph{strict disjoint-access parallel} (SDAP) if, for
all executions $E$ of $M$, 
two processes executing $T_i$  and  $T_j$ 
contend on the same base object in $E$ only if   
$T_i$ and $T_j$ have disjoint data sets.
\end{definition}
\begin{definition}
An STM implementation $M$ provides \emph{strict data partitioning} if
every t-object $X$ is associated with a set of base object $\beta(X)$
such that $\forall X\neq Y$, $\beta(X)\cap\beta(Y)=\emptyset$ and 
a transaction $T_i$ can access a base object in $\beta(X)$ only if
$X\in\Dset(T_i)$.
\end{definition}
Any STM that provides strict data partitioning is also disjoint-access
parallel (but not vice versa).

\paragraph{Invisible reads and single-version opacity.}

An STM implementation $M$ uses \emph{invisible reads} if no execution of a tm-read operation incurs 
a write on a base object. 



Let $H$ be a sequential history. We say that $T_i$ precedes $T_j$ in $H$ in the \emph{single-version} 
order, and we write $T_i\prec_{H}^{SV} T_j$ if 
there exists $X\in \textit{Wset}(T_i)\cap \textit{Rset}(T_j)$ such that $\textit{tryC}_i$ precedes 
$\textit{read}_j(X)$ in $H$.    

A sequential history $H$ is \emph{single-version opaque} if there exists a legal t-sequential history 
$H'$ such that:
\begin{enumerate}
\item $\bar H$ and $H'$ are equivalent;
\item $H'$ respects $\prec_H$ and $\prec_H^{DU}$ and
\item $H'$ respects $\prec_H^{SV}$.
\end{enumerate}

Now an STM implementation $M$ is \emph{single-version opaque} if for every execution $E$ of $M$,  
there exists an opaque single-version  linearization of $E|_{TM}$. 
Intuitively a single-version opaque implementation is opaque and maintains 
exactly one copy of a t-object's state at any given moment.

\section{Liveness and Progress}
\label{sec:liveness}
To describe the conditions under which a TM implementation does something useful,
we need to address two orthogonal dimensions. 
First, we need to give a \emph{tm-liveness} property~\cite{AS85} 
that determines the conditions under which an individual tm-operation must return.
Second, we need to give a \emph{progress condition} 
that describes the cases in which a transaction must commit. 
\subsection{TM-liveness properties}
A TM implementation $M$ is \emph{wait-free} if in every infinite execution of $M$, 
each tm-operation returns in a finite number of its own steps, regardless of the 
behavior of concurrent transactions.
In other words, a wait-free individual tm-operation (read, write, tryC or tryA) cannot 
be delayed because of a concurrent operation.
The property can be very beneficial if executions of transactions are subject 
to unpredictable delays or failures. 

In this paper, we do not assume failures: every operation is expected to take steps 
until it terminates. 
Moreover, we are interested in deriving inherent costs of implementing non-trivial concurrency in TM.  
Therefore, we assume a weaker default tm-liveness guarantee, that we call \emph{starvation-freedom}. 
A TM implementation $M$ is starvation-free in every infinite execution of $M$, each tm-operation 
eventually returns, 
assuming that no concurrent tm-operation stops indefinitely before returning.  
Starvation-freedom allows a tm-operation to be delayed only
by a concurrent tm-operation.  
\subsection{Progress conditions}
A progress condition determines the scenarios in which a transaction
is allowed to abort. Technically, unlike tm-liveness, 
a progress condition is a safety property~\cite{AS85}, since it can 
be violated in a finite execution.
The simplest non-trivial progress property we consider in this paper
is \emph{single-lock progressiveness} that says that a transaction can
only abort if there is a concurrent transaction.
Clearly, an opaque single-lock TM can be implemented using 
any mutual exclusion algorithm~\cite{Ray86} 
with one critical section per transaction. Stronger progress conditions
allow some transactions to progress concurrently in some scenarios implying \emph{positive concurrency}\footnote{This does not include transactions that guarantee \emph{obstruction-freedom}~\cite{OFTM}}. 

\emph{Progressiveness} allows an implementation to abort a
transaction only in case of a conflict. 
Transactions $T_i,T_j$ \emph{conflict} in a history $H$ on a t-object $X$ if
$T_i$ and $T_j$ are concurrent in $H$, 
$X\in\Dset(T_i)\cap\Dset(T_j)$,  and $X\in\Wset(T_i)\cup\Wset(T_j)$.
\begin{definition}
\label{def:prog}
A TM implementation $M$ is \emph{(weakly) progressive} if for every history $H$ of $M$ and every transaction $T_i \in \parts(H)$ that is forcefully aborted, there exists a prefix $H'$ of $H$ and a transaction $T_k \in \parts(H')$ that is live in $H'$, such that $T_k$ and $T_i$ conflict in $H'$.
\end{definition}
The \emph{strong progressiveness} property~\cite{GK09-progressiveness} additionally requires that
in case of a set of transactions conflict on a single t-object at least one transaction commits. 
The formal definition is inspired from \cite{tm-book}. 

Let $CObj_H(T_i)$ denote the set of t-objects over which transaction $T_i \in \parts(H)$ conflicts with any other transaction in history $H$
i.e. $X \in CObj_H(T_i)$ if there exists a transaction $T_k \in \parts(H)$, $k\neq i$, such that $T_i$ conflicts with $T_k$ on $X$ in $H$. Then, 
$CObj_H(Q)=\{CObj_H(T_i) |\forall T_i \in Q\}$, denotes the union of sets $CObj_H(T_i)$ for all transactions in $Q$.

Let $CTrans(H)$ denote the set of non-empty subsets of $\parts(H)$ such that a set $Q$ is in $CTrans(H)$ if no transaction in $Q$ conflicts with a transaction not in $Q$.
\begin{definition}
\label{def:sprog}
A TM implementation $M$ is \emph{strongly progressive} if there does not exist any history $H$ of $M$ in which for every set $Q \in CTrans(H)$ 
of transactions such that $|CObj_{H}(Q)| \leq 1$, every transaction in $Q$ is forcefully aborted in $H$.
\end{definition}
But since the goal of this paper is to derive a lower bound, 
we consider weak progressive implementations (from now on---simply \emph{progressive}). 
  
Let $C$ be any correctness property, i.e., any safety property on TM histories~\cite{AS85}.
The following property guarantees that no transaction is forcefully aborted if there is a chance of to commit the transaction and preserve correctness.     
\begin{definition}
\label{def:perm}
A TM implementation $M$ is \emph{permissive with respect to $C$} if
for every history $H$ of $M$ such that $H$ ends with a response
$r_k$ and 
replacing $r_k$ with some $r_k\neq A_k$ gives a history that satisfies $C$, 
we have $r_k \neq A_k$.
\end{definition}
Therefore, permissiveness does not allow a transaction to abort, unless committing 
it would violate the execution's correctness. 
In this paper, we consider TM implementations that are permissive with respect to opacity.  
Clearly, permissiveness with respect to opacity is strictly stronger than progressiveness: 
every permissive opaque implementation is also progressive opaque, but not vice versa. 

A transaction in a permissive opaque implementation can only be forcefully aborted if it  
tries to commit:
\begin{lemma}
\label{lem:perm-tryC}
Let a TM implementation $M$ be permissive with respect to opacity. 
If a transaction $T_i$ is forcefully aborted executing 
an operation $op_i$, then $op_i$ is $\TryC_i$.
\end{lemma}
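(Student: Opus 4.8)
The plan is to show that if a read or write operation of $T_i$ were to return $A_i$, we could construct an execution violating permissiveness: namely, an execution in which $T_i$ could safely continue (and eventually commit) without breaking opacity, yet the implementation aborts it. The key observation is that $\Read_i$ and $\Write_i$ operations, unlike $\TryC_i$, have not yet "committed" $T_i$ to any externally visible effect, so aborting them is never forced by correctness.

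**First I would** set up the contradiction: suppose $T_i$ is forcefully aborted while executing $op_i \in \{\Read_i(X), \Write_i(X,v)\}$, so there is an execution $E$ of $M$ whose history $H = E|_{TM}$ ends with $op_i$ returning $A_i$. By Definition~\ref{def:perm}, to derive a contradiction it suffices to exhibit a response $r_i \neq A_i$ such that replacing $A_i$ by $r_i$ at the end of $H$ yields a history satisfying opacity. Since we only consider canonic transactions (Lemma~\ref{lem:access}), if $op_i$ is a write then it is still in the "write phase" and no read of $T_i$ can follow it; if $op_i$ is a read then $T_i$ is still in its "read phase." In either case I would pick $r_i$ to be the natural non-abort response ($\ok_i$ for a write; for a read, the latest written value of $X$ consistent with a suitable serialization), obtaining a modified history $H'$.

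**The main step** is to verify that $H'$ is opaque. Here I would use the fact that $H$ itself, being a history of the opaque implementation $M$, is opaque — so there is a serialization $S$ of $\bar H$. The trick is that $\bar H$ and $\bar{H'}$ are closely related: in forming $\bar{H}$ we append $\TryC_i \cdot A_i$ after $op_i$ (since $T_i$ is live up to $op_i$, but then $op_i$ returns $A_i$, making $T_i$ complete — so actually I must be careful: in $H$, $T_i$ is \emph{aborted}, so $\bar H$ removes $T_i$'s writes; in $H'$, $T_i$ is live, so $\bar{H'}$ appends $\TryC_i \cdot A_i$). Either way, the completion/closure of both histories makes $T_i$ an aborted transaction with its writes stripped out, so $\bar H$ and $\bar{H'}$ are \emph{equivalent}. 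Since opacity of $H$ depends only on $\bar H$ (Definition~\ref{def:opacity}), and the real-time and deferred-update orders $\prec_{H'}$, $\prec_{H'}^{DU}$ coincide with $\prec_H$, $\prec_H^{DU}$ (adding the non-abort response of $op_i$ does not reorder any completed transaction, and $T_i$ acquires no new incoming/outgoing edges that are not already present for an aborted transaction), the same serialization $S$ witnesses opacity of $H'$. Hence $H'$ is opaque, so by permissiveness $op_i$ must not have returned $A_i$ — contradiction.

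**The hard part will be** the bookkeeping around $\bar{H}$ versus $\bar{H'}$ and confirming that the deferred-update order genuinely does not change: one must check that introducing a (non-aborting) $\Read_i(X)$ response does not create a new $\prec^{DU}$ edge that $S$ fails to respect, and symmetrically that a $\Write_i(X,v)$ response — which is erased anyway when $T_i$ is ultimately treated as aborted in the closure $\bar{H'}$ — contributes nothing. Once one sees that $\bar H \equiv \bar{H'}$ and that all the orders relevant to opacity are unchanged, the argument is essentially immediate; the subtlety is purely in handling the definitional closure operation $\bar{\cdot}$ correctly for live versus aborted transactions.
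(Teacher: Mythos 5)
Your proposal is correct and follows essentially the same route as the paper's proof: assume a $\Read_i$ or $\Write_i$ returns $A_i$, exhibit a non-abort response, and witness opacity of the modified history by reusing the serialization of the prefix, contradicting permissiveness. The only imprecision is the claim that $\bar H$ and $\bar{H'}$ are equivalent so that the \emph{same} serialization works verbatim: in the read case the serialization must be extended by appending $\Read_i(X)\cdot v$ at the end of $T_i$, with $v$ the value written by the last committed transaction preceding $T_i$ in the serialization (exactly as the paper does), since equivalence requires the serialization restricted to $T_i$ to contain the completed read.
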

\begin{proof}
Suppose, by contradiction, that there exists a history $H$ of $M$ such that 
some $op_i\in\{\Read_i,\Write_i\}$ executed within a transaction $T_i$ returns $A_i$.
Let $H_0$ be the shortest prefix of $H$ that ends just before $op_i$ returns.
By definition, $H_0$ is opaque and any history $H_0\cdot r_i$ where $r_i\neq A_i$ is not opaque.
Let $H_0'$ be the serialization of $H_0$.

If  $op_i$ is a write, then $H_0\cdot\ok_i$ is also opaque - no write operation of the incomplete transaction $T_i$ appears in $H_0'$ and, thus, $H_0'$ is also a serialization of $H_0\cdot\ok_i$. 

If  $op_i$ is a $\Read(X)$ for some t-object $X$, then we can construct a serialization of $H_0\cdot v$ where $v$ is the value of $X$ written by the last committed transaction in $H_0'$ preceding $T_i$ or the initial value of $X$ if there is no such transaction. 
It is easy to see that $H_0"$ obtained from $H_0'$ by adding $\Read(X)\cdot v$ at the end of $T_i$ is a serialization of 
$H_0\cdot\Read(X)$.
In both cases, there exists a non-$A_i$ response $r_i$ to $op_i$ that preserves opacity of  $H_0\cdot r_i$, and, thus,
the only operation that can be forcefully aborted in an execution of $M$ is $\TryC$.  
\end{proof}
Obviously, Lemma~\ref{lem:perm-tryC} implies that there does not exist a permissive single-version TM implementation.

\paragraph{Multi-version permissiveness.}

A relaxation of permissiveness, called~\emph{multi-version permissiveness} (or mv-permissiveness)~\cite{PFK10-mv} 
says that a transaction $T_i$ can only abort if $T_i$ is updating
and there is a concurrent conflicting updating transaction $T_j$ i.e. a read-only transaction cannot be aborted.  
\begin{lemma}
\label{lem:perm-sv}
There does not exist a mv-permissive TM implementation $M$ that guarantees (wait-freedom)starvation-freedom of individual tm-operations and single-version opacity.
\end{lemma}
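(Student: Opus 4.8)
The plan is to derive a contradiction by exhibiting an execution in which mv-permissiveness, single-version opacity, and starvation-freedom cannot all hold. The key observation is that mv-permissiveness forbids aborting a read-only transaction, while single-version opacity forces the implementation to keep exactly one copy of each t-object. These two requirements clash as soon as a read-only transaction tries to read a value that has already been overwritten by a committed transaction. So I would build a scenario where a slow read-only transaction $T_r$ must, to remain consistent, return an \emph{old} value of some t-object $X$, but the only surviving version of $X$ is the \emph{new} value written by a committed updating transaction.

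\medskip

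\noindent First I would set up two t-objects $X$ and $Y$, both initialized to $0$, and consider the following construction. Start a read-only transaction $T_r$ with $\Dset(T_r)=\{X,Y\}$ and let it perform $\Read_r(X)$, which by Lemma~\ref{lem:perm-tryC} (mv-permissiveness implies that read-only transactions are never forcefully aborted, strengthening that lemma) must return a value, say $0$. Now, before $T_r$ reads $Y$, interleave a complete updating transaction $T_1$ that writes a new value $1$ to $Y$ and commits; by mv-permissiveness nothing forces $T_1$ to abort (it has no concurrent conflicting updating transaction when run in isolation after $T_r$'s single read), so $T_1$ commits. Then resume $T_r$ and let it execute $\Read_r(Y)$. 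For opacity, $T_r$ must be serialized, and since it already observed $X=0$ it must be placed before any transaction that would make $X\neq 0$; in particular, combined with a suitable companion transaction that writes to $X$, I can force $T_r$ to be serialized \emph{before} $T_1$, so $\Read_r(Y)$ must return $0$. But single-version opacity means the implementation stores only one copy of $Y$'s state, and that copy was overwritten to $1$ by committed $T_1$ — there is no base-object state from which $T_r$ can recover the value $0$. More precisely, I would argue via an indistinguishability argument: the configuration after $T_1$ commits is, from the point of view of the base objects associated with $Y$, identical to a configuration reached by an execution with no $T_r$ at all, in which any read of $Y$ must return $1$; hence $\Read_r(Y)$ returns $1$, contradicting the serialization constraint.

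\medskip

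\noindent The role of starvation-freedom is to guarantee that each of $\Read_r(X)$, $\Read_r(Y)$, and $\TryC_1$ actually returns a response rather than hanging — since there is no failure model here and the operations are run with bounded concurrency, starvation-freedom forces termination, so the offending response $\Read_r(Y)$ genuinely occurs and must be either $0$ (violating single-version opacity's bookkeeping) or $1$ (violating opacity's serialization constraint) or $A_r$ (violating mv-permissiveness). All three options are impossible, giving the contradiction. I would formalize the "only one copy" intuition by invoking the single-version order $\prec^{SV}_H$: since $T_1$ commits before $\Read_r(Y)$ is invoked and $Y\in\Wset(T_1)\cap\Rset(T_r)$, single-version opacity requires $T_1\prec^{SV}_{H'} T_r$ in the serialization $H'$, forcing $\Read_r(Y)$ to return $T_1$'s value $1$; but the companion write to $X$ together with $\Read_r(X)=0$ forces $T_r$ before that companion transaction, and I will arrange the companion transaction to be $T_1$ itself (so $T_1$ writes both $Y$ and, say, does nothing to $X$ — I should instead use a second updating transaction $T_2$ writing to $X$ concurrently, or simply reuse the deferred-update order on $X$) so that the two ordering constraints are genuinely incompatible.

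\medskip

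\noindent \textbf{The main obstacle} will be pinning down the companion-transaction structure so that the two serialization constraints on $T_r$ are provably contradictory while simultaneously keeping every updating transaction non-abortable under mv-permissiveness (which only excuses aborts when there is a concurrent \emph{conflicting updating} transaction). I expect the cleanest route is to have $T_r$ read $X$ then $Y$, insert committed $T_1$ writing $Y$ between these reads, and separately insert a committed $T_2$ that writes $X$ after $T_r$'s read of $X$ but such that $T_r\prec^{DU}_H T_2$ forces $T_r$ before $T_2$ while a chain $T_2\prec T_1$ (real-time) forces $T_2$ before $T_1$; combined with the $\prec^{SV}$ constraint $T_1\prec^{SV}T_r$ this closes a cycle $T_r\to T_2\to T_1\to T_r$, contradicting the existence of the legal t-sequential serialization $H'$. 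Verifying that each of $T_1$, $T_2$ runs without a concurrent conflicting updating transaction — so mv-permissiveness indeed forces them to commit — is the routine-but-delicate bookkeeping step, and I would handle it by scheduling $T_1$ and $T_2$ sequentially (each uncontended among updating transactions) with only the read-only $T_r$ overlapping them.
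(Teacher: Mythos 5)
Your proposal is correct and follows essentially the same route as the paper: a read-only transaction reads $X$, a committed updating transaction then overwrites the read set and writes $Y$, and the clash between the deferred-update order (forcing the reader before the writer) and the single-version order (forcing it after) leaves no legal serialization, while mv-permissiveness forbids aborting the read-only transaction. The paper streamlines your "companion transaction" concern by letting a single transaction $T_2$ write both $X$ and $Y$, which yields the contradiction directly from $T_1\prec^{DU}T_2$ and $T_2\prec^{SV}T_1$ without needing a three-transaction cycle.
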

\begin{proof}
By contradiction, suppose that there exists a single-version opaque mv-permissive implementation $M$.
Consider an execution of $M$ in which transaction $T_1$ sequentially reads $X$, then transaction $T_2$ writes to $X$ and $Y$ and commits. Such an execution exists, since none of these operations can be forcefully aborted in a mv-permissive implementation. Now we extend this history with $T_1$ reading $Y$.
There is no way to serialize $T_1$ and $T_2$ preserving single-version opacity, unless $\Read_1(Y)$ aborts.
But a mv-permissive TM implementation does not allow a read-only transaction to return abort--- a contradiction.       \end{proof}
If we relax our tm-liveness property and allow a tm-operation to be delayed 
by a concurrent conflicting transaction, then a single-version mv-permissive implementation 
is possible~\cite{AH11-perm}.  

\paragraph{Probabilistic permissiveness.}
Intuitively, a probabilistic permissive TM ensures the property of Definition~\ref{def:perm} \emph{with a positive probability}.
It is conjectured in~\cite{GHS08-permissiveness} that probabilistically
permissive (with respect to opacity) implementations can be considerably cheaper than deterministic ones. 
This is achieved by choosing the response to a tm-operation $op_k$ by sampling uniformly at random from the set of possible return values (including $A_k$).  
\begin{definition}
\label{def:pperm}
A TM implementation $M$ is \emph{permissive with respect to $C$} if
for every history $H$ of $M$ such that $H$ ends with a response
$r_k$ and 
replacing $r_k$ with some $r_k\neq A_k$ gives a history that satisfies $C$, 
we have $r_k \neq A_k$ with positive probability.
\end{definition}
 
\section{RAW/AWAR complexity}
\label{sec:raw}
Modern CPU architectures perform reordering of memory references for better performance. Hence, memory barriers/fences are needed to enforce ordering in synchronization primitives whose correct operation depends on
ordered memory references. 
Attiya et al.~\cite{AGK11-popl} formalized the RAW/AWAR class of synchronization patterns and showed that a wide class of concurrent algorithm implementations must involve these expensive patterns. We recall the definitions below.

Let $\pi$ be an execution fragment and let $\pi_i$ denote the $i$-th
event in $\pi$ ($i=0,\dots,|\pi|-1$).   
We say that process $p$ performs a \emph{RAW} (read-after-write) in $\pi$ if  
$\exists i,j; 0 \leq i < j < |\pi|$
such that 
\begin{itemize}
\item $\pi_i$ is a write to a base object $x$ by process $p$,
\item $\pi_j$ is a read of a base object $y\neq x$ by process $p$ and 
\item there is no $\pi_k$ such that $i<k<j$ and $\pi_k$ is a write to
  $y$ by $p$.
\end{itemize}
We say that two RAWs by process $p$ \emph{overlap} in an execution $E$ with the read event of the
first RAW occurs after the write event of the second RAW. 
A \emph{multi-RAW} consists of series of writes to a set of base
objects followed by a series of reads from different base objects. 

We say a process $p$ performs an \emph{AWAR} (atomic-write-after-read)
in $\pi$ if $\exists i,j, 0 \leq i < j < |\pi|$ such that
\begin{itemize}
\item $\pi_i$ is a read of a base-object $x$ by process $p$,
\item $\pi_j$ is a write to a base-object $y$ by process $p$ and
\item $\pi_i$ and $\pi_j$ belong to the same atomic section.
\end{itemize}
Examples of AWAR are $\CAS$ and $\mCAS$.

\section{RAW/AWAR cost of permissive STMs}
\label{sec:perm} 
In this section, we show that an execution of a transaction 
in a permissive STM implementation may require to perform 
at least one RAW/AWAR pattern \emph{per} tm-read.
 
Let $M$ be a permissive, opaque TM implementation.
Consider an execution $E$ of $M$ with a history $H$ consisting of 
transactions $T_1$, $T_2$, $T_3$ as shown in Figure~\ref{fig:H}:
$T_3$ performs a read of $X_1$, then $T_2$ performs a write on $X_1$ and commits, 
and finally $T_1$ performs a series of reads from objects $X_1,\ldots,X_m$.
Here, $R_k(X)$, $W_k(X,v)$ denote complete 
executions of $\Read_k(X)$ and $\Write_k(X,v)$ respectively.
Since the implementation is permissive, no transaction can be
forcefully aborted in $E$, and 
the only valid serialization of this execution is $T_3$, $T_2$, $T_1$.  
Note also that the execution generates a sequential history: 
each invocation of a tm-operation is immediately followed by a matching response in $H$.
Thus, since we assume starvation-freedom as a liveness property, such an execution exists. 
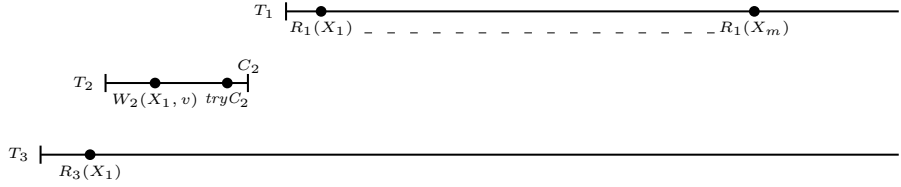
\begin{figure}[t]
\begin{center}
\scalebox{0.95}{
\begin{tikzpicture}

\node (r1) at (2,0) [fill,circle,inner sep=1.5pt] {};
\node (rm) at (8,0) [fill,circle,inner sep=1.5pt] {};
\node (w1) at (-0.3,-1) [fill,circle,inner sep=1.5pt] {};
\node (c2) at (0.7,-1) [fill,circle,inner sep=1.5pt] {};

\node (r3) at (-1.2,-2) [fill,circle,inner sep=1.5pt] {};

\draw (r1) node [below] {\tiny {$R_1(X_1)$}}
   (c2) node [below] {\tiny {$\TryC_2$}}
   (rm) node [below] {\tiny {$R_1(X_{m})$}};
   
\draw (w1) node [below] {\tiny {$W_2(X_1,v)$}};
   
\draw 
(r3) node [below] {\tiny {$R_3(X_1)$}};
\begin{scope}
\draw[loosely dashed] (2.6,-0.3) to (7.5,-0.3);
\end{scope}
\begin{scope}   
\draw [|-,thick] (1.5,0) node[left] {{\tiny $T_1$}} to (10,0);
\end{scope}
\begin{scope}
\draw [|-|,thick] (-1,-1) node[left] {{\tiny $T_2$}} to (1,-1) node[above] {{\tiny $C_2$}} ; 
\end{scope}  
\begin{scope}
\draw [|-,thick] (-1.9,-2) node[left] {{\tiny $T_3$}} to (10,-2) ; 
\end{scope} 
\end{tikzpicture}
}
\end{center}
\caption{\small {Execution $E$ of a permissive, opaque STM: $T_2$ and $T_3$ force $T_1$ to perform 
a RAW/AWAR in each $R_1(X_k)$, $2 \leq k \leq m$}}
\label{fig:H}
\end{figure}

Imagine that we modify the execution $E$ as follows. Immediately after $R_1(X_k)$ executed 
by $T_1$ we add $W_3(X,v)$, and $\TryC_3$ executed by $T_3$
(let $TC_3(X_k)$ denote the complete execution of $W_3(X_k,v)$ followed by $tryC_3$).
Obviously, $TC_3(X_k)$ must return abort: neither $T_3$ can be serialized before $T_1$ nor
$T_1$ can be serialized before $T_3$.
On the other hand if $TC_3(X_k)$ takes place just before $R_1(X_k)$, then 
$TC_3(X_k)$ must return commit but $R_1(X_k)$ must return the value written by $T_3$.
In other words, $R_1(X_k)$ and $TC_3(X_k)$ are \emph{strongly non-commutative}~\cite{AGK11-popl}:
both of them see the difference when ordered differently.    
As a result, intuitively, $R_1(X_k)$ 
needs to perform a RAW or AWAR to make sure that the order
of these two ``conflicting'' operations is properly maintained. A formal proof follows.
\begin{theorem}
\label{th:perm}
Let $M$ be a permissive opaque STM implementation.
Then, for any $m\in\Nat$, $M$ has an execution in which some transaction
performs $m$ tm-reads such that the execution of each tm-read contains 
at least one RAW or AWAR.
\end{theorem}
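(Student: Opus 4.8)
The plan is to build the execution $E$ described in the paragraph preceding the theorem statement and then, for each $k$ with $2 \le k \le m$, argue that the fragment of $T_1$ executing $\Read_1(X_k)$ must contain a RAW or an AWAR. The skeleton of the argument is the strong non-commutativity observation already sketched: the operation $\Read_1(X_k)$ and the operation $TC_3(X_k)$ (a write of $X_k$ by $T_3$ followed by $\TryC_3$) are strongly non-commutative at the point of $E$ immediately before $\Read_1(X_k)$. Indeed, if $TC_3(X_k)$ is scheduled just before $\Read_1(X_k)$, then by permissiveness $\TryC_3$ must commit (no serialization obstacle yet, since $T_1$ has only read $X_1,\dots,X_{k-1}$, none of which is in $T_3$'s write set), and then by opacity $\Read_1(X_k)$ must return the value written by $T_3$; whereas if $TC_3(X_k)$ is scheduled just after $\Read_1(X_k)$, then $\TryC_3$ cannot commit — $T_3$ cannot be serialized before $T_1$ (it would have to precede $T_2$, contradicting $T_2 \prec_H^{DU} T_3$-style constraints, and in any case $\Read_1(X_k)$ read the old value) nor after $T_1$ (real-time order forces $T_3$ before $T_1$ since $T_3$ precedes $T_1$'s first event) — so $TC_3(X_k)$ returns $A_3$. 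Hence both operations ``see the difference'' when reordered.

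Next I would invoke the machinery of Attiya, Guerraoui, Kuznetsov~\cite{AGK11-popl}: if a process-$p$ operation $op$ and a concurrent operation $op'$ are strongly non-commutative, and the implementation is correct (here, opaque and permissive, which pins down the required responses as above), then the execution of $op$ must contain a RAW or an AWAR. Concretely, one argues by contradiction: if $\Read_1(X_k)$ contains neither a RAW nor an AWAR, then all of its reads of base objects precede all of its writes of base objects, and moreover one can commute $TC_3(X_k)$ (run entirely by the idle process executing $T_3$, which writes only to base objects in $\beta(X_k)$-like locations touched by $X_k$) around the base-object events of $\Read_1(X_k)$ without either operation noticing — either $TC_3(X_k)$'s writes are invisible to the reads of $\Read_1(X_k)$ that precede them, or $\Read_1(X_k)$'s writes happen after $TC_3(X_k)$ has already read what it needs. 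This yields two executions that are indistinguishable to the relevant processes but demand contradictory responses, contradicting correctness. I would state this as a lemma (``strongly non-commutative operations force RAW/AWAR'') quoted from~\cite{AGK11-popl} and apply it $m-1$ times.

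Finally I would assemble the pieces: the execution $E$ has $T_1$ performing $m$ tm-reads $\Read_1(X_1),\dots,\Read_1(X_m)$, and for each $k \in \{2,\dots,m\}$ the fragment of $E$ comprising $\Read_1(X_k)$ contains a RAW or AWAR; a trivial relabeling (or simply prepending one more read $X_0$) upgrades ``$m-1$ of the reads'' to ``each of $m$ reads'', matching the statement. I expect the main obstacle to be the careful justification that $TC_3(X_k)$ must \emph{commit} when placed before $\Read_1(X_k)$ — this requires checking that at that prefix of $E$ there genuinely is an opaque serialization with $T_3$ committed (namely $T_3, T_2, T_1$ with $T_1$ having read only $X_1,\dots,X_{k-1}$, which are disjoint from $\Wset(T_3)=\{X_k\}$), so that permissiveness indeed forbids $\TryC_3$ from aborting, together with Lemma~\ref{lem:perm-tryC} to rule out abort in the read/write operations themselves. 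The commutativity bookkeeping in the RAW/AWAR step is the other delicate point, but it is essentially the template argument of~\cite{AGK11-popl} and can be cited rather than redone.
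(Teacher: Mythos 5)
Your proposal matches the paper's proof essentially step for step: the same three-transaction execution $E$, the same strong-non-commutativity observation between $\Read_1(X_k)$ and $TC_3(X_k)$ (commit and return the new value in one order, abort in the other, forced by permissiveness plus the unique serialization $T_3,T_2,T_1$), and the same decomposition-and-indistinguishability argument in the style of Attiya et al.\ to extract a RAW or AWAR from each of $R_1(X_2),\dots,R_1(X_m)$ --- the paper simply spells out that template (via $\pi=\pi_s\cdot\pi_w\cdot\pi_f$) rather than citing it. Your explicit fix of the off-by-one (the construction directly certifies only $m-1$ of the $m$ reads) is a harmless refinement the paper glosses over.
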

\begin{proof}
We consider $R_1(X_k)$, $2 \leq k \leq m$ in execution $E$.

Imagine a modification $E'$ of $E$, in which $T_3$ performs $W_3(X_k)$ immediately after $R_1(X_k)$
and then tries to commit.
A serialization of $H'=E'|_{TM}$ should obey
$T_3\prec_{H'}^{DU} T_2$ and $T_2\prec_{H'} T_1$.
The execution of $R_1(X_k)$ does not modify base objects, hence, $T_3$
does not observe $R_1(X_k)$ in $E'$.
Since $M$ is permissive, $T_3$ must commit in $E'$. 
But since $T_1$ performs $R_1(X_k)$ before $T_3$ commits and $T_3$ 
updates $X_k$, we also have $T_1\prec_{H'}^{DU} T_3$.
Thus, $T_3$ cannot precede $T_1$ in any serialization---contradiction. 
Consequently, each $R_1(X_k)$ must perform a write to a base object.

Let $\pi$ be a fragment of $E$ that represents the complete execution of $R_1(X_k)$. 
Clearly, $\pi$ contains a write to a base object. Let $\pi_j$ be 
the first write to a base object in $\pi$ and $\pi_w$, the shortest fragment of $\pi$ 
that contains the atomic section to which $\pi_j$ belongs, else if $\pi_j$ 
is not part of an atomic section, $\pi_w=\pi_j$. Thus, $\pi$ can be represented 
as $\pi_s\cdot \pi_w\cdot \pi_f$.
 
Suppose that $\pi$ does not contain a RAW or AWAR. 
Since $\pi_w$ does not contain an AWAR, there are no read events 
in $\pi_w$ that precede $\pi_j$. Thus, $\pi_j$ is the first base object event in $\pi_w$. 
Consider the execution fragment $\pi_s\cdot \rho$, where $\rho$ is the complete execution 
of $TC_3(X_k)$ by $T_3$. Such an execution exists since $\pi_s$ 
does not perform any base object write, hence, $\pi_s \cdot\rho$ 
is indistinguishable to $T_3$ from $\rho$. 

Since, by our assumption, $\pi_w\cdot \pi_f$ contains no RAW, 
any read performed in $\pi_w\cdot \pi_f$ 
can only be applied to base objects previously written in $\pi_w\cdot \pi_f$. 
Thus, there exists an execution $\pi_s \cdot\rho \cdot \pi_w\cdot \pi_f$ 
that is indistinguishable to $T_1$ from $\pi$.
In $\pi_s \cdot\rho \cdot \pi_w\cdot \pi_f$, $T_3$ commits (as in $\rho$) but 
$T_1$ ignores the value written by $T_3$ to $X_k$.  
But $T_3$, $T_2$, $T_1$ is the only valid serialization for $E|_{TM}$---contradiction.
Thus, each $R_1(X_k)$, $2\leq k \leq m$ must contain a RAW/AWAR.

Note that since all tm-reads of $T_1$ are executed sequentially, all these RAW/AWAR patterns are
pairwise non-overlapping. 
\end{proof}

\section{RAW/AWAR cost and protected data in progressive STMs}
\label{sec:prog}
%
%
%
In this section, we first describe our progressive STM implementations 
that perform at most one RAW/AWAR per transaction.
Then we present a lower bound on the amount of data to be protected by a transaction in
a progressive STM.
\subsection{Constant RAW/AWAR implementations for progressive STM}
\label{sec:algs}
We start with showing that even a \emph{single-lock} progressive STM cannot avoid 
performing one RAW/AWARs per transaction in some executions.
\begin{theorem}
\label{th:sl}
Let $M$ be a single-lock progressive opaque STM implementation.
Then every execution of $M$ in which an uncontended transaction performs at least one 
read and at least one write contains a RAW/AWAR pattern.     
\end{theorem}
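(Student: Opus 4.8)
## Proof Proposal

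The plan is to adapt the strong non-commutativity argument from Theorem~\ref{th:perm} to the single-lock setting, but with a twist: here the transaction $T$ is \emph{uncontended}, so there is no concurrent transaction to ``blame'' during $T$'s own interval. Instead, I would exploit the progressiveness guarantee by constructing a sibling execution in which a conflicting transaction is inserted \emph{at a precise point} in $T$'s interval, and argue that $T$ must already have written to a base object at some earlier point in order for the implementation to behave consistently in both executions.

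Concretely, let $E$ be an execution in which an uncontended transaction $T$ performs a read $R(X)$ (wlog the first read) and later a write $W(Y,v)$; since transactions are canonic, all reads precede all writes. Because $T$ is uncontended and the STM is progressive, $T$ must commit in $E$, so $E$ contains $R(X)$, then $W(Y,v)$, then $\TryC$ returning $C$. First I would argue that $T$ must perform a non-trivial base-object event at some point during its interval: suppose not. Then consider the execution $E'$ obtained by inserting, immediately before $T$'s first event, a complete transaction $T'$ that writes a fresh value to $X$ and commits --- progressiveness (single-lock suffices here, since $T'$ is uncontended) forces $T'$ to commit. Now splice $E'$ so that $T'$ runs, then $T$'s interval runs exactly as in $E$. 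Since $T$ never writes to any base object, $T$ is indistinguishable to the environment from its solo run, so $T$ still commits and $R(X)$ still returns the \emph{old} value of $X$ --- but the only legal serialization has $T'$ before $T$ (real-time order forces it) with $T' \prec^{X}$ nothing consistent, giving a contradiction with opacity. Hence $T$ writes to some base object $x$.

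The heart of the argument is then to locate a RAW or AWAR. Let $\pi$ be $T$'s interval in $E$, let $\pi_j$ be the first non-trivial base-object event of $T$, and let $\pi_w$ be the shortest fragment containing the atomic section of $\pi_j$ (or $\pi_j$ alone). Write $\pi = \pi_s \cdot \pi_w \cdot \pi_f$. Assume for contradiction $\pi$ contains no RAW and no AWAR. Since $\pi_w$ contains no AWAR, $\pi_j$ is the first event of $\pi_w$. I would now insert a conflicting committed transaction $T''$ --- writing to $X$ and committing --- in the gap \emph{between} $\pi_s$ and $\pi_w$ (so, after $T$ has read $X$ but before $T$ has modified any base object, since $\pi_s$ is free of non-trivial events). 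The fragment $\pi_s \cdot \rho$, where $\rho$ executes $T''$, is a legal execution because $\pi_s$ writes nothing, so $\rho$ is indistinguishable to $T''$ from its solo run; progressiveness forces $T''$ to commit. Because $\pi_w \cdot \pi_f$ contains no RAW, every base-object read in $\pi_w \cdot \pi_f$ reads a location previously written within $\pi_w \cdot \pi_f$, so $T$ cannot detect $\rho$; thus $\pi_s \cdot \rho \cdot \pi_w \cdot \pi_f$ is indistinguishable to $T$ from $\pi$, and $T$ still commits with $R(X)$ returning $X$'s \emph{old} value. But in this execution $T'' \prec^{DU} T$ is forced (the read of $X$ by $T$ precedes $\TryC_{T''}$, and $T$ updates nothing relevant) while simultaneously $T$'s read ignored $T''$'s write --- so no legal serialization exists, contradicting opacity.

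The main obstacle, and the place where I would spend the most care, is the serialization bookkeeping in the last step: I need to make sure the inserted transaction $T''$ genuinely cannot be ordered consistently, which requires checking both that $T''$ cannot go before $T$ (because $R_T(X)$ read a stale value, so $T'' \prec_{H}^{X}$ fails while $T \prec_{H}^{DU} T''$ holds) and that $T''$ cannot go after $T$ (real-time / deferred-update constraints, depending on exactly where the completing events land). A clean way to force the contradiction is to also let $T''$ read or write a second object that $T$ writes, pinning down the real-time relationship; alternatively, one inserts a third transaction as in Theorem~\ref{th:perm} to close the cycle. I would also double-check the ``$T$ must write a base object'' step handles the degenerate case where $T$'s read itself is where the write happens --- but that write is exactly the event $\pi_j$, and the argument proceeds unchanged, since the point is only that \emph{some} non-trivial event occurs within $\pi$.
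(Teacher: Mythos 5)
Your core argument is the same as the paper's: decompose $T$'s interval as $\pi_s\cdot\pi_w\cdot\pi_f$ around the first non-trivial event, use ``no AWAR'' to put $\pi_j$ first in $\pi_w$, insert a conflicting committed transaction between $\pi_s$ and $\pi_w$, and use ``no RAW'' to argue $T$ cannot detect it. Two points, however, need attention. First, your argument that $T$ must perform some non-trivial base-object event does not work as written. You insert $T'$ (writing $X$) \emph{before} $T$ and then claim $T$'s interval ``runs exactly as in $E$'' because $T$ writes nothing. But the absence of base-object writes by $T$ only makes $T$ invisible to \emph{others}; it does not make $T$ insensitive to $T'$. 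Since $T'$ modifies base objects and $T$ reads base objects, replaying $T$'s event sequence after $T'$ is not a valid execution, and $R(X)$ need not return the old value. The correct (and much shorter) argument is the paper's: $T$ commits a write to $Y$, so if it left no trace in the base objects, a subsequent uncontended transaction reading $Y$ would return the initial value, violating opacity.

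Second, your main contradiction is incomplete with $T''$ only writing $X$: the history in which $T$ reads the old $X$, then $T''$ writes $X$ and commits, then $T$ writes $Y$ and commits is perfectly serializable as $T, T''$ (and indeed $T\prec_H^{DU}T''$ forces exactly that order, consistently). You flag this yourself and propose the right repair --- have $T''$ also \emph{read} an object that $T$ writes --- which is precisely what the paper does: its $T_2$ performs $\Read_2(Y)$, $\Write_2(X,1)$, $\TryC_2$, so that each committed transaction misses the other's write and no serialization order is legal. With that repair committed to, and the first step replaced by the one-line argument above, your proof coincides with the paper's.
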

\begin{proof}
Consider an execution $\pi$ of $M$ in which an uncontended transaction $T_1$
performs (among other events) $\Read_1(X)$, $\Write_1(Y,v)$ and $\TryC_1()$.
Since $M$ is single-lock progressive, $T_1$ must commit in $\pi$. 
Clearly $\pi$ must contain a write to a base object. 
Otherwise a subsequent transaction reading $Y$ would return the
initial value of $Y$ instead of the value written by $T_1$.
 
Let $\pi_j$ be the first write to a base object in $\pi$ and let
$\pi_w$ denote the shortest fragment of $\pi$ that contains the atomic
section to which $\pi_j$ belongs ($\pi_w=\pi_j$ if $\pi_j$ is not
part of an atomic section). 
Thus, $\pi$ can be represented as $\pi_s\cdot\pi_w\cdot\pi_f$.

Now suppose, by contradiction, that $\pi$ contains neither RAW nor
AWAR patterns. 
Since $\pi_w$ contains no AWAR, there are no read events in $\pi_w$
that precede $\pi_j$. 
Since $\pi_j$ is the first write event in $\pi$, 
it follows that $\pi_j$ is the first base-object event in $\pi_w$.

Since $\pi_s$ contains no writes, the states of base objects in the
initial configuration and in the configuration after $\pi_s$ is performed are the same. 
Consider an execution $\pi_s\cdot\rho$ where in $\rho$, 
a transaction $T_2$ performs $\Read_2(Y)$, $\Write_2(X,1)$,
$\TryC_2()$ and commits.
Such an execution exists, since $\rho$ is indistinguishable to $T_2$
from an execution in which $T_2$ is uncontended and thus $T_2$ cannot
be forcefully aborted in $\pi_s\cdot\rho$. 

Since $\pi_w\cdot\pi_f$ contains no RAWs, every read performed in
$\pi_w\cdot\pi_f$ is applied to base objects which were previously
written in  $\pi_w\cdot\pi_f$. Thus, there exists an execution
$\pi_s\cdot\rho\cdot\pi_w\cdot\pi_f$,
such that $T_1$ cannot distinguish  $\pi_s\cdot\pi_w\cdot\pi_f$ 
and $\pi_s\cdot\rho\cdot\pi_w\cdot\pi_f$.
Hence, $T_1$  commits in  $\pi_s\cdot\rho\cdot\pi_w\cdot\pi_f$.

But both $T_1$ reads the initial value of $X$ and $T_2$ reads the initial value of $Y$ in
$\pi_s\cdot\rho\cdot\pi_w\cdot\pi_f$, and thus $T_1$ and $T_2$ cannot be both committed
(at least one of the committed transactions must read the value
written by the other)---a contradiction.

The proof is analogous in the case when an execution of $T_1$ 
extends any execution $\pi_0$ that contains only complete transactions.
\end{proof}
%
Since every progressive or permissive STM implementation is also single-lock progressive, 
the RAW/AWAR lower bound of Theorem~\ref{th:sl} also holds
for progressive and permissive STM implementations.
The lower bound is actually tight, and we sketch two progressive opaque
implementations.
Both implementations are \emph{strict data-partitioned}~\cite{tm-book} 
(split the set of base objects used into disjoint subsets, 
each subset storing information of only a single t-object) and 
\emph{single-version} (maintain exactly one copy of a t-object's state at a time).
They also use \emph{invisible reads}, i.e., no execution of a tm-read operation 
performs a write to a base object.

Our first implementation employs a \emph{mCAS} primitive\footnote{
In $\mCAS(V,OV,NV)$~\cite{AHCAS}, executed atomically, a process 
reads an array $V$ of $m$ objects $V$, and if for each $i$, $V[i]=OV[i]$, 
it replaces each $V[i]$ with $NV[i]$ and returns $\true$,
otherwise it returns $\false$ and leaves the objects unchanged.}
and works, in brief, as follows.
Every t-object $X_i$ is associated 
with a distinct base object $v_i$ that stores the ``most recent'' 
value of $X_i$ together with the \emph{id} of the transaction that was the last to update $X_i$. 
Each time a transaction $T_k$ performs a read of a t-object $X_i$, 
it reads $v_i$, adds $X_i$ to its read set and checks if the t-objects in the current read set of $T_k$ 
have not been updated since $T_k$ has read them.
If this is not the case the transaction is forcefully aborted.
Otherwise, $T_k$ returns the value read in $v_i$.    
Each time $T_k$ performs a write to a t-object $X_i$, it adds $X_i$ to its  write set and returns \emph{ok}. 

For every updating transaction $T_k$, $\TryC_k()$ invokes 
the \emph{mCAS} primitive over $\Dset(T_k)$. 
If the \emph{mCAS} returns $\true$, $\TryC_k()$ returns $C_k$,
otherwise it returns $A_k$. Clearly, if $T_k$ is forcefully aborted,
then the execution of $\mCAS$ involved no AWAR (no write to a base object 
took place). Read-only transactions simply returns $C_k$.
Consequently, the implementation incurs a single AWAR per updating committed transaction.
\begin{theorem}
\label{th:kcas}
There exists a progressive opaque STM
implementation with wait-free operations that employs
exactly one AWAR per transaction.
Moreover, no AWARs are performed in read-only or aborted transactions. 
\end{theorem}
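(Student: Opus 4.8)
The plan is to verify that the informal description of the $\mCAS$-based implementation given just before the theorem statement indeed satisfies all four claims: (1) opacity, (2) progressiveness, (3) wait-freedom of all operations, and (4) exactly one AWAR per transaction, with no AWARs in read-only or aborted transactions. Since the algorithm has already been sketched, the bulk of the work is assembling these correctness arguments rigorously, so I would first pin down the pseudocode precisely: each t-object $X_i$ has a base object $v_i$ holding a pair $(\val, k)$ where $k$ is the id of the last transaction to update $X_i$; a transaction $T_k$ keeps local sets $\Rset(T_k)$ (storing, per object, the value and the version/id observed) and $\Wset(T_k)$; a tm-read of $X_i$ reads $v_i$, records $(X_i, v_i)$ in $\Rset(T_k)$, then re-reads every $v_j$ for $X_j \in \Rset(T_k)$ and aborts if any has changed id, otherwise returns the value; a tm-write just buffers into $\Wset(T_k)$; $\TryC_k$ for an updating transaction executes a single $\mCAS$ over the base objects in $\Dset(T_k)$ (read-set objects expected unchanged, write-set objects replaced with new value paired with $k$), returning $C_k$ iff the $\mCAS$ succeeds, and $\TryC_k$ for a read-only transaction returns $C_k$ immediately.

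Next I would establish the easy structural claims. Wait-freedom is immediate: every tm-read performs a bounded number of base-object reads (one plus $|\Rset(T_k)|$, which is bounded since atomic sections are bounded and histories are finite in the relevant sense — here we only need each operation to terminate in finitely many of its own steps, which it does because no operation ever waits on a shared variable), tm-write does only local work, and $\TryC_k$ does one $\mCAS$ (a single bounded atomic section) plus local work. The AWAR accounting is equally direct: the only base-object writes in the entire implementation occur inside the $\mCAS$ in $\TryC_k$ of an updating transaction; tm-reads use invisible reads and tm-writes are purely local, so read-only transactions and any transaction that does not reach a successful $\mCAS$ perform no base-object write at all, hence no AWAR; and when the $\mCAS$ succeeds it constitutes exactly one AWAR (an atomic read-then-write of the $v_i$'s) — note that a \emph{failed} $\mCAS$ writes nothing and so is not an AWAR, which is why forcefully aborted transactions are also AWAR-free. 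For progressiveness I would argue the contrapositive: if $T_k$ is forcefully aborted, it is either aborted in a tm-read because some $v_j$ in its read set changed id between two of its reads, or aborted in $\TryC_k$ because the $\mCAS$ failed, i.e. some $v_j$ with $X_j \in \Dset(T_k)$ does not match the expected value; in both cases some other transaction $T_i$ performed a successful $\mCAS$ updating an object in $\Dset(T_k)$ during $T_k$'s interval, so $X_j \in \Wset(T_i)$ and $X_j \in \Dset(T_k)$, and $T_i$ overlaps $T_k$ at a prefix where $T_i$ is still live (just before $T_i$ completes) — this is exactly the conflict witness required by Definition~\ref{def:prog}. A small subtlety worth checking: at the prefix $H'$ where the witnessing $T_i$ is live, $T_i$ and $T_k$ must indeed conflict in $H'$, meaning $X_j$ must already be in $\Dset(T_i)$ in $H'$; since $T_i$ only writes to base objects in $\TryC_i$, its write set is determined at invocation of $\TryC_i$, so taking $H'$ to end just before $T_i$'s completing event gives a live $T_i$ with $X_j \in \Wset(T_i)$.

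The main obstacle, and where I would spend most of the effort, is the opacity proof: constructing a legal complete t-sequential serialization $S$ of an arbitrary execution's history that respects $\prec_H$ and $\prec_H^{DU}$. The natural ordering is by "linearization point": for a committed updating transaction, the point of its successful $\mCAS$; for an aborted transaction, the point of its aborting step (tm-read or failed $\mCAS$); for a read-only committed transaction, the point of its last successful re-validation (the last tm-read, or the $\TryC_k$ if there are reads followed by nothing — really, the point just after its final read-set consistency check succeeds). I would order all transactions (completed by $\bar H$ first, adding $\TryC_k\cdot A_k$ to live ones and stripping writes of aborted ones as in Definition~\ref{def:opacity}) by these points, breaking ties arbitrarily but consistently. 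Then I must show (i) this total order respects real-time order — immediate, since a transaction's linearization point lies within its interval; (ii) it respects $\prec_H^{DU}$ — if $T_k$ reads $X$ before $T_m$'s $\TryC_m$ and $T_m$ commits writing $X$, then $T_k$'s read of $X$ (hence its consistency-check point, which is at or after that read) precedes $T_m$'s $\mCAS$, so $T_k$'s linearization point precedes $T_m$'s; (iii) legality — each read of $X_i$ by $T_k$ returns the value written by the transaction whose successful $\mCAS$ on $v_i$ is the last one preceding $T_k$'s linearization point (or the initial value), and this is precisely the $v_i$ value $T_k$ observed, because the re-validation guarantees no id-change on $v_i$ between $T_k$'s read and its linearization point, and an updating $T_k$'s own $\mCAS$ would only succeed if the read-set values were still current at the $\mCAS$, so $T_k$'s linearization point sees exactly the value it reported. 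The delicate case is an updating transaction that both reads and writes $X_i$, or reads $X_i$ and later a concurrent transaction updates $X_i$ but $T_k$'s $\mCAS$ nonetheless succeeds — the $\mCAS$ semantics (expected-value match on \emph{every} object in $\Dset(T_k)$) rule this out, which is the crux, and I would make sure to invoke that atomicity explicitly. I would present this serialization argument in full in the (optional) appendix and state it in the body, since it is the one genuinely non-routine part.
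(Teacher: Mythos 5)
Your proposal matches the paper's own treatment: the same $\mCAS$-based algorithm (per-object base objects storing value/id pairs, invisible validated reads, buffered writes, a single $\mCAS$ over $\Dset(T_k)$ in $\TryC_k$), the same AWAR accounting (a failed $\mCAS$ performs no write and hence no AWAR), the same conflict-witness argument for progressiveness, and the same linearization-point serialization (successful $\mCAS$ for committed updaters, last validated read for read-only/aborted transactions) for opacity. If anything, your write-up is more explicit than the paper's proof sketch, e.g.\ on the prefix $H'$ witnessing the conflict and on legality via the $\mCAS$ expected-value check.
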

%
Even if we do not use atomic sections (and, thus, AWARs) we still can
implement a progressive opaque STM using reads and writes that incurs only a single multi-RAW (and,
thus, incurring just a single fence) per update transaction. 
This implementation uses a simple
\emph{multi-trylock} primitive, which in turn can be implemented with
a single multi-RAW. The $\multitrylock$ primitive
exports operations $\textit{acquire}(W)$, $\textit{release}(W)$ 
and $\textit{isContended}(X)$, for all sets  of t-objects $W$ and all t-objects $X$. 
Informally, if there is no contention on the locks on
objects in $W$, then $\textit{acquire}(W)$ returns $\true$ which means 
that exclusive locks on all objects in $W$ are acquired.
Otherwise, $\textit{acquire}(W)$ returns $\false$ which means that 
no locks on objects in $W$ are acquired.
Operation $\textit{release}(W)$ releases the acquired locks on objects in $W$
and $\textit{isContended}(X)$ returns $\true$ \emph{iff} a lock on $X$
is currently held by any process.  
The implementation of $\textit{acquire}(W)$
first writes to a series of base objects and then reads a series of base objects 
incurring a single multi-RAW,   
while operations $\textit{release}(W)$ and $\textit{isContended}(X)$ incur no RAW.

Implementations of reads and writes are similar to ones described above, except that
each time a transaction $T_k$ performs a read of a t-object $X_i$, 
it additionally checks if no object in the current read set is \emph{locked} by 
an updating transaction.
If some object in the read set has been modified or is 
locked, the transaction is forcefully aborted.

For every updating transaction $T_k$, $\TryC_k()$ invokes 
$\textit{acquire}(\Wset(T_k))$.
If it returns $\true$, $\TryC_k()$ returns $C_k$,
otherwise it returns $A_k$. Read-only transactions simply returns $C_k$. Consequently, the implementation
incurs a single multi-RAW per updating transaction.
\begin{theorem}
\label{th:mraw}
There exists a progressive opaque STM
implementation with wait-free operations that employs a
single multi-RAW per transaction.
Moreover, no RAWs are performed in read-only transactions. 
\end{theorem}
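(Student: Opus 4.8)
The goal is to exhibit a progressive, opaque STM implementation using only reads and writes to base objects that performs a single multi-RAW per updating transaction and no RAWs in read-only transactions. The plan is to instantiate the sketch already given: build the implementation on top of the $\multitrylock$ primitive, argue that $\multitrylock$ itself can be realized with exactly one multi-RAW (writes to announce-flags on the target objects, then a read sweep to detect contention, à la the bakery algorithm~\cite{Lamport74-bakery}), and then verify opacity, progressiveness, wait-freedom, and the RAW count of the composite implementation.

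First I would spell out the data structures: each t-object $X_i$ is associated with a base object $v_i$ holding the pair $(\val, \code)$ --- the current value and the id of the last transaction to update $X_i$ --- plus the base objects internal to the $\multitrylock$ lock on $X_i$. A reading transaction $T_k$ reads $v_i$, records $(X_i, v_i)$ in its read set, and then re-validates: it re-reads every $v_j$ for $X_j$ in its read set checking that the recorded $\code$ entries are unchanged, and calls $\textit{isContended}(X_j)$ to check no $X_j$ is locked by an updating transaction; if validation fails it returns $A_k$, otherwise it returns the recorded value. A write just appends to the local write set. For $\TryC_k$ of an updating transaction: call $\textit{acquire}(\Wset(T_k))$; if it returns $\false$, return $A_k$; otherwise write the new $(\val, k)$ pairs into the corresponding $v_i$'s, call $\textit{release}(\Wset(T_k))$, and return $C_k$. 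Read-only transactions return $C_k$ immediately from $\TryC_k$. Since reads, writes, $\textit{release}$, and $\textit{isContended}$ involve no RAW, and the only multi-RAW sits inside the single $\textit{acquire}$ call, the RAW accounting is immediate --- a read-only transaction never invokes $\textit{acquire}$, so it performs no RAW, and an updating transaction performs exactly one multi-RAW.

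Next I would argue correctness. For \textbf{wait-freedom}: every operation takes a bounded number of its own steps --- reads and $\TryC$ touch only $O(|\Dset(T_k)|)$ base objects, and $\textit{acquire}$ either grabs all locks or grabs none (it never spins), so no operation can be blocked by a concurrent transaction. For \textbf{progressiveness}: suppose $T_i$ is forcefully aborted. Either a re-validation in some $\Read_i$ failed, which means some $v_j$ changed or some $X_j\in\Rset(T_i)$ was locked by a concurrent updating transaction $T_k$ --- in both cases $X_j\in\Dset(T_i)\cap\Dset(T_k)$ with $X_j\in\Wset(T_k)$, giving a conflict in a prefix where $T_k$ is live; or $\textit{acquire}(\Wset(T_i))$ returned $\false$, which by the $\multitrylock$ contention semantics means some concurrent transaction $T_k$ held or was acquiring a lock on an object in $\Wset(T_i)\cap\Wset(T_k)$, again a conflict. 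So Definition~\ref{def:prog} is satisfied. For \textbf{opacity}: since reads are invisible, writes are serialized by lock acquisition, and each committed updating transaction holds exclusive locks on its entire write set during the instant it installs its values, the implementation is in fact single-version opaque --- I would construct the serialization by ordering committed updating transactions by the point at which they finish writing their $v_i$'s, inserting each read-only or aborted transaction immediately after the last committed updating transaction whose write it observed, and then checking that this order is legal, t-sequential, and respects $\prec_H$ and $\prec_H^{DU}$ (the re-validation in reads is exactly what guarantees the latter). This mirrors the standard argument for two-phase-locking STMs and for the $\mCAS$-based implementation of Theorem~\ref{th:kcas}.

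The main obstacle is the opacity argument, specifically handling \emph{aborted} and \emph{live} transactions rather than just committed ones. Because opacity (unlike strict serializability) forces aborted transactions into the serialization, I must show that every partial read set a doomed transaction accumulated before aborting is still mutually consistent --- i.e., corresponds to a single consistent snapshot --- which is precisely why each $\Read_i$ re-validates the \emph{entire} current read set rather than just the freshly-read object; I would make this the crux of the proof and argue that the last successful re-validation point of $T_i$ witnesses a legal placement in $S$. A secondary technical point is nailing down the exact specification and multi-RAW implementation of $\multitrylock$ (that $\textit{acquire}$'s write-then-read-sweep structure both detects all relevant contention and never produces a "split" acquisition), but this is a self-contained lock-algorithm lemma along the lines of~\cite{Lamport74-bakery,anderson-90-tpds} and does not interact with the STM-level reasoning.
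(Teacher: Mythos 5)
Your overall architecture matches the paper's (multi-trylock realized by a write-sweep followed by a read-sweep, per-object base objects $v_i$ storing value and writer id, invisible reads with full read-set re-validation, lock acquisition only in $\TryC$ of updating transactions), but there is a genuine gap in your $\TryC$: you acquire the locks on $\Wset(T_k)$ and then immediately install the new values, with no re-validation of the read set \emph{after} the locks are held. The validation you do perform lives entirely inside $\Read_k$, so it cannot catch an invalidation that occurs between a transaction's last read and its lock acquisition. Concretely, let $T_1$ read $X$ and write $Y$, and let $T_2$ read $Y$ and write $X$. Run $R_1(X)$ and $R_2(Y)$ first (both return initial values; each is the first read of its transaction, so re-validation is vacuous and no locks are held yet), then let $T_1$ acquire the lock on $\{Y\}$ and $T_2$ acquire the lock on $\{X\}$ --- these write sets are disjoint, so both $\textit{acquire}$ calls succeed --- and let both install their values and commit. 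Now $T_1$ must be serialized before $T_2$ (it read the old $X$) and $T_2$ before $T_1$ (it read the old $Y$): opacity, and even strict serializability, fails. This is essentially the same two-transaction scenario the paper uses to prove the RAW lower bound of Theorem~\ref{th:sl}, so an implementation that escapes it without any commit-time check should be immediately suspicious.

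The fix is exactly what the paper's algorithm does: after $\textit{acquire}(\Wset(T_k))$ returns $\true$, call the same \textsf{isAbortable} routine (check that no $X_j\in\Rset(T_k)$ is contended and that every recorded $v_j$ is unchanged), and release and return $A_k$ if it fails. Note that this repair does not disturb your RAW accounting: the reads performed by the commit-time validation follow the writes of $\textit{acquire}$ with no intervening writes, so they fold into the same single multi-RAW, and the subsequent writes to the $v_j$'s and in $\textit{release}$ are not followed by reads. With that check added, your serialization strategy (and the paper's, which places the serialization point of a committed updating transaction at its successful lock acquisition rather than at the end of its write-back) goes through; the rest of your argument --- wait-freedom, progressiveness, and the treatment of aborted transactions via the read-time re-validation --- is in line with the paper's proof.
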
 
%
%
%
We also derive a \emph{strongly progressive} STM using only reads and writes that incurs at most four RAWs per updating transaction and uses a finite number of bounded registers.
Our implementation uses a \emph{starvation-free multi-trylock} primitive inspired by the Black-White Bakery Algorithm~\cite{Gadi-Bakery}, a bounded version of the Bakery Algorithm~\cite{Lamport74-bakery}. 

Informally, if no concurrent process contends infinitely long on some $X \in W$, then the $\textit{acquire}(W)$ operation of the starvation-free multi-trylock
eventually returns $\true$ which means 
that exclusive locks on all objects in $W$ are acquired.
The implementation of $\textit{acquire}(W)$
incurs three RAWs, while operation $\textit{release}(W)$ performs a single RAW. 

Implementations of tm-reads and tm-writes are identical to the constant RAW progressive implementation described above. For every updating transaction $T_k$, $\TryC_k()$ invokes the $\textit{acquire}$ operation of the starvation-free multi-trylock over $\Wset(T_k)$. Note that this always returns $\true$
and a transaction $T_k$ with $\Rset_k=\emptyset$ eventually returns $C_k$.
Read-only transactions simply returns $C_k$. Consequently, the implementation
incurs four RAWs per updating transaction.
\begin{theorem}
\label{th:sprogop}
There exists a strongly progressive single-version opaque STM
implementation with starvation-free operations that uses invisible reads and employs
four RAWs per transaction.
Moreover, no RAWs are performed in read-only transactions. 
\end{theorem}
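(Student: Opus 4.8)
The plan is to construct a strongly progressive, single-version opaque STM implementation along the lines already sketched for the constant-RAW progressive implementation of Theorem~\ref{th:mraw}, but replacing the ``best-effort'' \multitrylock\ with a \emph{starvation-free} multi-trylock whose \textit{acquire} operation never returns \false\ and eventually returns \true\ provided no concurrent process contends forever on an overlapping object. First I would give the base-object layout: each t-object $X_i$ is associated with a register $v_i$ storing its current value together with the identifier of the last transaction to update it (exactly as in the implementation preceding Theorem~\ref{th:kcas}), plus a disjoint block of bounded registers implementing the lock on $X_i$, following the Black-White Bakery scheme of~\cite{Gadi-Bakery}; this makes the implementation strict data-partitioned. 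The tm-read and tm-write operations are verbatim those of the Theorem~\ref{th:mraw} implementation --- a read of $X_i$ reads $v_i$, adds $X_i$ to $\Rset(T_k)$, and forcefully aborts $T_k$ if any object in the current read set has been updated or is currently locked by an updating transaction, so reads remain invisible --- and $\TryC_k$ for an updating $T_k$ calls \textit{acquire} on $\Wset(T_k)$, writes back the new values to the corresponding $v_i$, releases the locks, and returns $C_k$; a read-only transaction simply returns $C_k$.

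The second step is to design and analyze the starvation-free \multitrylock\ and account for its RAWs. I would adapt the Black-White Bakery Algorithm so that a process wishing to acquire a set $W$ picks a single ticket (a colour bit plus a bounded number), then waits until, for \emph{every} $X\in W$, it has priority over all concurrently competing processes on $X$; the colour mechanism of~\cite{Gadi-Bakery} keeps all the registers bounded. The doorway --- announcing ``choosing'', reading everyone's numbers, writing one's own number, announcing ``done choosing'' --- is where the fences live: as in Lamport's bakery, the write of the number followed by the read of others' numbers is a RAW, and clearing/setting the ``choosing'' flag around the scan contributes to the count; I would argue carefully that the whole \textit{acquire} can be arranged to contain exactly three RAWs (the busy-wait loop itself only reads and re-reads, introducing no new write-then-read on distinct objects once the doorway write has been published), and that \textit{release}, which overwrites the number/colour registers and then may need to re-read a shared colour register to update it, contributes one RAW. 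Since tm-reads and tm-writes perform no writes to base objects at all, the only RAWs in a transaction come from one \textit{acquire} and one \textit{release}, giving the stated bound of four RAWs per updating transaction and zero for read-only ones.

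The third step is correctness. For single-version opacity I would reuse the serialization argument underlying the Theorem~\ref{th:mraw} implementation: order committed updating transactions by the real-time order of their successful \textit{acquire}/value-writeback critical sections, insert each read-only or aborted transaction immediately before the first committed updating transaction that invalidated one of its reads (or at its natural real-time position if none), and check that the resulting t-sequential history is legal and respects $\prec_H$, $\prec_H^{DU}$, and $\prec_H^{SV}$ --- the last because a read that observes $v_i$ before $T_k$'s writeback is ordered before $T_k$, exactly what $\prec_H^{SV}$ demands. Progressiveness (hence strong progressiveness, using the single-object-conflict guarantee of mutual exclusion: among transactions contending on one t-object, the one that wins the lock commits) follows because a transaction is forcefully aborted only when some read-set object was concurrently updated or locked, which exhibits the required live conflicting transaction; and starvation-freedom of operations follows from the starvation-freedom of the underlying bakery lock together with the fact that tm-reads, tm-writes, writebacks and releases are all wait-free.

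The main obstacle will be the second step: pinning down the \multitrylock\ so that its \textit{acquire} provably incurs \emph{exactly} three RAWs and not more. Naive bakery-style mutual exclusion for a \emph{set} $W$ of locks, if implemented by looping $\textit{acquire}(\{X\})$ over $X\in W$, would incur a fence per element and break the constant bound, so the doorway must publish a single global ticket once and then do all waiting with pure reads; and the Black-White colouring used to bound the registers adds an extra shared colour variable whose reads and writes must be slotted into the doorway and the release without creating additional read-after-write pairs on distinct objects. Getting this bookkeeping right --- and arguing that the busy-wait, which only ever re-reads already-written-or-foreign locations, contributes no RAW --- is the delicate part; the opacity and strong-progressiveness arguments are essentially those of the earlier implementations and should go through with only cosmetic changes.
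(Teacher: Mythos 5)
Your proposal follows essentially the same route as the paper: keep the invisible-read, strict data-partitioned layout of the Theorem~\ref{th:mraw} implementation, swap the best-effort \multitrylock{} for a Black-White-Bakery-based starvation-free one in which each process publishes its per-object contention flags and a single bounded ticket (so \textit{acquire} costs three RAWs and \textit{release} one), and inherit opacity and strong progressiveness from the earlier serialization and lock-progress arguments. The delicate point you flag --- avoiding a fence per write-set element by using one global ticket and a read-only busy-wait --- is exactly how the paper's Algorithm for the starvation-free multi-trylock achieves the four-RAW bound.
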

Note that our implementation does not violate the impossibility result of Guerraoui and Kapalka~\cite{tm-book} who proved that a strongly progressive opaque STM cannot
be implemented using only reads and writes if tm-operations are required to be \emph{wait-free}.
\subsection{Protected data}
\label{sec:prot}
Let $M$ be a progressive STM implementation.
Intuitively, a t-object $X_j$ is protected at the end of some finite execution $\pi$ of $M$ 
if some transaction $T_0$ is about to atomically change the value of $X_j$ in its next step 
(e.g., by performing a CAS operation) or does not
allow any concurrent transaction to read $X_j$ (e.g., by holding a
lock on $X_j$).

Formally, let $\alpha\cdot\pi$ be an execution of $M$   
such that $\pi$ is an uncontended complete execution of 
a transaction $T_0$, where $\Wset(T_0)=\{X_1,\ldots,X_m\}$.
Let $u_j$ ($j=1,\ldots,m$) denote the value written by $T_0$ to t-object $X_j$ in $\pi$. 
We say that $\pi'$ is a \emph{proper prefix} of $\pi$ if $\pi'$ is a prefix of $\pi$ 
and every atomic section is complete in $\pi'$.
In this section, let $\pi^t$ denote the $t$-th shortest proper prefix of $\pi$. 
Let $\pi^0$ denote the empty prefix.
(Recall that an atomic event is either a tm-event, a read or write on a base object,
or an atomic section.)  

For any $X_j\in\Wset(T_0)$, 
let $T_{j}$ denote a transaction that tries to read $X_j$ 
and commit. 
Let $E_{j}^{t}=\alpha\cdot\pi^t\cdot\rho_j^t$ denote the extension of $\alpha\cdot\pi^t$ in which    
$T_j$ runs solo until it completes.
Note that, since we only require the implementation to be starvation-free,  $\rho_j^t$ can be infinite. 

We say that $\alpha\cdot\pi^t$ is $(1,j)$-valent if the read
operation performed by $T_j$ in $\alpha\cdot\pi^t\cdot\rho_j^t$
returns $u_j$ (the value written by $T_0$ to $X_j$).
We say that $\alpha\cdot\pi^t$ is $(0,j)$-valent if the read operation performed by $T_j$ in $\alpha\cdot\pi^t\cdot\rho_j^t$
does not abort and returns an "old" value $u\neq u_j$.  
Otherwise, if the read operation of $T_j$  aborts or never returns in
$\alpha\cdot\pi^t\cdot\rho_j^t$, we say that  
$\alpha\cdot\pi^t$ is $(\bot,j)$-valent.
\begin{definition}
We say that $T_0$ \emph{protects} an object $X_j$ in
$\alpha\cdot\pi^t$, where $\pi^t$ is the $t$-th shortest proper prefix
of $\pi$ ($t>0$) if one of the following conditions holds:
(1) $\alpha\cdot\pi^{t}$ is $(0,j)$-valent and $\alpha\cdot\pi^{t+1}$
is $(1,j)$-valent, or 
(2) $\alpha\cdot\pi^{t}$ or  $\alpha\cdot\pi^{t+1}$ is $(\bot,j)$-valent.
\end{definition}
%
%
%
%
For \emph{strict disjoint-access parallel} (SDAP) progressive STM, we show that 
every uncontended transaction must protect every object in its write set 
at some point of its execution.

We observe that the no prefix of $\pi$ can be $0$ and $1$-valent at the same time (notations used are the same as introduced in Section~\ref{sec:prot}). 
\begin{lemma}\label{lem:valence}
There does not exist $\pi^t$, a proper prefix of $\pi$, and
$i,j\in\{1,\ldots,m\}$ such that 
$\alpha\cdot\pi^t$ is both $(0,i)$-valent and  $(1,j)$-valent. 
\end{lemma}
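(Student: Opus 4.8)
The plan is to argue by contradiction: suppose there is a proper prefix $\pi^t$ of $\pi$ and indices $i,j\in\{1,\ldots,m\}$ (possibly $i=j$) such that $\alpha\cdot\pi^t$ is both $(0,i)$-valent and $(1,j)$-valent. First I would dispatch the case $i=j$: a prefix cannot simultaneously be $(0,i)$-valent and $(1,i)$-valent, since in $\alpha\cdot\pi^t\cdot\rho_i^t$ the read of $X_i$ by $T_i$ returns a single well-defined value, which is either $u_i$ or some old $u\neq u_i$, not both. So the interesting case is $i\neq j$, and here I want to build a single execution that exposes a contradiction with opacity.

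**Constructing the problematic execution.** Since $\alpha\cdot\pi^t$ is $(0,i)$-valent, the extension $E_i^t=\alpha\cdot\pi^t\cdot\rho_i^t$ has $T_i$ read the old value $u\neq u_i$ of $X_i$ and commit; in particular $T_i$ does not observe $T_0$'s write to $X_i$. Since $\alpha\cdot\pi^t$ is $(1,j)$-valent, the extension $E_j^t=\alpha\cdot\pi^t\cdot\rho_j^t$ has $T_j$ read the new value $u_j$ of $X_j$ and commit; so $T_j$ \emph{does} observe $T_0$'s write to $X_j$. The key step is to run $T_i$ and $T_j$ one after another after $\pi^t$: consider $\alpha\cdot\pi^t\cdot\rho_i^t\cdot\rho_j'$, where $\rho_j'$ is a solo run of $T_j$ started from the configuration after $\alpha\cdot\pi^t\cdot\rho_i^t$. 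I would invoke disjoint-access parallelism here (Lemma~\ref{lem:dap} and Definition~\ref{def:dap}): $T_i$ reads only $X_i$ and $T_j$ reads only $X_j$ with $i\neq j$, so $T_i$ and $T_j$ are disjoint-access (assuming, as the excerpt's setup of this section lets us, that each $T_j$ has data set $\{X_j\}$), hence they do not contend on any base object and $T_j$'s run after $\rho_i^t$ is indistinguishable to $T_j$ from $\rho_j^t$ run directly after $\pi^t$. Therefore in $\alpha\cdot\pi^t\cdot\rho_i^t\cdot\rho_j'$, transaction $T_j$ still reads $u_j$ and commits, just as in $E_j^t$, while $T_i$ reads the old value $u\neq u_i$ of $X_i$ and commits as in $E_i^t$.

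**Deriving the contradiction.** Now I have an execution in which $T_0$ is a complete uncontended transaction that (in $\pi^t$) has \emph{partially} written its write set — it has written $X_j$'s new value $u_j$ (since $T_j$ reads it) but not yet $X_i$'s new value $u_i$ (since $T_i$ reads the old one) — while both $T_i$ and $T_j$ commit. But $T_0$ is uncontended in $\alpha\cdot\pi$; the prefix $\pi^t$ is not a complete transaction, so in any opaque linearization of this execution $T_0$ must appear either entirely before or entirely after each of $T_i,T_j$, and $T_0$'s effect is "all or nothing." If $T_0$ is serialized before $T_j$ then (since $X_i\in\Wset(T_0)$ and $T_i$ reads $X_i$) $T_i$ must be serialized after $T_0$ as well to respect $\prec^{DU}$/legality — but then $T_i$ would have to read $u_i$, not the old value, a contradiction; symmetrically, serializing $T_0$ after $T_i$ forces $T_0$ after $T_j$ too, so $T_j$ would read the old value of $X_j$, again a contradiction. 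Either way opacity fails, contradicting that $M$ is opaque. I would phrase this last part carefully via Definition~\ref{def:opacity}, arguing that no legal t-sequential serialization respecting $\prec$ and $\prec^{DU}$ can exist.

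**Main obstacle.** The delicate point is the use of disjoint-access parallelism to splice $\rho_i^t$ and a solo run of $T_j$ together without either transaction noticing the other. I must make sure the execution fragments are genuinely stitchable — i.e., that $\rho_i^t$ touches no base object that $T_j$'s solo run touches and vice versa — which is exactly what Lemma~\ref{lem:dap} gives, but only because $T_0$'s prefix $\pi^t$ has already finished its atomic sections (it is a \emph{proper} prefix) and $T_i,T_j$ are run after it; I would need to check that the DAP condition applies to the triple $T_0,T_i,T_j$ over the relevant minimal interval, and that $T_0$'s partial write to $X_j$ but not $X_i$ does not by itself create a conflict edge between $T_i$ and $T_j$ in the conflict graph (it does not, since $\Dset(T_i)\cap\Dset(T_j)=\emptyset$). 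The rest is a routine opacity argument.
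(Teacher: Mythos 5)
Your overall strategy---splice the two solo runs after $\pi^t$ using DAP-indistinguishability and then show no legal serialization exists---is the paper's strategy, but you splice them in the wrong order, and this is where the proof actually breaks. You build $\alpha\cdot\pi^t\cdot\rho_i^t\cdot\rho_j'$, putting the \emph{old-value} reader $T_i$ first and the \emph{new-value} reader $T_j$ second. The real-time order of that execution only forces $T_i\prec_S T_j$, and $T_0$ (still live, completed to a commit by appending $C_0$ to its pending $\TryC_0$) is concurrent with both, so nothing rules out the serialization $T_i,\,T_0,\,T_j$: $T_i$ legally reads the old value of $X_i$ before $T_0$ commits, and $T_j$ legally reads $u_j$ after. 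One checks that $T_i\not\prec^{DU}_H T_0$ (the response of $\Read_i(X_i)$ comes after the invocation of $\TryC_0$, which must already lie in $\pi^t$ for $(1,j)$-valence to be consistent with opacity), so this serialization respects $\prec_H$ and $\prec_H^{DU}$ and is legal. Hence your combined execution yields no contradiction. The step in your write-up that hides this is the claim that ``serializing $T_0$ after $T_i$ forces $T_0$ after $T_j$ too'': that implication needs $T_j\prec_S T_i$, which is exactly the opposite of what your construction provides. The paper's proof builds $E_{ij}^t=\alpha\cdot\pi^t\cdot\rho_j^t\cdot\rho_i^t$, i.e., runs the new-value reader \emph{first}; then real time forces $T_j\prec_S T_i$, legality of $T_j$'s read forces $T_0\prec_S T_j$ with $T_0$ committed, hence $T_0\prec_S T_i$, and $T_i$'s old-value read of $X_i$ becomes illegal. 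The fix to your argument is simply to reverse the splice (and argue indistinguishability for $T_i$ rather than for $T_j$).

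A secondary remark: your parenthetical justification that $T_i$ and $T_j$ are disjoint-access ``since $\Dset(T_i)\cap\Dset(T_j)=\emptyset$'' does not match the paper's Definition~\ref{def:dap}. Disjointness of data sets is not the criterion; what matters is the absence of a \emph{path} between $X_i$ and $X_j$ in the conflict graph, and the live transaction $T_0$ with $\{X_i,X_j\}\subseteq\Dset(T_0)$ puts an edge between them. The paper's own proof invokes Lemma~\ref{lem:dap} with the same breeziness, so I will not hold the conclusion against you, but your stated reason for why the splice is legitimate is not the right one. The decisive defect remains the ordering issue above.
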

\begin{proof}
By contradiction, suppose that there exist $i,j$ and $\alpha\cdot\pi^t$ 
that is both $(0,i)$-valent and  $(1,j)$-valent. 
Since the implementation is SDAP,
there exists an execution of $M$,
$E_{ij}^{t}=\alpha\cdot\pi^t\cdot\rho_j^t\cdot\rho_i^t$
that is indistinguishable to $T_i$ from
$\alpha\cdot\pi^t\cdot\rho_i^t$.
In $E_{ij}^{t}$, the only possible serialization is $T_0$, $T_j$, $T_i$.
But $T_i$ returns the ``old'' value of $X_i$ and, thus, the
serialization is not legal---a contradiction.
\end{proof}
If $\alpha\cdot\pi^t$ is $(0,i)$-valent (resp., $(1,i)$-valent) for
some $i$, we say that it is $0$-valent (resp., $1$-valent).
By Lemma~\ref{lem:valence}, the notions of $0$-valence and $1$-valence
are well-defined. \\
\\   
\begin{theorem}
\label{th:protected}
Let $M$ be a progressive, opaque and strict disjoint-access-parallel STM implementation.
Let $\alpha\cdot\pi$ be an execution of $M$, where $\pi$ is an
uncontended complete execution of a transaction $T_0$.
Then there exists $\pi^t$, a proper prefix of $\pi$, such that $T_0$
protects $|\Wset(T_0)|$ t-objects in $\alpha\cdot\pi^t$.  
\end{theorem}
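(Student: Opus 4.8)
The plan is to track the ``valence'' of $\alpha\cdot\pi^t$ as $t$ grows from $0$ to the full length of $\pi$, and argue that for each $X_j\in\Wset(T_0)$ there is a point where the value returned to $T_j$ flips from the old value to $u_j$, or where $T_j$ is forced to abort/diverge — and that, crucially, all these flips can be made to happen at the \emph{same} proper prefix. First I would observe the boundary conditions: at $\pi^0$ (empty prefix) the execution $\alpha\cdot\pi^0\cdot\rho_j^0$ is indistinguishable to $T_j$ from a solo run in which $T_0$ has not acted, so by progressiveness $T_j$ commits reading the old value $u\neq u_j$; hence $\alpha\cdot\pi^0$ is $(0,j)$-valent for every $j$, i.e.\ $0$-valent. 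At the end, $\alpha\cdot\pi$ followed by $T_j$'s solo run: since $T_0$ has completed and committed (it is uncontended, so by progressiveness it commits), opacity forces $T_j$ to read $u_j$, so $\alpha\cdot\pi$ is $1$-valent. So somewhere the valence must change.

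Next I would use Lemma~\ref{lem:valence}: no proper prefix is simultaneously $0$-valent and $1$-valent, so ``$0$-valent'' and ``$1$-valent'' are mutually exclusive well-defined labels, and there may also be $(\bot,j)$-valent prefixes. Consider the first proper prefix $\pi^{t^\ast}$ that is \emph{not} $0$-valent (it exists since $\pi^0$ is $0$-valent and $\pi$ is $1$-valent). Then $\alpha\cdot\pi^{t^\ast-1}$ is $0$-valent and $\alpha\cdot\pi^{t^\ast}$ is either $1$-valent or $(\bot,j)$-valent for some $j$. I claim $t:=t^\ast-1$ is the prefix that does the job: I need to show $T_0$ protects \emph{every} $X_j\in\Wset(T_0)$ in $\alpha\cdot\pi^{t}$. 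Fix $X_j$. We know $\alpha\cdot\pi^{t}$ is $0$-valent; by Lemma~\ref{lem:valence} it cannot be $(1,i)$-valent for any $i$, so for the specific $j$, $\alpha\cdot\pi^{t}$ is either $(0,j)$-valent or $(\bot,j)$-valent. In the latter case condition (2) of the definition of ``protects'' holds immediately. In the former case, $\alpha\cdot\pi^{t}$ is $(0,j)$-valent; now look at $\alpha\cdot\pi^{t+1}=\alpha\cdot\pi^{t^\ast}$, which is \emph{not} $0$-valent, hence for $X_j$ it is either $(1,j)$-valent — giving condition (1) — or $(\bot,j)$-valent — giving condition (2). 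Either way $T_0$ protects $X_j$ in $\alpha\cdot\pi^{t}$.

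The main obstacle — and where disjoint-access parallelism and Lemma~\ref{lem:dap}/Lemma~\ref{lem:valence} do the real work — is justifying that the \emph{same} prefix $\pi^{t}$ simultaneously exhibits a ``flip'' for every $X_j$, rather than each $X_j$ flipping at its own private prefix. The reason, which I would spell out, is that Lemma~\ref{lem:valence} forbids a mixed state: as soon as some $X_j$ has already reached $(1,j)$-valence at a prefix, no other $X_i$ can still be $(0,i)$-valent there; so the moment the ``first'' object leaves $0$-valence, \emph{all} objects are poised at the protection boundary (each is $(0,\cdot)$ or $(\bot,\cdot)$ at $\pi^{t}$ and each is $(1,\cdot)$ or $(\bot,\cdot)$ at $\pi^{t+1}$). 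I should also be careful about the case $t^\ast$ being ill-defined if $\pi$ itself has zero proper prefixes beyond $\pi^0$, but then $\pi=\pi^0$ and there is nothing in $\Wset(T_0)$ to write, contradicting $\Wset(T_0)=\{X_1,\dots,X_m\}$ with $m\ge 1$; and the degenerate subtlety that a single atomic section might flip several things at once is exactly why the definition quantifies over $\pi^{t}$ and $\pi^{t+1}$ (consecutive \emph{proper} prefixes), so it is already accommodated. A minor point worth checking is well-formedness of the execution $E_j^t$ — that inserting $T_j$'s solo continuation after a proper prefix yields a legal execution of $M$ — which holds because $\pi^t$ ends at an atomic-section boundary and $T_j$ uses a fresh transaction identifier.
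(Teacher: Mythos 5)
Your argument is correct, and it is genuinely simpler than the paper's. The paper splits into two cases: (1) some consecutive pair of proper prefixes goes from $0$-valent to $1$-valent, handled essentially as you do; and (2) no such pair exists, in which case it runs an induction that builds ever-larger ``protecting fragments'' --- maximal runs of prefixes that are $(\bot,i)$-valent for growing subsets of $\Wset(T_0)$ --- until one fragment covers all of $\{X_1,\dots,X_m\}$. You instead pick the single critical index $t=t^\ast-1$, where $\pi^{t^\ast}$ is the first proper prefix that is not $0$-valent, and observe that Lemma~\ref{lem:valence} together with the exhaustiveness of the three labels $(0,j)$, $(1,j)$, $(\bot,j)$ forces every $X_j$ to satisfy clause (1) or clause (2) of the protection definition at that one prefix: $\pi^{t}$ is $(0,j)$- or $(\bot,j)$-valent for every $j$ (it cannot be $(1,j)$-valent while $0$-valent), and $\pi^{t+1}$ is $(1,j)$- or $(\bot,j)$-valent for every $j$ (it is not $0$-valent), and every combination of these yields protection. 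This collapses the paper's case distinction and eliminates the induction entirely, at no loss of generality; the paper's fragment machinery buys nothing extra for the theorem as stated, since only a single protecting prefix is required. The one point you flag but do not fully close is the side condition $t>0$ in the definition of protection: if $t^\ast=1$ your chosen prefix would be $\pi^0$. This is easily dispatched --- $\pi^1$ consists of a single tm-invocation event, which modifies no base object, so $\rho_j^1=\rho_j^0$ and $\pi^1$ inherits the $0$-valence of $\pi^0$, forcing $t^\ast\ge 2$ --- but you should say so explicitly. Your remarks on the existence of $t^\ast$ (boundary valences of $\pi^0$ and $\pi$) and on well-formedness of $E_j^t$ match the paper's implicit assumptions.
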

\begin{proof}
Let $\Wset_{T_0}=\{X_1,\ldots,X_m\}$. 
Consider two cases:

\begin{enumerate}
\item[(1)]
Suppose that 
$\pi$ has a proper prefix $\pi^{t}$ such that $\alpha\cdot\pi^t$ is
$0$-valent and $\alpha\cdot\pi^{t+1}$ is
$1$-valent.
By Lemma~\ref{lem:valence}, there does not exists $i$, such that $\alpha\cdot\pi^t$ is $(1,i)$-valent and $\alpha\cdot\pi^{t+1}$ is $(0,i)$-valent.
Thus, one of the following are true
\begin{itemize}
\item For every $i \in \{1,\ldots ,m\}$, $\alpha\cdot\pi^t$ is $(0,i)$-valent and $\alpha\cdot\pi^{t+1}$ is
$(1,i)$-valent
\item At least one of $\alpha\cdot\pi^t$ and $\alpha\cdot\pi^{t+1}$ is $(\bot , i)$-valent i.e. the operation of $T_i$ aborts or never returns
\end{itemize}
In either case, $T_0$ protects $m$ t-objects in $\alpha\cdot\pi^t$.  

\item[(2)] Now suppose that such $\pi^t$ does not exists, i.e., there
  is no $i\in\{1,\ldots,m\}$ and $t\in\{0,|\pi|-1\}$ such that $E_i^t$
  exists and returns an old value, and $E_i^{t+1}$ exists and returns
  a new value. 

Suppose there exists $s,t$, $0< s+1<t$,
$S\subseteq\{1,\ldots,m\}$,
such that:
\begin{itemize}
\item  $\alpha\cdot\pi^s$ is $0$-valent,  
\item  $\alpha\cdot\pi^t$ is $1$-valent,
\item  for all $r$, $s<r<t$, and for all $i\in S$, $\alpha\cdot\pi^r$ is $(\bot,i)$-valent.
\end{itemize}
We say that $s+1,\ldots,t-1$ is a \emph{protecting fragment} for
t-objects $\{X_j | j\in S\}$.  

Since $M$ is opaque and progressive, $\alpha\cdot\pi^0=\alpha$ is
$0$-valent and $\alpha\cdot\pi$ is $1$-valent.
Thus, the assumption of Case (2) implies that for each $X_i$, there exists a
protecting fragment for $\{X_i\}$.
In particular, there exists a protecting fragment for $\{X_1\}$.

Now we proceed by induction.
Let $\pi_{s+1},\ldots,\pi_{t-1}$ be a protecting fragment for
$\{X_1,\ldots,X_{u-1}\}$ such that $u\leq m$.

Now we claim that there must be a subfragment of $s+1,\ldots,t-1$ 
that protects $\{X_1,\ldots,X_u\}$. 

Suppose not. Thus, there exists $r$, $s<r<t$, such that 
$\alpha\cdot\pi^r$ is $(0,u)$-valent or $(1,u)$-valent. 
Suppose first that $\alpha\cdot\pi^r$ is $(1,u)$-valent.
Since $\alpha\cdot\pi^s$ is $(0,i)$-valent for some $i \neq u$, by Lemma~\ref{lem:valence}
and the assumption of Case (2),
there must exist $s',t'$, $s< s'+1 < t'\le r$ such that 
\begin{itemize}
\item  $\alpha\cdot\pi^{s'}$ is $0$-valent,  
\item  $\alpha\cdot\pi^{t'}$ is $1$-valent,
\item  for all $r'$, $s'<r'<t'$, $\alpha\cdot\pi^{r'}$ is $(\bot,u)$-valent.
\end{itemize}
As a result, $s'+1,\ldots,t'-1$ is a protecting fragment for $\{X_1,\ldots,X_u\}$.  
The case when $\alpha\cdot\pi^r$ is $(0,u)$-valent is symmetric, except that now we should 
consider fragment $r,\ldots,t$ instead of $s,\ldots,r$.

Thus, there exists a subfragment of ${s+1},\ldots,{t-1}$ that protects $\{X_1,\ldots,X_u\}$. 
By induction, we obtain a protecting fragment $s''+1,\ldots,t''-1$ for $\{X_1,\dots,X_m\}$.
Thus, any prefix $\alpha\cdot\pi^r$, where $s''<r<t''$ protects exactly
$m$ t-objects.  
\end{enumerate}

In both cases, there is a proper prefix of $\alpha\cdot\pi$ that protects
exactly $m$ t-objects.
\end{proof}
The lower bound of Theorem~\ref{th:protected} is tight: it is matched by all progressive implementations 
we are aware of, including ones in Section~\ref{sec:algs}. 
%
Note that any DAP single-lock STM implementation automatically 
provides a stronger progress condition than just single-lock progressiveness.
A transaction $T$ in a DAP single-lock STM can only be forcefully aborted if it 
observes a concurrent transaction $T'$ such that
$\Dset(T)\cap\Dset(T')\neq\emptyset$. 
This is not very far from progressiveness, where $T$ may abort only if $T$ and $T'$ 
experience a write-write or write-read conflict on a t-object.
Thus, in the realm of DAP STM implementations, progressiveness 
is very close to the weakest non-trivial progress condition. 

\section{Related work}
\label{sec:related}
Crain et al.~\cite{michel-permissive} proved that a
permissive opaque TM implementation cannot maintain invisible reads, which
inspired the derivation of our lower bound on RAW/AWAR complexity in Section~\ref{sec:perm}.  

The RAW/AWAR complexity for concurrent implementations was recently introduced in ~\cite{AGK11-popl}. 
The proofs of Theorems~\ref{th:perm} and \ref{th:sl} extend the arguments used
in~\cite{AGK11-popl} to the STM context. 

A related paper by Attiya et al.~\cite{AHM09} showed that every permissive 
strictly serializable and DAP TM in which every read-only transaction
must commit in a wait-free manner 
has an execution in which some read-only transaction $T_k$ performs at least 
$|\Dset(T_k)|$-1 base-object writes. 
In this paper we do not assume that a read operation must be wait-free
and we do not require disjoint-access parallelism.  
Also, we focus the number of RAW/AWAR patterns and not only
base-object writes. 
On the other hand, we consider a stronger correctness property (opacity).
Therefore, our lower bound in Section~\ref{sec:perm}
incomparable with the one of~\cite{AHM09}.

To establish the lower bound on t-objects that must be "protected" in an opaque, 
progressive TM (Section~\ref{sec:prot}),
we use the definition of disjoint-access parallelism introduced in \cite{AHM09}. 
Guerraoui and Kapalka~\cite{tm-book} considered a stronger 
version of DAP called \emph{strict data-partitioning}
to prove a linear lower bound on the number of steps performed by a successful 
read operation in a progressive, opaque TM that uses invisible reads.
Interestingly, the constant RAW/AWAR implementations of progressive, opaque TMs sketched 
in Section~\ref{sec:prog} are strict data-partitioned.
\section{Concluding remarks}
\label{sec:conclusion}
In this paper, we derived inherent costs of implementing STMs 
with non-trivial concurrency guarantees. 
At a high level, our results suggest that providing high degrees
of concurrency in STM may incur considerable unavoidable costs.
Our results give rise to many intriguing questions,
and we list some of them below. 

In this paper, we focused on progress conditions that provide positive concurrency, 
progressiveness and permissiveness.
The results do not apply to \emph{obstruction-free} STMs~~\cite{OFTM}
that only guarantee that a transaction commits if it eventually runs without contention.
Effectively, an obstruction-free STM provides zero concurrency, since progress 
is guaranteed only when one transaction is active at a time.
However, unlike single-lock implementations, it does allow overlapping 
transactions to make progress (one at a time).
Does this incur higher RAW/AWAR complexity?

We cannot expect the lower bound of Theorem~\ref{th:protected} (the
protected-data size) to apply to non-DAP STMs, including trivial
ones that allow storing the state of the whole STM in one base object.
One way to avoid trivialities is to assume that a base object can
store information only about a constant number of t-objects (the
\emph{constant-size information} property in~\cite{GK08-opacity}) which can
potentially give asymptotically close results.  

We focused on implementations that allow a tm-operation
to be delayed only by concurrent operations performed by other transactions. 
Does relaxing the tm-liveness property by allowing a read operation to wait 
until a concurrent transaction terminates~\cite{AH11-perm} 
improve the RAW/AWAR complexity with respect to permissive implementations?
It is easy to see that the proof of our permissive lower bound 
(Theorem~\ref{th:perm}) does not work for this case.
But it is unclear a priori how this may affect the cost of  progressive
implementations. 

Last but not least, the results of this paper assume opacity as a correctness property.
Recently, multiple relaxations of opacity were proposed~\cite{FGG09,AMT10,michel-permissive,AHM09}.
It would be very interesting to understand the concurrency benefits gained by such 
relaxed consistency conditions.

\paragraph{Acknowledgements.}

The authors are grateful to Michel Raynal and Rachid Guerraoui for
inspiring discussions on the properties and costs of STM and Damien Imbs for valuable comments 
on the previous drafts. The comments and suggestions of anonymous reviewers on 
an earlier version of this paper are also gratefully acknowledged.
%
{\small 
\bibliography{references}
}
\appendix 
\newpage
\section{Constant RAW/AWAR implementations for progressive TM}
\label{app:prog}
This section presents the pseudo-code for single RAW and single AWAR implementations of progressive opaque STMs and their proofs of correctness.
The single RAW implementation
uses a \emph{multi-trylock} primitive described below, while the single AWAR implementation
uses a \emph{mCAS} primitive. Finally, we describe the read-write implementation of a strongly progressive STM that employs at most four RAWs per transaction. In the implementations, every t-object $X_i$ is associated 
with a distinct base object $v_i$ that stores the ``most recent'' 
value of $X_i$ together with the \emph{id} of the transaction that was the last to update $X_i$. 
\subsection{Multi-trylock}
\label{app:ml}
\begin{algorithm}[!ht]
\caption{Multi-trylock invoked by process $p_i$}\label{alg:mul}
  \begin{algorithmic}[1]
  	\begin{multicols}{2}
  	{\footnotesize
	\Part{Shared variables}{
	  	\State $r_{ij}$, for each process $p_i$ and each t-object $X_j$
	}\EndPart	
	\Part{\textit{acquire}($Q$)}{
   		\ForAll{$X_j \in Q$}	
			\State \Write$(r_{ij},1)$
		\EndFor
		\If{$\exists X_j \in Q;t\neq i : r_{tj}=1$} \label{line:lock}
			\ForAll{$X_j \in Q$}	
				\State \Write$(r_{ij},0)$
			\EndFor
			\State return $\false$
		\EndIf
		\State return $\true$ 
	}\EndPart		
	 \Statex
	 \Part{\textit{release}($Q$)}{
  		\ForAll{$X_j \in Q$}	
 			\State \Write$(r_{ij},0)$
		\EndFor
		\State return \ok
	}\EndPart
	\Statex
	\Part{{\it \textit{isContended}($X_j$)}}{
		\If{$\exists p_t: r_{tj} \neq 0, t \neq i$} \label{line:isl}
			\State return $\true$
		\EndIf
		\State return $\false$
	}\EndPart			
	}\end{multicols}
  \end{algorithmic}
\end{algorithm}
A \textit{multi-trylock} provides exclusive write-access to a set $Q$ of t-objects. Specifically, a \textit{multi-trylock} exports the following operations
\begin{itemize}
\item \textit{acquire(Q)} returns {\it true} or {\it false}
\item \textit{release(Q)} releases the lock and returns \textit{ok}
\item \textit{isContended($X_j$)}, $X_j \in Q$ returns \emph{true} or \emph{false}
\end{itemize}
We assume that processes are well-formed: they never invoke a new
operation on the multi-trylock before receiving response from the
previous invocation. 

We say that a process $p_i$ \emph{holds a lock on $X_j$ after an execution $\pi$} if
$\pi$ contains the invocation of \textit{acquire($Q$)}, $X_j\in Q$ by
$p_i$ that returned \emph{true}, but does not contain a subsequent
invocation of \textit{release($Q'$)}, $X_j\in Q'$, by $p_i$ in $\pi$. 
We say that $X_j$ is \emph{locked after $\pi$} by process $p_i$ if $p_i$
holds a lock on $X_j$ after $\pi$.

We say that $X_j$ is \emph{contended by $p_i$ after an execution $\pi$} if
$\pi$ contains the invocation of \textit{acquire($Q$)}, $X_j \in Q$,
by $p_i$ but does not
contain a subsequent return \emph{false} or return of
\textit{release($Q'$)}, $X_j \in Q'$, by $p_i$ in $\pi$. 
 
Let an execution $\pi$ contain the invocation $i_{op}$ of an operation $op$
followed by a corresponding response $r_{op}$ (we say that $\pi$
\emph{contains} $op$). 
We say that \emph{$X_j$ is uncontended (resp., locked) during the execution of $op$ in
  $\pi$} if $X_j$ is uncontended (resp., locked) after every prefix of $\pi$ that
contains $i_{op}$ but does not contain $r_{op}$.
     
A multi-trylock implementation satisfies the following properties:
\begin{itemize}
\item \emph{Mutual-exclusion}: For any object $X_j$, and any execution
  $\pi$, there exists at most one process that \emph{holds} a lock on
  $X_j$ after $\pi$. 

\item \emph{Progress}:  Let $\pi$ be any execution that contains
  $\textit{acquire}(Q)$ by process $p_i$.
If no object in $Q$ is contended during the execution of
$\textit{acquire}(Q)$ by a process $p_k\neq p_i$ in $\pi$ then $\textit{acquire}(Q)$ returns $\true$ in $\pi$.

\item Let $\pi$ be any execution that contains $\textit{isContended}(X_j)$ invoked by $p_i$.

\begin{itemize}
\item If $X_j$ is locked by $p_{\ell};\ell \neq i$ during the complete execution of
$\textit{isContended}(X_j)$ in $\pi$, then
$\textit{isContended}(X_j)$ returns $\true$.

\item If $\forall \ell \neq i$, $X_j$ is never contended by $p_{\ell}$ during the complete execution of
$\textit{isContended}(X_j)$ in $\pi$, then
$\textit{isContended}(X_j)$ returns $\false$.

\end{itemize}     
Note that if $X_j$ is neither locked or uncontended during the
complete execution of  $\textit{isContended}(X_j)$, then either of $\true$ and
$\false$ can be returned.
\end{itemize}     
\begin{theorem}
\label{th:lock}
Algorithm~\ref{alg:mul} is an implementation of multi-trylock object
in which every operation is wait-free, every operation incurs at most one multi-RAW,
and \textit{isContended} involves no base-object writes
\end{theorem}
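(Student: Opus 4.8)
The wait-freedom and RAW-complexity claims are structural and I would dispatch them first. Each of \textit{acquire}$(Q)$, \textit{release}$(Q)$ and \textit{isContended}$(X_j)$ touches only the single-writer registers $r_{tj}$ and performs a bounded number of accesses (at most $2|Q|$ writes and $(N-1)|Q|$ reads in \textit{acquire}, $|Q|$ writes in \textit{release}, $N-1$ reads in \textit{isContended}), so every operation returns within finitely many of its own steps regardless of other processes --- hence wait-freedom. For the RAW count, an execution of \textit{acquire}$(Q)$ is a block of writes to $\{r_{ij}:X_j\in Q\}$, followed by a block of reads of $\{r_{tj}:X_j\in Q,\ t\neq i\}$, possibly followed by a further block of writes to $\{r_{ij}:X_j\in Q\}$ in the \textbf{false} branch. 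The only write-then-read occurrence is the first write-block preceding the read-block; since these writes touch only registers $r_{ij}$ while the reads touch only registers $r_{tj}$ with $t\neq i$ and no write is interleaved between them, this is a single multi-RAW, and the trailing writes are followed by no read, so no further RAW arises. \textit{release}$(Q)$ performs only writes and \textit{isContended}$(X_j)$ performs only reads, so neither contains a RAW, and \textit{isContended} performs no base-object write at all.

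The heart of the argument is mutual exclusion, which I would prove by the classical flag-based cycle argument. Suppose, towards a contradiction, that in some execution $\pi$ two distinct processes $p_i$ and $p_k$ both hold a lock on $X_j$ after $\pi$, witnessed by successful invocations $\textit{acquire}(Q)$ and $\textit{acquire}(Q')$ with $X_j\in Q\cap Q'$. Let $w_i$ be the write $r_{ij}\leftarrow 1$ performed inside that $\textit{acquire}(Q)$ and $\rho_i$ the read of $r_{kj}$ performed inside it at line~\ref{line:lock}; define $w_k,\rho_k$ symmetrically. Because $p_i$ holds the lock, no \textit{release} by $p_i$ of a set containing $X_j$ occurs between $w_i$ and the end of $\pi$; since $r_{ij}$ is written only by $p_i$, and under the natural assumption that a process does not invoke \textit{acquire} on a t-object it already holds (as when the primitive is used inside \textit{tryC}), $r_{ij}$ equals $1$ throughout $[w_i,\text{end of }\pi)$, and likewise for $r_{kj}$. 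Since $\textit{acquire}(Q)$ returned \textbf{true}, $\rho_i$ read $0$, so $\rho_i$ occurs before $w_k$ in $\pi$; symmetrically $\rho_k$ occurs before $w_i$. Combined with the per-process program orders $w_i\to\rho_i$ and $w_k\to\rho_k$, this yields the cycle $w_i\to\rho_i\to w_k\to\rho_k\to w_i$ in the total order of events of $\pi$ --- a contradiction.

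For progress and the \textit{isContended} guarantees I would isolate one small observation: if any event $e$ reads the value $1$ from $r_{tj}$, then, since $r_{tj}$ is written only by $p_t$ and is set to $1$ only inside the first loop of some $\textit{acquire}(Q'')$ with $X_j\in Q''$, the last write to $r_{tj}$ before $e$ lies inside such an $\textit{acquire}(Q'')$ and no $0$-write of $r_{tj}$ separates it from $e$; hence at $e$ that $\textit{acquire}(Q'')$ has neither executed its \textbf{false}-branch reset nor been followed by a \textit{release} of a set containing $X_j$, so $X_j$ is contended by $p_t$ after the prefix of $\pi$ ending at $e$. Given this, progress follows: if $\textit{acquire}(Q)$ by $p_i$ returned \textbf{false}, the witnessing read at line~\ref{line:lock} of some $r_{tj}=1$ with $X_j\in Q$, $t\neq i$, shows via the observation that $X_j$ is contended by $p_t\neq p_i$ after a prefix containing the invocation but not the response of this $\textit{acquire}(Q)$ --- contradicting the hypothesis, so it returns \textbf{true}. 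The two \textit{isContended} properties are analogous: if $X_j$ is locked by $p_\ell\neq p_i$ throughout the execution of $\textit{isContended}(X_j)$ then $r_{\ell j}=1$ at the moment $p_i$ reads it at line~\ref{line:isl}, so \textbf{true} is returned; and if \textit{isContended} returns \textbf{true}, the observation again exhibits a $p_t\neq p_i$ contending $X_j$ at a prefix within the operation's interval, so under the hypothesis of the second bullet \textbf{false} must have been returned. I expect the main obstacle to be bookkeeping rather than depth: carefully matching register-level events to the definitional notions of ``holds a lock'', ``contended'', and ``during the execution of'', and in particular ruling out a stale $1$ left by a completed operation --- which is exactly what the no-intervening-$0$-write observation handles.
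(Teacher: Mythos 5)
Your proposal is correct and follows essentially the same route as the paper: a flag-based mutual-exclusion argument on the single-writer registers $r_{ij}$, a direct inspection of the write-block/read-block structure of \textit{acquire} for the single multi-RAW bound, and straightforward reasoning for wait-freedom, progress, and the \textit{isContended} guarantees. Your version is in fact slightly more careful than the paper's (the explicit cycle $w_i\to\rho_i\to w_k\to\rho_k\to w_i$ and the no-intervening-$0$-write observation ruling out stale flags), but the underlying ideas are identical.
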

\begin{proof}
Denote by $L$ the shared object implemented by
Algorithm~\ref{alg:mul}. 
The operations exported by $L$ are \textit{wait-free} i.e. every operation returns a value to the invoking process after a finite number of its own steps. This follows from the fact that the implementation of \textit{acquire}, \textit{release} and \textit{isContended} described by Algorithm~\ref{alg:mul} contains no unbounded loops or waiting statements. 

Assume, by contradiction, that $L$ does not provide mutual-exclusion:
there exists an execution $\pi$ after which processes $p_i$ and $p_k$
\emph{hold} a lock on the same object, say $X_j$. 
In order to hold the lock on $X_j$, process $p_i$ writes $1$ to register
$r_{ij}$ and then checks if any other process $p_k$ has written $1$ to $r_{kj}$.  
Since the corresponding operation {\it acquire(Q)}, $X_j \in Q$
invoked by $p_i$ returns {\it true}, $p_i$ read $0$ in $r_{kj}$ in Line~\ref{line:lock}. 
But then $p_k$ also writes $1$ to $r_{kj}$ and later finds that
$r_{ij}$ is 1. 
This is because $p_k$ can write $1$ to $r_{kj}$ only after the read of
$r_{kj}$ returned $0$ to $p_i$ which is preceded by the write of $1$ to
$r_{ij}$. 
Hence, there exists an object $X_j$ such that $r_{ij}=1;i\neq k$, 
but the conditional in Line~\ref{line:lock} returns {\it true} to process $p_k$--- a contradiction. 

$L$ also ensures progress. 
This is trivial since some process $p_i$ wishing to hold a lock on
$X_j$ in an execution $\pi$ invokes \textit{acquire($Q$)}, $X_j \in Q$ which writes $1$ to register $r_{ij}$ 
and then checks if any other process $p_k$ has written to register
$r_{kj}$. If no other process contends on $X_j$ during the execution of \textit{acquire($Q$)},  
the conditional on Line~\ref{line:lock} returns {\it true} 
and respectively, \textit{acquire($Q$)} must return \emph{true}. 

Let $\pi$ be any execution that contains $\textit{isContended}(X_j)$
executed by $p_i$.
If no process contends on $X_j$ during the execution of
$\textit{isContended}(X_j)$ in $\pi$, $p_i$ finds $0$ in
$r_{tj}=0,~\forall t$ and the conditional in Line~\ref{line:isl} returns \emph{false}. 
However, if $X_j$ is locked  during the execution of
$\textit{isContended}(X_j)$ in $\pi$, at any point of the execution 
there exists $t$ such that $r_{tj}=1$. 
Thus,  the conditional in Line~\ref{line:isl} returns \emph{true}
and, respectively, 
$\textit{isContended}(X_j)$ must return $\true$.

The implementation of $\textit{isContended}(X_j)$ only reads base
objects. 
The implementation of $\textit{acquire}(Q)$ first writes to a series of base objects and then reads a series of base objects incurring a single multi-RAW.   
The implementation of $\textit{release}(Q)$ only writes to base objects.
\end{proof}

\subsection{Progressive implementation with single multi-RAW}
\label{app:raw}
Algorithm~\ref{alg:try} describes the algorithms for tm-operations of a progressive opaque STM incurring at most a single multi-RAW per transaction. 

Each time a transaction $T_k$ performs a read of a t-object $X_i$, 
it reads $v_i$, adds $X_i$ to its read set and checks if the t-objects in the current read set of $T_k$ 
have not been updated since $T_k$ has read them and additionally 
checks if no object in the current read set is \emph{locked} by 
an updating transaction.
If some object in the read set has been modified or is 
locked, the transaction is forcefully aborted.
Otherwise, $T_k$ returns the value read in $v_i$.    

Each time $T_k$ performs a write to a t-object $X_i$, it adds $X_i$ to its  write set and returns \emph{ok}. 

The implementation of $\TryC_k()$ uses the $\multitrylock$ primitive described in Section~\ref{app:ml}. For every updating transaction $T_k$, $\TryC_k()$ invokes $L.\textit{acquire}(\Wset(T_k))$, where $L$ denotes the multi-trylock implemented by Algorithm~\ref{alg:mul}.
If it returns $\true$, $\TryC_k()$ returns $C_k$,
otherwise it returns $A_k$. Read-only transactions simply returns $C_k$.
\begin{algorithm}[t]
\caption{Progressive STM with one multi-RAW: the implementation of $T_k$ executed by $p_i$}\label{alg:try}
  \begin{algorithmic}[1]
  	\begin{multicols}{2}
  	{\footnotesize
	\Part{Shared variables}{
		\State $v_j$, for each t-object $X_j$
		\State $L$, a multi-trylock object
	}\EndPart	
	\Statex
	\Part{\Read$_k(X_j)$}{
		\State $\textit{ov}_j := \Read(v_j)$ \label{line:read2}
		\State $\Rset(T_k) := \Rset(T_k)\cup\{X_j\}$ \label{line:rset}
		\If{isAbortable()} \label{line:abort0}
				\State return $A_k$
		\EndIf
		\State return the value of $\textit{ov}_j$
   	 }\EndPart
	\Statex
	\Part{\Write$_k(X_j,v)$}{
		\If{$X_j \not\in \Wset(T_k)$}
			\State $\textit{nv}_j := v$
			\State $\Wset(T_k) := \Wset(T_k)\cup\{X_j\}$
			\State return $\ok_k$
		\EndIf
   	}\EndPart
	\Statex
	\Part{\TryA$_k$()}{
		\State return $A_k$
	 }\EndPart
	\Statex
	\Statex
	\Statex
	\Statex
	\Statex
	\Statex
	\Part{\TryC$_k$()}{
		\If{$|\Wset(T_k)|= \emptyset$}
			\State return $C_k$ \label{line:return}
		\EndIf
		\State locked $:= L.\textit{acquire}(\Wset(T_k))$\label{line:acq} 
		\If{not locked} \label{line:abort2} 
	 		\State return $A_k$
	 	\EndIf
		\If{isAbortable()} \label{line:abort3}
			\State $L.\textit{release}(\Wset(T_k))$ 
			\State return $A_k$
		\EndIf
		\ForAll{$X_j \in \Wset(T_k)$}
	 		 \State $\Write(v_j,(\textit{nv}_j,k))$ \label{line:write}
	 	\EndFor		
		\State $L.\textit{release}(\Wset(T_k))$   		
   		\State return $C_k$
   	 }\EndPart		
	 \Statex
 	\Part{Function: {\bf isAbortable()}}{
		\If{$\exists X_j \in \Rset(T_k):L.\textit{isContended}(X_j)$}
			\State return $\true$
		\EndIf
		\If{isInvalid()} \label{line:valid}
			\State return $\true$
		\EndIf
		\State return $\false$
	}\EndPart
	\Statex	
	\Part{Function: {\bf isInvalid()}}{
		\If{$\exists X_j \in Rset(T_k)$:$\textit{ov}_j\neq \Read(v_j)$}
			\State return $\true$
		\EndIf
		\State return $\false$
	}\EndPart	
	}\end{multicols}
  \end{algorithmic}
\end{algorithm}
\subsubsection{Proof of opacity}
\label{sec:op1}
Let $E$ by any execution of the TM implemented by
Algorithm~\ref{alg:try}. 
Recall that we assume every t-object was initialized by some fictitious 
committed transaction $T_0$ that precedes $E$.
Let $<_E$ denote a total-order on events in $E$.

\paragraph{Linearization points.}

Let $H$ denote a linearization of $E|_{TM}$ constructed by selecting
\emph{linearization points} of tm-operations performed in $E|_{TM}$.
The linearization point of a tm-operation $op$, denoted as $\ell_{op}$ is associated with  
a base object event or a tm-event performed during 
the lifetime of $op$ using the following procedure. 

First, we obtain a completion of $E|_{TM}$ by removing some pending
invocations and adding responses to the remaining pending invocations
involving a transaction $T_k$ as follows:
\begin{itemize}
\item Every incomplete \emph{read$_k$}, \emph{write$_k$} or $\TryA_k$ operation is removed from $E|_{TM}$
\item For every pending \emph{tryC$_k$}, if some base object $v_j$ was  written (Line~\ref{line:write}), 
the response $C_K$ is added to the end of $E|_{TM}$, else $A_k$ is added to the end of $E|_{TM}$
\end{itemize}
Now a linearization $H$ of $E|_{TM}$ is obtained by associating linearization points to
tm-operations in the obtained completion of $E|_{TM}$ as follows:
\begin{itemize}
\item For every tm-read $op_k$ that returns a non-A$_k$ value, $\ell_{op_k}$ is chosen as the event in Line~\ref{line:read2} of Algorithm~\ref{alg:try}, else, $\ell_{op_k}$ is chosen as invocation event of $op_k$
\item For every tm-write or tm-abort  $op_k$ that returns, $\ell_{op_k}$ is chosen as the invocation event of $op_k$
\item For every $op_k=tryC_k$ that returns $C_k$ such that $\Wset(T_k)
  \neq \emptyset$, $\ell_{op_k}$ is associated with the successful
  acquisition of the lock on $\Wset(T_k)$ (at the end of Line~\ref{line:acq}), else if $op_k$ returns $A_k$, $\ell_{op_k}$ is associated with the invocation event of $op_k$
\item For every $op_k=tryC_k$ that returns $C_k$ such that $\Wset(T_k) = \emptyset$, $\ell_{op_k}$ is associated with Line~\ref{line:return}
\end{itemize}
$<_H$ denotes a total-order on tm-operations in the complete sequential history $H$.

\paragraph{Serialization points.} 

The serialization of a transaction $T_j$, denoted as $\delta_{T_j}$ is
associated with the linearization point of a tm-operation 
performed within the lifetime of the transaction.

We obtain a t-complete history ${\bar H}$ from $H$ as follows: 
\begin{itemize}
\item For every transaction $T_k$ in $H$ that is live, we insert $\textit{tryC}_k\cdot A_k$ immediately after the last event of $T_k$ in $H$. 
\item For every aborted transaction $T_k$ in $H$, we remove each write operation in $T_k$ with the matching response
\end{itemize}
${\bar H}$ is thus a t-complete sequential history that contains only updating committed transactions and read-only transactions since every aborted transaction is reduced to its read-prefix.
A serialization $S$ is obtained by associating serialization points to transactions in ${\bar H}$ as follows:
\begin{itemize}
\item If $T_k$ is an update transaction that commits, then $\delta_{T_k}$ is $\ell_{tryC_k}$
\item If $T_k$ is a read-only or aborted transaction,
then $\delta_{T_k}$ is assigned to the linearization point of the last tm-read that returned a non-A$_k$ value in $T_k$
\end{itemize}
$<_S$ denotes a total-order on transactions in the t-sequential history $S$.
\begin{lemma}
\label{lm:total}
If $T_i \prec_{H}^{DU} T_j$, then $T_i <_S T_j$
\end{lemma}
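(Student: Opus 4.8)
The plan is to unfold the definitions of $\prec_H^{DU}$ and of the serialization $S$, and then argue that the serialization point $\delta_{T_i}$ must strictly precede $\delta_{T_j}$ in $<_E$. Recall that $T_i \prec_H^{DU} T_j$ means there is a t-object $X$ with $X \in \Rset(T_i) \cap \Wset(T_j)$, that $T_j$ commits, and that the response of $\Read_i(X)$ precedes the invocation of $\TryC_j()$ in $H$. Since $T_j$ is an updating committed transaction, by construction $\delta_{T_j} = \ell_{\TryC_j}$, which is associated with the successful lock acquisition on $\Wset(T_j)$ at the end of Line~\ref{line:acq}; in particular this event occurs \emph{after} the invocation of $\TryC_j()$. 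So it suffices to show that $\delta_{T_i}$ occurs no later than the response of $\Read_i(X)$ — equivalently, at or before the linearization point $\ell_{\Read_i(X)}$, which is the read of $v_j$ in Line~\ref{line:read2} — since that point precedes the invocation of $\TryC_j()$, which in turn precedes $\delta_{T_j}$.

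The next step is a short case analysis on the type of $T_i$, following the two bullets in the definition of $\delta_{T_k}$. If $T_i$ is an updating committed transaction, then $\delta_{T_i} = \ell_{\TryC_i}$ is the successful lock acquisition on $\Wset(T_i)$. I would argue that this happens before $\ell_{\Read_i(X)}$: in Algorithm~\ref{alg:try}, $\Read_i(X)$ is part of $T_i$'s read prefix and is issued (and linearized at Line~\ref{line:read2}) strictly before $T_i$ ever invokes $\TryC_i$, hence strictly before the lock acquisition inside $\TryC_i$. If instead $T_i$ is read-only or aborted, then $\delta_{T_i}$ is the linearization point of the \emph{last} tm-read of $T_i$ that returned a non-$A_i$ value. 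Here I need that $\Read_i(X)$ returns a non-$A_i$ value: this follows because $T_i \prec_H^{DU} T_j$ presupposes (via the definition and the construction of $\bar H$, where aborted transactions are reduced to their read prefix) that $\Read_i(X)$ appears in $\bar H$ with a genuine value. Since $\delta_{T_i}$ is the linearization point of the \emph{last} such read, and linearization points of a single transaction's operations appear in program order, $\delta_{T_i}$ occurs at or after $\ell_{\Read_i(X)}$ — but that is the wrong direction, so the more careful claim is: $\delta_{T_i} = \ell_{\Read_i(X)}$ only if $\Read_i(X)$ is the last non-aborting read, and otherwise $\delta_{T_i}$ is strictly later; in either sub-case I still must place $\delta_{T_i}$ before $\delta_{T_j}$.

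This is where the main obstacle lies: in the read-only/aborted case, $\delta_{T_i}$ may be the linearization point of a \emph{later} read $\Read_i(Y)$, not of $\Read_i(X)$, so I cannot simply compare against $\ell_{\Read_i(X)}$. The fix is to use the real-time structure more carefully. The definition of $\prec_H^{DU}$ says the \emph{response} of $\Read_i(X)$ precedes the \emph{invocation} of $\TryC_j()$ in $H$, and $H$ is a linearization respecting the real-time order of $E$. If $T_i$ has a later read $\Read_i(Y)$, then the whole read prefix of $T_i$ up to and including $\delta_{T_i}$ completes before $T_i$ finishes; I would argue that $\delta_{T_i}$ still precedes the lock-acquisition event of $T_j$ by invoking the validation mechanism — specifically, the \texttt{isAbortable}/\texttt{isInvalid} checks. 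Intuitively, $T_j$ writes to $v_j$ (Line~\ref{line:write}) only while holding the lock on $\Wset(T_j) \ni X$, acquired at $\delta_{T_j}$; if any read of $T_i$ were linearized after $\delta_{T_j}$, then either it would read the new value of $v_j$ (contradicting $T_i \prec_H^{DU} T_j$, which requires $\Read_i(X)$ to \emph{precede} $T_j$'s update) or $T_i$'s validation would detect the contention/invalidation and abort $T_i$ entirely, removing the $\prec_H^{DU}$ relation. Formalizing this last point — that the linearization points of all of $T_i$'s surviving reads lie before $\delta_{T_j}$, using the lock-based mutual exclusion of Algorithm~\ref{alg:mul} (Theorem~\ref{th:lock}) and the fact that a committed or read-prefix transaction passes validation — is the crux of the argument; once it is in place, $\delta_{T_i} <_E \delta_{T_j}$, hence $T_i <_S T_j$, follows immediately.
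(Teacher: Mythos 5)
There is a genuine gap, and it sits in the updating-transaction case. Your stated reduction --- ``it suffices to show that $\delta_{T_i}$ occurs no later than the response of $\Read_i(X)$'' --- cannot work when $T_i$ is an updating committed transaction, because by the very definition of the serialization points $\delta_{T_i}=\ell_{\TryC_i}$ is the lock acquisition \emph{inside} $\TryC_i$, which comes after every read of $T_i$, in particular after $\Read_i(X)$. Your own justification in that case (``$\Read_i(X)$ is linearized strictly before the lock acquisition inside $\TryC_i$'') establishes $\ell_{\Read_i(X)} <_E \delta_{T_i}$, which is the opposite of what your reduction needs; the deferred-update hypothesis only orders the response of $\Read_i(X)$ before the \emph{invocation} of $\TryC_j$, and says nothing about the relative order of the two lock acquisitions $\ell_{\TryC_i}$ and $\ell_{\TryC_j}$ --- $T_i$ and $T_j$ may perfectly well be concurrent with $T_j$ acquiring its locks first. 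So no purely real-time chaining can close this case. The validation mechanism you invoke at the end is indeed the missing ingredient, but you only articulate it for the \emph{reads} of $T_i$; for updating $T_i$ the argument must go through the $\textit{isAbortable}()$ check that $\TryC_i$ itself performs after acquiring its locks (Line~\ref{line:abort3}): if $\ell_{\TryC_j} <_E \ell_{\TryC_i}$, then at the time $T_i$ runs that check either $T_j$ still holds its lock on $X\in\Rset(T_i)$ (so $\textit{isContended}(X)$ returns true) or $T_j$ has already written the new value to $v_X$ and released (so $\textit{isInvalid}()$ fires); either way $\TryC_i$ returns $A_i$, contradicting that $T_i$ commits.

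The paper's proof is organized exactly as a contradiction of this shape in both cases: assume $\delta_{T_j} <_E \delta_{T_i}$, and show that the operation sitting at $\delta_{T_i}$ --- the last non-aborting read when $T_i$ is read-only or aborted, and $\TryC_i$ when $T_i$ is updating --- would have been forced to abort by the lock/validation checks, since $T_j$'s lock acquisition and write to $v_X$ would then precede that check. Your third paragraph has the right idea for the read-only/aborted case (and correctly identifies why comparing against $\ell_{\Read_i(X)}$ alone is insufficient there), but it is left as ``the crux'' rather than carried out, and the updating case needs to be redone along the same contradiction-based lines rather than by the real-time ordering you propose.
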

\begin{proof}
Recall that $T_i$ precedes $T_j$ in the \emph{deferred-update} order if there exists
$X\in Rset(T_i)\cap Wset(T_j)$, $T_j$ has committed, such that the response
of \textit{read}$_i(X)$ precedes the invocation of $tryC_j()$ in $H$. Thus, $\ell_{\textit{read}_i(X)} <_E \ell_{tryC_j}$.

Consider the histories depicted in Figure~\ref{fig:dag} where $T_i$
precedes $T_j$ in the deferred-update order ($tryC_k(X_j)$ 
denotes a \textit{tryC$_k$} such that $X_j \in \Wset(T_k)$).
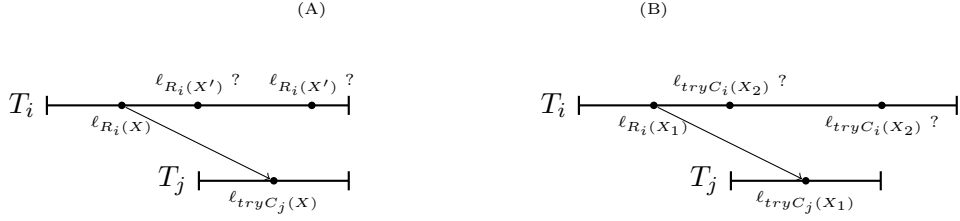
\begin{figure}
\begin{center}
\begin{tikzpicture}
\node (l1) at (3.5,1.5) [] {};
\node (l2) at (8,1.5) [] {};

\node (r1) at (1,0) [fill,circle,inner sep=1pt] {};
\node (l3) at (2,0) [fill,circle,inner sep=1pt] {};
\node (l4) at (3.5,0) [fill,circle,inner sep=1pt] {};

\node (c2) at (3,-1) [fill,circle,inner sep=1pt] {};

\draw (l1) node [below] {\tiny {(A)}};
\draw (l2) node [below] {\tiny {(B)}};

\draw (r1) node [below] {\tiny {$\ell_{R_i(X)}$}}
   (l3) node [above] {\tiny {$\ell_{R_i(X')}$ ?}}
	(l4) node [above] {\tiny {$\ell_{R_i(X')}$ ?}};

\draw 
   (c2) node [below] {\tiny {$\ell_{tryC_j(X)}$}};
      
\begin{scope}
\draw[->,thin] (r1) to (c2);
\end{scope}

\begin{scope}   
\draw [|-|,thick] (0,0) node[left] {$T_i$} to (4,0);
\end{scope}
\begin{scope}
\draw [|-|,thick] (2,-1) node[left] {$T_j$} to (4,-1); 
\end{scope}  
\node (r1) at (8,0) [fill,circle,inner sep=1pt] {};

\node (c11) at (9,0) [fill,circle,inner sep=1pt] {};
\node (c12) at (11,0) [fill,circle,inner sep=1pt] {};
\node (c2) at (10,-1) [fill,circle,inner sep=1pt] {};

\draw (r1) node [below] {\tiny {$\ell_{R_i(X_1)}$}};

\draw (c11) node [above] {\tiny {$\ell_{tryC_i(X_2)}$ ?}}
	(c12) node [below] {\tiny {$\ell_{tryC_i(X_2)}$ ?}}
   (c2) node [below] {\tiny {$\ell_{tryC_j(X_1)}$}};
      
\begin{scope}
\draw[->,thin] (r1) to (c2);
\end{scope}
\begin{scope}   
\draw [|-|,thick] (7,0) node[left] {$T_i$} to (12,0);
\end{scope}
\begin{scope}
\draw [|-|,thick] (9,-1) node[left] {$T_j$} to (11,-1); 
\end{scope}  
\end{tikzpicture}
\end{center}
\caption{Assignment of serialization points respects the deferred-update order}
\label{fig:dag}
\end{figure}
\begin{enumerate}
\item[(1)]
Consider the history depicted in Figure~\ref{fig:dag}(A) where $T_i$
is a read-only transaction and $T_j$ 
is an updating transaction that returns $C_j$. 
Assume the contrary that $T_i \prec_{H}^{DU} T_j$, but $T_j <_S T_i$,
which implies that 
$\delta_{T_j} <_E \delta_{T_i}$ i.e. $\ell_{tryC_j(X)}$ precedes the
linearization point 
of the last tm-read in $T_i$ that returns a non-A$_i$ value (say
\textit{read}$_i(X')$). 
Thus, successful lock acquisition on $X$ by $T_j$ in
Line~\ref{line:acq} 
precedes the read of the base object associated with $X'$ by $T_i$ in Line~\ref{line:read2}. 

\textit{read}$_i(X')$ checks if any object in $\Rset(T_i)$ is
locked by a concurrent transaction, then performs read-validation (Line~\ref{line:abort0}). 
Consider the following possible sequence of events: $T_j$ acquires the
lock on $X$, updates $X$ to shared-memory, 
$T_i$ reads the base object associated with $X'$, $T_j$ releases the
lock and finally $T_i$ performs 
the check in Line~\ref{line:abort0}. $read_i(X')$ is forced to return $A_i$ because $X$ has been invalidated. 

Else if $T_j$ acquires the lock on $X$, updates $X$ to shared-memory,
$T_i$ reads the base object associated with $X'$, 
$T_i$ performs the check in Line~\ref{line:abort0} and finally $T_j$
releases the lock on $X$. Again, $read_i(X')$ returns $A_i$ since
$T_j$ is holding a lock on $X \in Rset(T_i)$---contradiction.

Hence, the only possibility is that the last successful tm-read ($read_i(X')$) in $T_i$ is linearized before $tryC_j(X)$, which implies that $\delta_{T_i} <_E \delta_{T_j}$.

\item[(2)]
Suppose that $T_i$ is an updating transaction as shown in Figure~\ref{fig:dag}(B), then $\ell_{tryC_i(X_2)}$ and $\ell_{tryC_j(X_1)}$ are assigned to Line~\ref{line:acq} of Algorithm~\ref{alg:try} when the locks are acquired on $X_2$ and $X_1$ respectively. Assume the contrary that $T_i$ precedes $T_j$ in deferred-update order, but $\delta_{T_j} <_E \delta_{T_i}$, then $\ell_{tryC_j} <_E \ell_{tryC_i}$. A similar argument to the above leads to a contradiction since \emph{tryC} performs the same sequence of checks as the tm-read (Line~\ref{line:abort3}).
\end{enumerate}
\end{proof}
\begin{lemma}
\label{lm:seq}
If $T_i \prec_{H}T_j$, then $T_i <_S T_j$
\end{lemma}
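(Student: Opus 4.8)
The plan is to show that the serialization points $\delta_{T_i}$ and $\delta_{T_j}$ respect the real-time order $\prec_H$. Recall that $T_i \prec_H T_j$ means $T_i$ is complete (committed or aborted) and the last event of $T_i$ precedes the first event of $T_j$ in $H$. Since $H$ is a linearization whose order is induced by $<_E$ on the chosen linearization points, it suffices to argue that $\delta_{T_i} <_E \delta_{T_j}$, i.e. that the serialization point of $T_i$ occurs in $E$ before that of $T_j$.

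First I would split into cases according to the type of $T_i$. If $T_i$ is an updating transaction that commits, then $\delta_{T_i} = \ell_{\TryC_i}$, which is chosen at the successful lock acquisition on $\Wset(T_i)$ in Line~\ref{line:acq}; this event lies within the interval of $T_i$ and hence, by $T_i \prec_H T_j$, strictly precedes every event of $T_j$, in particular $\delta_{T_j}$. If $T_i$ is read-only or aborted, then $\delta_{T_i}$ is the linearization point of the last tm-read in $T_i$ that returned a non-$A_i$ value, namely the $\Read(v_j)$ event in Line~\ref{line:read2} of that operation; again this event occurs within the interval of $T_i$, so it precedes the first event of $T_j$, and therefore precedes $\delta_{T_j}$ regardless of which event inside $T_j$'s interval is chosen as $\delta_{T_j}$. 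The only subtlety is that for an aborted or read-only transaction with \emph{no} successful tm-read, the serialization point is not defined this way; but such a transaction reads no t-object, and — after the reduction to $\bar H$ and the insertion of $\TryC_k \cdot A_k$ — can be handled by placing $\delta_{T_i}$ at its (inserted) completing event or by observing it plays no role in legality, so it can be ordered consistently with $\prec_H$.

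The main point to get right is simply that in every case the chosen serialization point of a transaction lies \emph{within the execution interval} of that transaction in $E$; once this is established, the claim follows immediately from the definition of $\prec_H$ (the whole interval of $T_i$ precedes the whole interval of $T_j$) together with the fact that $<_S$ is defined by ordering transactions according to $<_E$ on their serialization points. I do not expect a genuine obstacle here — unlike Lemma~\ref{lm:total}, no interaction between concurrent transactions needs to be analyzed; this lemma is essentially a bookkeeping check that the linearization-point assignment respects real time, which it does because each serialization point is taken from an event performed during the transaction's own lifetime.
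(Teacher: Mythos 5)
Your proposal is correct and follows essentially the same route as the paper's proof: both arguments reduce to the observation that every serialization point is chosen as (the linearization point of) an event occurring within the transaction's own interval in $E$, so real-time precedence of intervals immediately yields $\delta_{T_i} <_E \delta_{T_j}$ and hence $T_i <_S T_j$. Your case analysis and the remark about transactions with no successful tm-read are just a more explicit spelling-out of what the paper states in one line.
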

\begin{proof}
This follows from the fact that for a given transaction, its
serialization point is chosen within the lifetime of the transaction
implying
if $T_i \prec_{H} T_j$, then $\delta_{T_i} <_{E} \delta_{T_j}$ $\Longrightarrow$ $T_i <_S T_j$ 
\end{proof}
\begin{lemma}
\label{lm:serial}
If $T_i {}_{\prec_{H}}^{X} T_j$, then $T_i <_S T_j$
\end{lemma}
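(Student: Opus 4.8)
The plan is to trace both serialization points down to base-object events in $E$ and compare them with respect to the total order $<_E$ on events in $E$. Since $T_i$ has committed and $X\in\Wset(T_i)$, the transaction $T_i$ is updating and committed, so by the construction of $S$ we have $\delta_{T_i}=\ell_{\TryC_i}$, the event of the successful acquisition of the lock on $\Wset(T_i)$ at the end of Line~\ref{line:acq} in the execution of $\TryC_i$. First I would show $\delta_{T_i}<_E\ell_{\Read_j(X)}$. Because $\Read_j(X)$ returns a non-$A_j$ value, its linearization point is the read of the base object $v_X$ in Line~\ref{line:read2}. Because this read returns the value written by $T_i$ and the implementation tags every base-object update with the (unique) identifier of the writing transaction, the observed content of $v_X$ is precisely the $(\textit{value},\textit{id})$ pair installed by the single $\Write(v_X,(\cdot,i))$ that $T_i$ performs in the for-loop of $\TryC_i$ (Line~\ref{line:write}). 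By atomicity of $v_X$, the read $\ell_{\Read_j(X)}$ follows that write in $<_E$; and that write follows the successful lock acquisition within the code of $\TryC_i$, so $\delta_{T_i}=\ell_{\TryC_i}<_E\ell_{\Read_j(X)}$.

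Next I would relate $\ell_{\Read_j(X)}$ to $\delta_{T_j}$ by a case analysis on $T_j$. If $T_j$ is an updating transaction that commits, then $\delta_{T_j}=\ell_{\TryC_j}$; since the operations of $T_j$ are executed sequentially, $\Read_j(X)$ completes before $\TryC_j$ is invoked, and as linearization points lie within operation lifetimes, $\ell_{\Read_j(X)}<_E\ell_{\TryC_j}=\delta_{T_j}$. Otherwise $T_j$ is read-only or aborted (possibly after the completion step of $\bar H$ that appends $\TryC_j\cdot A_j$ to a live $T_j$), and $\delta_{T_j}$ is the linearization point of the last tm-read of $T_j$ that returns a non-$A_j$ value; since $\Read_j(X)$ itself returns $v\neq A_j$ and the tm-reads of $T_j$ are executed sequentially, $\ell_{\Read_j(X)}\le_E\delta_{T_j}$. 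In either case $\delta_{T_i}<_E\ell_{\Read_j(X)}\le_E\delta_{T_j}$, hence $T_i<_ST_j$.

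The main obstacle is the first step: identifying exactly which base-object write the read of $v_X$ by $T_j$ observes. The subtlety is that the value returned by $\Read_j(X)$ need not be syntactically unique at the history level, so the argument must go through the $(\textit{value},\textit{id})$ pair actually stored in $v_X$ --- transaction identifiers are unique, $T_i$ writes $v_X$ exactly once because $X$ occurs only once in the set $\Wset(T_i)$, and no transaction other than $T_i$ ever writes a pair carrying identifier $i$; atomicity of $v_X$ then forces the observed write to be $T_i$'s and to occur after $T_i$'s successful lock acquisition. Everything else is routine bookkeeping about where linearization and serialization points were placed in the construction of $H$, $\bar H$, and $S$.
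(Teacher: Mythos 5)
Your proof is correct and is essentially the paper's argument in contrapositive form: the paper assumes $T_j <_S T_i$ and derives that the base-object read of $v_X$ in $\Read_j(X)$ would precede $T_i$'s write to $v_X$ (contradicting that $\Read_j(X)$ returns $T_i$'s value), using the same case split on whether $T_j$ is read-only or updating and the same placement of linearization points. Your direct chain $\delta_{T_i}=\ell_{\TryC_i}<_E\ell_{\Read_j(X)}\le_E\delta_{T_j}$ uses exactly the same facts, so no further comparison is needed.
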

\begin{proof}
Assume the contrary, i.e. there exists a \textit{read}$_j(X)$, $X \in
Rset(T_j)\cap Wset(T_i)$ that returns the value of $X$ updated in 
\textit{write}$_i(X,value)$ and $T_j <_S T_i$. $T_i$ is an updating
committing transaction, hence $\delta_{T_i}=\ell_{tryC_i}$. 

Consider two cases:
\begin{enumerate}

\item[(1)]
Suppose that $T_j$ is a read-only transaction. Thus, $\delta_{T_j}$ is
assigned to the last tm-read that returns a non-A$_j$ value (say
$read_j(X')$), 
whose linearization point precedes $\ell_{tryC_i}$. This implies that
the read of the base object associated with $X'$ by $T_j$ in
Line~\ref{line:read2} 
precedes the successful lock acquisition on $X$ by $T_i$ in Line~\ref{line:acq}. 
Thus, the write to the base object associated with $X$ performed by
$\TryC_i()$ in line~\ref{line:write} is executed \emph{after} the read of
the base object performed by $\Read_j(X)$ in Line~\ref{line:read2}---a contradiction.

\item[(2)]
Suppose that $T_j$ is an updating transaction. Then, $\ell_{tryC_j}
<_E \ell_{tryC_i}$. 
Again, this implies that the read of the base object in
Line~\ref{line:read2} executed by $\Read_j(X)$ precedes to the write
to the base object performed by $\TryC_i()$---a contradiction.
\end{enumerate}
\end{proof}
\begin{lemma}
\label{lm:readfrom}
$S$ is legal
\end{lemma}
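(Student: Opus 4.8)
The plan is to take an arbitrary tm-read $\Read_j(X)$ that occurs in $S$ and returns a value $v\neq A_j$, and show that $v$ is the latest written value of $X$ in $S$. Since we consider only canonic transactions, $T_j$ does not write to $X$ before $\Read_j(X)$, so legality of this read reduces to showing that, among the committed transactions that write $X$ and precede $T_j$ in $<_S$, the last one wrote $v$ (with the initializing transaction $T_0$ covering the case where there is no such transaction). By construction $\ell_{\Read_j(X)}$ is the read of the base object $v_j$ in Line~\ref{line:read2}, and at that instant $v_j$ holds a pair $(v,i)$. First I would note that the only writes to $v_j$ are performed in Line~\ref{line:write}, by committed updating transactions, each stamping the pair with its own --- unique --- identifier; hence there is a unique transaction $T_i$ (with $T_i=T_0$ when $i=0$) that wrote $(v,i)$ to $v_j$. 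This $T_i$ is committed, has $X\in\Wset(T_i)$, and $v$ is the value of its last $\Write_i(X,\cdot)$. Consequently $T_i\ {}_{\prec_{H}}^{X}\ T_j$, and Lemma~\ref{lm:serial} gives $T_i <_S T_j$.

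The heart of the proof is to rule out any committed transaction $T_k$ with $X\in\Wset(T_k)$ and $T_i <_S T_k <_S T_j$. Assume such a $T_k$. As $T_k$ is updating and committed, $\delta_{T_k}=\ell_{\TryC_k}$ is the successful acquisition by $T_k$ of the lock on $\Wset(T_k)\ni X$ at the end of Line~\ref{line:acq}; since $T_k$ commits it reaches Line~\ref{line:write}, so it holds the lock on $X$ continuously from $\ell_{\TryC_k}$, writes a pair stamped $k$ to $v_j$ in Line~\ref{line:write}, and only then releases. I would split on the order of $\ell_{\Read_j(X)}$ and $\ell_{\TryC_k}$ in $E$. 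If $\ell_{\Read_j(X)} <_E \ell_{\TryC_k}$, then in $H$ the response of $\Read_j(X)$ precedes the invocation of $\TryC_k$; as $X\in\Rset(T_j)\cap\Wset(T_k)$ and $T_k$ commits, this is precisely $T_j\prec_{H}^{DU} T_k$, so Lemma~\ref{lm:total} gives $T_j <_S T_k$, contradicting $T_k <_S T_j$.

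In the remaining case $\ell_{\TryC_k} <_E \ell_{\Read_j(X)}$, I would use uniqueness of identifiers again: since $\Read_j(X)$ reads the pair $(v,i)$ and only $T_i$ ever writes a pair stamped $i$ (and it does so before $\ell_{\TryC_k}$, by mutual exclusion on $X$), while $T_k$'s write stamps $k\neq i$, the read of $v_j$ by $T_j$ must precede $T_k$'s Line~\ref{line:write} write. Hence $\ell_{\Read_j(X)}$ lies in the interval during which $T_k$ holds the lock on $X$. After reading $v_j$, $\Read_j(X)$ adds $X$ to $\Rset(T_j)$ (Line~\ref{line:rset}) and runs \textit{isAbortable} (Line~\ref{line:abort0}), which inspects $X$: if $T_k$ still holds the lock on $X$ throughout that inspection, then by the multi-trylock contract (Theorem~\ref{th:lock}) $L.\textit{isContended}(X)$ returns $\true$; otherwise $T_k$ has by then executed its Line~\ref{line:write} write, so the revalidation in \textit{isInvalid} (Line~\ref{line:valid}) finds $v_j\neq(v,i)$ and returns $\true$. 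Either way \textit{isAbortable} returns $\true$ and $\Read_j(X)$ returns $A_j$, contradicting that it returns $v$. Hence no such $T_k$ exists, so $T_i$ is the $<_S$-latest committed writer of $X$ preceding $T_j$ and it wrote $v$; the read is legal, and since $X$ and $\Read_j(X)$ were arbitrary, $S$ is legal.

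I expect the main obstacle to be exactly this last timing case: making precise, from the multi-trylock specification (a locked object is reported contended throughout the complete execution of \textit{isContended}, or else has already been overwritten), that $\Read_j(X)$ cannot slip a stale read past validation while a conflicting transaction $T_k$ is committing. The argument also relies on canonicity --- so that no t-object is read twice and read sets contain distinct objects --- and on the global uniqueness of transaction identifiers, which together pin down exactly which write produced the value returned by a tm-read.
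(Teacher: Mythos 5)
Your proof is correct and follows essentially the same route as the paper's: reduce legality to the non-existence of a committed writer $T_k$ of $X$ with $T_i <_S T_k <_S T_j$, then split on the order of $\ell_{\Read_j(X)}$ and $\ell_{\TryC_k}$, using the \textit{isContended}/\textit{isInvalid} checks to force an abort when the lock acquisition precedes the read. The one (harmless, arguably cleaner) deviation is that in the case $\ell_{\Read_j(X)} <_E \ell_{\TryC_k}$ you invoke Lemma~\ref{lm:total} directly via $T_j\prec_H^{DU}T_k$, whereas the paper re-runs that lemma's validation argument on the last tm-read or \textit{tryC} of $T_j$.
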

\begin{proof}
Recall that $S$ is legal if every tm-read of an object $X$ performed
by a transaction $T_i$ returns the response of the latest value
written to $X$ in $S$.
Since we only consider canonic transactions,  the latest value written
to $X$ in $S$ is the value written by the last transaction $T_j$ such
that $T_k$ commits, $T_j<_S T_i$ and $X\in \Wset(T_j)$.

From Lemma~\ref{lm:serial}, we have that for all $T_i$ and $T_j$, if $T_i {}_{\prec_{H}}^{X} T_j$, then
$T_i$ precedes $T_j$ in $S$.
Thus, to prove that $S$ is legal,  it is enough to show that  
if $T_i {}_{\prec_{H}}^{X} T_j$, then there does not exist a
transaction 
$T_k$ that returns $C_k$, $X \in \Wset(T_k)$ such that $T_i <_S T_k <_S T_j$. 

Assume the contrary that 
\begin{itemize}
\item $T_i {}_{\prec_{H}}^{X} T_j$
\item $\exists T_k$, $X \in \Wset(T_k)$,  returns $C_k$ such that $T_i <_S T_k <_S T_j$
\end{itemize}
$T_i$ and $T_k$ are both updating transactions that commit. Thus, 
\begin{center}
($T_i <_S T_k$) $\Longleftrightarrow$ ($\delta_{T_i} <_{E} \delta_{T_k}$) \\
($\delta_{T_i} <_{E} \delta_{T_k}$) $\Longleftrightarrow$ ($\ell_{tryC_i} <_{E} \ell_{tryC_k}$) 
\end{center}
Since, $T_j$ reads the value of $X$ written by $T_i$, one of the following is true 
\begin{center}
$\ell_{tryC_i} <_{E} \ell_{tryC_k} <_{E} \ell_{read_j(X)}$ (or) \\
$\ell_{tryC_i} <_{E} \ell_{read_j(X)} <_{E} \ell_{tryC_k}$
\end{center}
If $\ell_{tryC_k} <_{E} \ell_{read_j(X)}$, then the successful lock acquisition on $X$ by $T_k$ in Line~\ref{line:acq} precedes the read of the base object associated with $X$ by $T_j$ in Line~\ref{line:read2}. 

\textit{read}$_j(X)$ checks if any object in $\Rset(T_j)$ is locked by a concurrent transaction, then performs read-validation (Line~\ref{line:abort0}). Consider the following possible sequence of events: $T_k$ acquires the lock on $X$, updates $X$ to shared-memory, $T_j$ reads the base object associated with $X$, $T_k$ releases the lock and finally $T_j$ performs the check in Line~\ref{line:abort0}. $read_j(X)$ is forced to return $A_j$ because $X \in \Rset(T_j)$ (Line~\ref{line:rset}) and has been invalidated since last reading its value. 

Else if, $T_k$ acquires the lock on $X$, updates $X$ to shared-memory, $T_j$ reads the base object associated with $X$, $T_j$ performs the check in Line~\ref{line:abort0} and finally $T_k$ releases the lock on $X$. Again, $read_j(X)$ returns $A_j$ since $T_k$ is holding a lock on $X \in Rset(T_j)$---contradiction.

Thus, $\ell_{read_j(X)} <_{E} \ell_{tryC_k}$.

Consider two cases:
\begin{enumerate}
\item[(1)]
Suppose that $T_j$ is a read-only transaction. Then, $\delta_{T_j}$ is assigned to the last tm-read performed by $T_j$ that returns a non-A$_j$ value. If $read_j(X)$ is not the last tm-read that returned a non-A$_j$ value, then there exists a $read_j(X')$ such that 
\begin{center}
$\ell_{read_j(X)} <_{E} \ell_{tryC_k} <_E \ell_{read_j(X')}$
\end{center}
\item[(2)]
Suppose that $T_j$ is an updating transaction that commits, then $\delta_{T_j}=\ell_{tryC_j}$ which implies that
\begin{center}
$\ell_{read_j(X)} <_{E} \ell_{tryC_k} <_E \ell_{tryC_j}$
\end{center}
\end{enumerate}
The same argument derived in the proof of Lemma~\ref{lm:total} shows
that both cases lead to a contradiction, i.e., both $read_j(X')$ and
$tryC_j$ 
are forced to return $A_j$---contradiction.     
\end{proof}
\begin{lemma}
\label{lm:progtm}
Algorithm~\ref{alg:try} implements a progressive TM
\end{lemma}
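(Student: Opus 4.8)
The plan is to verify Definition~\ref{def:prog} directly, by examining every point at which Algorithm~\ref{alg:try} can return $A_k$ to a transaction $T$ (run by $p_i$) that did not itself invoke $\TryA_k$. Inspecting the pseudocode, there are exactly three such points: the \textit{isAbortable} test inside a tm-read (Line~\ref{line:abort0}); the failed lock acquisition inside $\TryC_k$ (Line~\ref{line:abort2}, reached when $L.\textit{acquire}(\Wset(T))$ returns $\false$); and the \textit{isAbortable} test inside $\TryC_k$ after the locks have been grabbed (Line~\ref{line:abort3}). For each case I would produce a prefix $H'$ of the history and a transaction $T'\neq T$ that is live in $H'$ and conflicts with $T$ in $H'$, which is exactly what progressiveness demands.

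The contention-based aborts --- Line~\ref{line:abort2} and the \textit{isContended} clause of \textit{isAbortable} --- I would handle uniformly using the guarantees of the multi-trylock established in Theorem~\ref{th:lock}. If $L.\textit{acquire}(\Wset(T))$ returns $\false$, its progress property yields an object $X_j\in\Wset(T)$ that is contended, during that call, by some $p_\ell\neq p_i$ running a transaction $T'$; if $L.\textit{isContended}(X_j)$ returns $\true$ for some $X_j\in\Rset(T)$, its specification likewise yields such a $p_\ell$ contending on $X_j$ during that call. In either case, since a transaction calls $\textit{acquire}$ only on its own write set, $X_j\in\Wset(T')$. Letting $H'$ be the prefix of the history ending at the configuration witnessing the contention, $T$ is live in $H'$ (it aborts strictly later), $T'$ is live in $H'$ (its $\TryC$ is still pending), $T$ and $T'$ are concurrent, and $X_j\in\Dset(T)\cap\Dset(T')$ with $X_j\in\Wset(T')$; hence $T$ and $T'$ conflict on $X_j$ in $H'$.

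The remaining possibility is that \textit{isAbortable} fired because \textit{isInvalid} returned $\true$, i.e.\ $\textit{ov}_j\neq\Read(v_j)$ for some $X_j\in\Rset(T)$, where $\textit{ov}_j$ is the value $T$ read from $v_j$ in Line~\ref{line:read2}. Here I would invoke the invariant that $v_j$ is written only at Line~\ref{line:write}, exclusively by committing updating transactions that hold $X_j$ in their write set, together with the observation that $T$ has not yet executed its own Line~\ref{line:write} for $X_j$ at this point. Consequently some transaction $T'\neq T$ with $X_j\in\Wset(T')$ must have written $v_j$ strictly between $T$'s read at Line~\ref{line:read2} and $T$'s re-read inside \textit{isInvalid}. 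Taking $H'$ to end with that write, $T'$ is inside its $\TryC$ and thus live in $H'$, $T$ is live, the two are concurrent, and they conflict on $X_j\in\Rset(T)\cap\Wset(T')$.

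The case analysis is routine once Theorem~\ref{th:lock} is in hand; the delicate part, and the main obstacle, is choosing the prefix $H'$ so that the ``live in $H'$'' clause is genuinely satisfied. One has to cut the history exactly at the moment the conflicting base-object activity occurs --- so that $T'$'s $\TryC$ is still pending --- and then check that $X_j$ already lies in the relevant set at that cut: in $\Wset(T')$ because the corresponding tm-write precedes $\TryC$ in a canonic transaction, and in the data set of $T$ because a tm-read of $X_j$, pending or completed, already places $X_j$ there. For the validation case there is the additional but minor obligation to rule out a spurious change of $v_j$, which is exactly where the ``$v_j$ is written only at Line~\ref{line:write}'' invariant is used.
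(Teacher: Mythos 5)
Your proposal is correct and follows essentially the same route as the paper's own proof: enumerate the three abort sites (failed validation, a locked read-set object, and a failed \textit{acquire}) and exhibit, for each, a concurrent transaction conflicting on the witnessing t-object. In fact your treatment is more careful than the paper's, which asserts the conflicts informally without explicitly constructing the prefix $H'$ or checking the ``live in $H'$'' clause of Definition~\ref{def:prog} that you rightly identify as the delicate point.
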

\begin{proof}
Every transaction $T_k$ in a TM $M$ whose tm-operations are defined by Algorithm~\ref{alg:try} can be aborted in the following scenarios
\begin{itemize}
\item Read-validation failed in \textit{$read_k$} or \textit{tryC$_k$}
\item \textit{$read_k$} or \textit{tryC$_k$} returned $A_k$ because $X_j \in \Rset(T_k)$ is locked (belongs to write set of a concurrent transaction)  
\item \textit{L.acquire($\Wset(T_k)$)} returned \textit{false} in Line~\ref{line:abort2} of Algorithm~\ref{alg:try}
\end{itemize}
Read-validation consists of checking whether the value to be returned from a tm-read of transaction $T_k$ is consistent with the values returned from the previous tm-reads. Hence, if validation of a tm-read in $T_k$ fails, it means that the t-object is overwritten by some transaction $T_i$ such that $T_i <_S T_k$, implying a {\it read-write conflict}. This is also implied if some t-object $X_j \in \Rset(T_k)$ is locked and returns \emph{abort} since the t-object is in the write set of a concurrent transaction.

Acquisition of the multi-trylock can return \textit{false} for $T_i$ because there exists some $X_j \in \Wset(T_i)$ that was being written to by a concurrent transaction $T_k$ implying a \textit{write-write conflict}.
    
Hence, for every transaction $T_i \in H$ that is aborted, there exists a conflicting t-object that is contended by a concurrent transaction. Thus, Algorithm~\ref{alg:try} implements a progressive TM 
\end{proof}\\
\\
\begin{reptheorem}[Theorem~\ref{th:mraw}]
There exists a progressive opaque STM
implementation that employs a
single multi-RAW per transaction.
Moreover, no RAWs are performed by read-only transactions. 
\end{reptheorem}\\
\\
\begin{proof}
From Lemmas~\ref{lm:total}, \ref{lm:seq}, \ref{lm:readfrom} and \ref{lm:progtm}, Algorithm~\ref{alg:try} implements a progressive, opaque STM.

Any process executing a transaction $T_k$ holds the lock on $\Wset(T_k)$ only once during \emph{tryC$_k$}. If
$|\Wset(T_k)|=\emptyset$, then the transaction simply returns $C_k$
incurring no RAW's. Thus, from Theorem~\ref{th:lock}, Algorithm~\ref{alg:try} incurs a single multi-RAW per updating transaction and no RAW's are performed in read-only transactions.
\end{proof}
\subsection{Progressive implementation with single mCAS}
\label{app:mcas}
\begin{algorithm}[t]
\caption{Progressive STM with single mCAS; implementation of transaction $T_k$ by process $p_i$}\label{alg:cas}
  \begin{algorithmic}[1]
  	\begin{multicols}{2}
  	{\footnotesize
	\Part{Shared variables}{
		\State $v_j$, for each t-object $X_j$
	}\EndPart	
	\Statex
	\Part{\Read$_k(X_j)$}{
		\State $\textit{ov}_j := \Read(v_j)$ \label{line:casread}
		\State $\Rset(T_k) := \Rset(T_k)\cup\{X_j\}$
		\State $\textit{nv}_j := \textit{ov}_j$ \label{line:copy}
		\If{isInvalid()} 
				\State return $A_k$
		\EndIf
		
		\State return the value of $\textit{ov}_j$
   	 }\EndPart
	\Statex
	\Part{\Write$_k(X_j,v)$}{
		\If{$X_j \not\in \Wset(T_k)$}
			\State $\textit{nv}_j := v$ \label{line:copy2}
			\State $\Wset(T_k) := \Wset(T_k)\cup\{X_j\}$
			\State return $\ok_k$
		\EndIf
   	}\EndPart
	\Statex
	\Part{\TryA$_k$()}{
		\State return $A_k$
	}\EndPart	
	\Statex
	\Part{\TryC$_k$()}{
		\If{$\Wset(T_k)=\emptyset$}
			\State return $C_k$ \label{line:rcas}
		\EndIf
		\ForAll{$X_j \in \Wset(T_k)$}
	 		 \State $\textit{ov}_j := \Read(v_j)$
	 	\EndFor	
		\State Let $\Wset(T_k) \cup \Rset(T_k)~be~\{X_{i_1},...,X_{i_m}\}$
		\State $V = \{v_{i_1},...,v_{i_m}\}$
		\State $OV = \{ov_{i_1},...,ov_{i_m}\}$
		\State $NV = \{nv_{i_1},...,nv_{i_m}\}$
		\If{mCAS(V,OV,NV)} \label{line:cas}
			\State return $C_k$
		\EndIf
		\State return $A_k$
   	 }\EndPart		
	\Statex
	\Part{Function: {\bf isInvalid()}}{
		\If{$\exists X_j \in Rset(T_k)$:$\textit{ov}_j\neq \Read(v_j)$}
			\State return $\true$
		\EndIf
		\State return $\false$
	}\EndPart		 
	}\end{multicols}
  \end{algorithmic}
\end{algorithm}
Algorithm~\ref{alg:cas} describes the implementation of a progressive, opaque TM incurring a single AWAR per updating committed transaction. The implementations of reads and writes are similar to ones described in Section~\ref{app:raw} except that each time a transaction $T_k$ performs a read of a t-object $X_i$, 
it reads $v_i$, adds $X_i$ to its read set and checks if the t-objects in the current read set of $T_k$ 
have not been updated since $T_k$ has read them.
If this is not the case, the transaction is forcefully aborted.
Otherwise, $T_k$ returns the value read in $v_i$.    

For every updating transaction $T_k$, $\TryC_k()$ invokes 
the \emph{mCAS} primitive over $\Dset(T_k)$. 
If the \emph{mCAS} returns $\true$, $\TryC_k()$ returns $C_k$,
otherwise it returns $A_k$. Read-only transactions simply returns $C_k$.
\subsubsection{Proof of opacity}
\label{sec:op2}
Using the same notation as in proof of opacity for Algorithm~\ref{alg:try} in Section~\ref{sec:op1}, let $E'$ denote an execution of the TM implemented by Algorithm~\ref{alg:cas} and $H'$, a linearization of the execution history $E'|_{TM}$. We construct $H'$ by assigning \emph{linearization points} to tm-operations performed in completion of $E'|_{TM}$.

The linearization point of a tm-operation $op_k$ performed by transaction $T_k$ in a completion of $E'|_{TM}$ is associated with access of a base object or a tm-event performed during the lifetime of the tm-operation as follows. 
\begin{itemize}
\item For every tm-read $op_k$ that returns a non-A$_k$ value, $\ell_{op_k}$ is chosen as the event in Line~\ref{line:casread} of Algorithm~\ref{alg:cas}, else, $\ell_{op_k}$ is chosen as invocation event of $op_k$
\item For every tm-write $op_k$ that returns, $\ell_{op_k}$ is chosen as the invocation event of $op_k$
\item For every $op_k=tryC_k$ that returns $C_k$ such that $\Wset(T_k) \neq \emptyset$, $\ell_{op_k}$ is associated with the successful acquisition of the lock on $\Wset(T_k)$ (Line~\ref{line:cas}), else if $op_k$ returns $A_k$, $\ell_{op_k}$ is associated with the invocation event of $op_k$
\item For every $op_k=tryC_k$ that returns $C_k$ such that $\Wset(T_k) = \emptyset$, $\ell_{op_k}$ is associated with Line~\ref{line:rcas}
\end{itemize}
The t-sequential history $S'$ is constructed in same manner as described in Section~\ref{sec:op1} from the above assignment for linearization points. Note that the Lemmas proven for Algorithm~\ref{alg:try} are clearly also valid for Algorithm~\ref{alg:cas}.
\begin{theorem}
\label{th:kcas}
There exists a progressive opaque STM
implementation that employs
exactly one AWAR per transaction.
Moreover, no AWARs are performed in read-only or aborted transactions. 
\end{theorem}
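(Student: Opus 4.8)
The plan is to prove Theorem~\ref{th:kcas} by establishing the correctness of Algorithm~\ref{alg:cas} via essentially the same serialization machinery built for Algorithm~\ref{alg:try} in Section~\ref{sec:op1}, and then to bound the number of AWARs by a direct inspection of the code. First I would fix an arbitrary execution $E'$ of the implementation, take the completion of $E'|_{TM}$ and assign linearization points exactly as indicated above: each successful tm-read at its read of $v_j$ (Line~\ref{line:casread}), each tm-write and each $\TryA_k$ at its invocation, each successful $\TryC_k$ of an updating transaction at its successful $\mCAS$ (Line~\ref{line:cas}), each $\TryC_k$ of a read-only transaction at Line~\ref{line:rcas}, and each aborting $\TryC_k$ at its invocation. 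The serialization point $\delta_{T_k}$ is then $\ell_{\TryC_k}$ for a committing updating transaction and the linearization point of the last non-$A_k$ tm-read otherwise, giving the candidate t-sequential history $S'$; note also that the implementation is trivially wait-free since it contains no loops or waiting statements.

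Next I would port Lemmas~\ref{lm:total}, \ref{lm:seq}, \ref{lm:serial} and \ref{lm:readfrom} to this setting. Lemma~\ref{lm:seq} (real-time order) is immediate since each $\delta_{T_k}$ lies inside the interval of $T_k$. For the deferred-update order (Lemma~\ref{lm:total}) and the read-from/legality arguments (Lemmas~\ref{lm:serial}, \ref{lm:readfrom}) the only structural change is that the ``holds a lock on $\Wset(T_k)$ throughout an interval'' reasoning of Algorithm~\ref{alg:try} is replaced by the indivisibility of $\mCAS$: a committing updating transaction $T_k$ atomically reads $v_j$ for every $X_j\in\Dset(T_k)$, checks that each equals the value $\textit{ov}_j$ it read earlier, and installs the new values, all at the single event $\ell_{\TryC_k}$. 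Hence if a tm-read $\Read_i(X)$, $X\in\Rset(T_i)$, is linearized after a committing writer of $X$ has installed its value, then either $\Read_i(X)$ returns that new value (not an ``old'' one) or a later successful tm-read of $T_i$ --- or, if $T_i$ is updating, its final $\mCAS$ --- re-reads $v_X$, sees it changed, and forces $A_i$; this reproduces the contradictions of Section~\ref{sec:op1} with $\mCAS$ atomicity in place of the lock window. Combining the ported lemmas shows that $S'$ is legal, equivalent to ${\bar H'}$, and respects $\prec_{H'}$ and $\prec_{H'}^{DU}$, so the implementation is opaque; progressiveness follows exactly as in Lemma~\ref{lm:progtm}, since a transaction can only abort via a failed read-validation (a read-write conflict with a concurrent writer of an object in its read set) or a failed $\mCAS$ (a write-write or write-read conflict on some object of its data set).

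Finally I would count AWARs. Tm-reads only read base objects and tm-writes only touch the process-local variables $\textit{nv}_j$, $\Rset(T_k)$, $\Wset(T_k)$; $\TryA_k$ accesses no base object; and $\TryC_k$ of a read-only transaction merely returns at Line~\ref{line:rcas}. For an updating $T_k$, $\TryC_k$ reads each $v_j$ with $X_j\in\Wset(T_k)$ --- trivial read events with no preceding base-object write, hence no RAW and no AWAR --- and then performs a single $\mCAS$. A \emph{successful} $\mCAS$ is an atomic section consisting of reads followed by writes, i.e., exactly one AWAR; a \emph{failed} $\mCAS$ leaves all objects unchanged and thus executes no write to a base object, so it is not an AWAR. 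Therefore every updating committed transaction performs exactly one AWAR and every read-only or aborted transaction performs none, which is the theorem. The main obstacle is the second step: carefully re-deriving Lemma~\ref{lm:readfrom}'s case analysis on where a third committing writer $T_k$ of $X$ sits relative to $\ell_{\Read_j(X)}$ and $\ell_{\TryC_j}$ so that each case is closed using the atomicity of $\mCAS$ rather than the duration of lock ownership; everything else is a routine transcription of Section~\ref{sec:op1}.
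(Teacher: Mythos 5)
Your proposal is correct and follows essentially the same route as the paper: assign the linearization points of Section~\ref{sec:op2}, reuse the serialization lemmas of Section~\ref{sec:op1} (with $\mCAS$ atomicity standing in for the lock-holding window), derive progressiveness from the two abort conditions, and count one AWAR for a successful $\mCAS$ and none for a failed one or a read-only commit. The only difference is that you spell out the porting of Lemmas~\ref{lm:total}--\ref{lm:readfrom}, which the paper's proof sketch simply asserts.
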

\begin{proofsketch}
Clearly, Algorithm~\ref{alg:cas} implements an opaque STM.

Algorithm~\ref{alg:cas} is progressive since every transaction forcefully aborts either due to read-invalidation or because \emph{mCAS} returns false implying that there exists a conflicting t-object contended by a concurrent transaction. Also note that, if several transactions concurrently conflict on a single t-object, the first transaction to execute the \emph{mCAS} in Line~\ref{line:cas} is returned \emph{true} and commits. Thus, the implementation guarantees that in any set of concurrent conflicting transactions, 
at least one of the transactions commits which actually provides a stronger progress guarantee than
progressiveness or even strong progressiveness.
Indeed, a transaction $T_k$ can abort only if a concurrent \emph{committed} transaction 
modifies the value of $v_j$ for some $X_j\in\Dset(T_k)$. 

Algorithm~\ref{alg:cas} performs a single \emph{mCAS} operation on $\Dset(T_k)$ of a transaction $T_k$ that commits during \emph{tryC$_k$}; if $T_k$ aborts, the \emph{mCAS} only performs reads of base objects. For read-only transactions, the transaction simply returns $C_k$ incurring no AWAR.
\end{proofsketch}
\subsection{Starvation-free multi-trylock}
\label{app:smul}
In this section, we define a multi-trylock object analogous to the one defined in Section~\ref{app:ml}, but whose operations are \emph{starvation-free}.
The algorithm is inspired by the \emph{Black-White Bakery Algorithm}~\cite{Gadi-Bakery} and uses a finite number of bounded registers.

The algorithm uses the following shared variables: registers $r_{ij}$ for each process $p_i$ and object $X_j$, a shared bit $\textit{color} \in \{B,W\}$, registers $LA_i \in \{0, \ldots , N\}$ for each $p_i$ that denote a \emph{Label} and $MC_i \in \{B,W\}$ for each $p_i$. 

We say $(LA_i,i) < (LA_k,k)$ \emph{iff} $LA_i < LA_k$ or $LA_i=LA_k$ and $i<k$.

\begin{algorithm}[ht]
\caption{Starvation-free multi-trylock invoked by process $p_i$}\label{alg:mul2}
  \begin{algorithmic}[1]
  	{\footnotesize
	\Part{Shared variables}{
		\State $LA_i$, for each process $p_i$, initially $0$
		\State \textit{$MC_i \in \{B,W\}$} for each process $p_i$, initially $W$
		\State $\textit{color} \in \{B,W\}$, initally $W$
	  	\State $r_{ij}$, for each process $p_i$ and each t-object $X_j$, initially $0$
	}\EndPart	
	\Statex
	\Part{\textit{acquire}($Q$)}{
   		\ForAll{$X_j \in Q$}	
			\State \Write$(r_{ij},1)$ \label{line:lwrite}
		\EndFor
		\State $c_i:=color$ \label{line:cread}
		\State \Write$(MC_i,c_i)$ \label{line:color}
		\State \Write($LA_i, 1+max(\{LA_k)|MC_k=MC_i$\}) \label{line:label1}
		\While{$\exists j:\exists k \neq i$: $isContended(X_j)$ $\&\&$ (($LA_k \neq 0$;~($MC_k=MC_i$);~$(LA_k,k) < (LA_i,i)$) $||$ \\~~~~($LA_k \neq 0;~(MC_k \neq MC_i$);~$MC_i=color$)) } \label{line:label2}
			\State no op
		\EndWhile
		\State return $\true$ 
	}\EndPart		
	 \Statex
	 \Part{\textit{release}($Q$)}{
  		\ForAll{$X_j \in Q$} \label{line:rwrite}	
 			\State \Write$(r_{ij},0)$
		\EndFor
		\If{$MC_i=B$} \label{line:color2}
			\State $\Write(color,W)$
		\Else
			\State $\Write(color,B)$
		\EndIf
		\State \Write($LA_i,0$)
		\State return \ok
	}\EndPart
	\Statex
	\Part{{\it \textit{isContended}($X_j$)}}{
		\If{$\exists p_t: r_{tj} \neq 0, t \neq i$} \label{line:isl}
			\State return $\true$
		\EndIf
		\State return $\false$
	}\EndPart			
	}
  \end{algorithmic}
\end{algorithm}

A starvation-free multi-trylock implementation satisfies the following properties:
\begin{itemize}
\item \emph{Mutual-exclusion}: For any object $X_j$, and any execution
  $\pi$, there exists at most one process that \emph{holds} a lock on
  $X_j$ after $\pi$. 

\item \emph{Progress}:  Let $\pi$ be any execution that contains
  $\textit{acquire}(Q)$ by process $p_i$. If no other process $p_k, k\neq i$ contends infinitely long on some 
$X_j \in Q$, then $\textit{acquire}(Q)$ returns \emph{true} in $\pi$. 

\item Let $\pi$ be any execution that contains $\textit{isContended}(X_j)$ invoked by $p_i$.

\begin{itemize}
\item If $X_j$ is locked by $p_{\ell};\ell \neq i$ during the complete execution of
$\textit{isContended}(X_j)$ in $\pi$, then
$\textit{isContended}(X_j)$ returns $\true$.

\item If $\forall \ell \neq i$, $X_j$ is never contended by $p_{\ell}$ during the execution of
$\textit{isContended}(X_j)$ in $\pi$, then
$\textit{isContended}(X_j)$ returns $\false$.
\end{itemize}
\end{itemize}
\begin{lemma}
\label{lm:mutex}
In every execution $\pi$ of Algorithm~\ref{alg:mul2}, if $p_i$ \emph{holds} a lock on some object $X_j$ after $\pi$, then one of the following conditions must hold:
\begin{enumerate}
\item[(1)]
for some $k\neq i$; $LA_k \neq 0$, if $MC_k=MC_i$, then $(LA_k,k) > (LA_i,i)$
\item[(2)]
for some $k\neq i$; $LA_k \neq 0$, if $MC_k \neq MC_i$, then $MC_i \neq color$
\end{enumerate}
\end{lemma}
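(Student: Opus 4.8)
The plan is to mimic the classical correctness proof of Lamport's Bakery algorithm, adapted to the two–colour (``Black–White'') variant realized by Algorithm~\ref{alg:mul2}, and to argue by contradiction. Suppose $p_i$ holds a lock on $X_j$ after $\pi$; then $\pi$ contains an invocation of $\textit{acquire}(Q)$ by $p_i$ with $X_j\in Q$ that returned $\true$, and along the whole suffix of $\pi$ following the event $X^i$ at which $p_i$ last evaluates the \textbf{while}-guard of Line~\ref{line:label2}, the registers $LA_i$ and $MC_i$ keep the values written in Lines~\ref{line:label1} and~\ref{line:color} of that invocation and $r_{ij}=1$; each of these is next modified only inside $\textit{release}$, which $p_i$ has not invoked. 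Fix $k\neq i$ with $LA_k\neq 0$ at the end of $\pi$, let $L^k$ be $p_k$'s most recent write to $LA_k$ --- being non-zero, this is a Line~\ref{line:label1} write of $p_k$'s current $\textit{acquire}$ --- and let $M^k<L^k$ (Line~\ref{line:color}) and $c^k<M^k$ (Line~\ref{line:cread}) be the matching colour events; $MC_k$ is then frozen to the value written at $M^k$ for the rest of $\pi$.

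For the sub-case $MC_k=MC_i$ I would establish $(LA_k,k)>(LA_i,i)$, splitting on the position of $L^k$ relative to the read of $LA_k$ performed by $p_i$ during its last evaluation of the Line~\ref{line:label2} guard. If $L^k$ precedes that read, then $p_i$ saw $LA_k\neq 0$ and, immediately afterwards, $MC_k=MC_i$; since $p_i$ nonetheless left the loop, the conjunct $(LA_k,k)<(LA_i,i)$ of the guard must have failed, so $(LA_k,k)\ge(LA_i,i)$, hence $>$ because $k\neq i$, and, both pairs being frozen, the inequality survives to the end of $\pi$. Otherwise $p_k$ runs the maximum-scan of its Line~\ref{line:label1} only after $p_i$ has executed Lines~\ref{line:lwrite}--\ref{line:label1}, so $p_k$ reads the already-installed values $LA_i\neq 0$ and $MC_i=MC_k$, includes $LA_i$ in that maximum, and obtains $LA_k\ge LA_i+1$. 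Either way $p_i$ has priority over $p_k$.

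For the sub-case $MC_k\neq MC_i$ I would show $MC_i\neq\mathit{color}$ at the end of $\pi$. Assume not, i.e.\ $\mathit{color}=MC_i$ there. Had $p_i$ at its last guard-evaluation observed $LA_k\neq 0$, $MC_k\neq MC_i$ and $\mathit{color}=MC_i$ simultaneously, the second disjunct of the Line~\ref{line:label2} guard would have held and $p_i$ would not have exited; hence at that point either $LA_k$ was still $0$ (so $p_k$'s whole current $\textit{acquire}$, in particular $c^k$, lies after $X^i$) or $\mathit{color}\neq MC_i$. In both situations $\mathit{color}$ toggled to $MC_i$ after the relevant event, and since $\mathit{color}$ is written only in Line~\ref{line:color2}, where a releasing process $p_m$ sets $\mathit{color}:=\neg MC_m$, the last such toggle was performed by some $p_m$ with $MC_m=\neg MC_i$ that was holding --- and then released --- a lock; relating $p_m$'s session to $p_i$'s, via the same doorway/$\textit{isContended}$ reasoning as above (restricting, in the general statement, to $k$ and $m$ that actually contend with $p_i$), then contradicts $p_i$'s remaining a lock holder.

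The step I expect to be the main obstacle is the usual Bakery ``doorway'' hazard --- a process choosing its label while reading a stale $LA_i$ and thereby failing to order itself behind $p_i$ --- which in the classical proof is ruled out by the \emph{choosing} flags that Algorithm~\ref{alg:mul2} lacks; here it must instead be handled using the announce bits $r_{ij'}$ (written in Line~\ref{line:lwrite} before the label) together with the $\textit{isContended}$ tests, exploiting that in the setting where the lemma is applied the object $X_j$ lies in both contenders' data sets, so $p_i$ necessarily observes $p_k$'s contention. A second difficulty, specific to the two-colour variant, is that $\mathit{color}$ is not monotone: it can flip on every $\textit{release}$, so it cannot simply be ``frozen'', and the argument of the third paragraph must track the net effect of an arbitrary run of flips while charging the final flip that lands on $MC_i$ onto an $\neg MC_i$-coloured lock holder. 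Everything else is a routine transcription of the textbook Bakery argument.
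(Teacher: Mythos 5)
The part of your argument that actually goes through is precisely the paper's own proof, and nothing more: the paper disposes of both cases of the lemma with the single observation that, had $p_i$ observed $LA_k\neq 0$ together with the relevant priority condition at its last evaluation of the guard of Line~\ref{line:label2}, it would not have exited the loop. That is your first sub-case. The paper never performs your second sub-case; it silently identifies the values of $LA_k$, $MC_k$ and $\mathit{color}$ at the end of $\pi$ with the values $p_i$ read inside the guard, so you will not find the missing idea by consulting it.

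The obstacle you flag in your last paragraph is a genuine gap, and your proposal does not close it. In the sub-case where $p_k$'s write $L^k$ to $LA_k$ occurs after $p_i$'s last guard evaluation, you claim that $p_k$'s maximum-scan ``reads the already-installed $LA_i\neq 0$'' and hence obtains $LA_k\geq LA_i+1$. This does not follow: the read of $LA_i$ inside $p_k$'s scan (Line~\ref{line:label1}) may precede $p_i$'s write to $LA_i$ even though $p_k$'s subsequent write to $LA_k$ follows $p_i$'s exit from the loop. Your fallback---using the announce bits $r_{kj}$ and $\textit{isContended}$ in place of choosing flags---also fails, because both disjuncts of the Line~\ref{line:label2} guard are conjoined with $LA_k\neq 0$: a process that has announced ($r_{kj}=1$) but has not yet written its label does not cause $p_i$ to wait. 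Concretely, take $k<i$ contending on the same $X_j$; let both processes write $r_{\cdot j}:=1$ and $MC_\cdot:=W$; let $p_k$ read $LA_i=0$ in its scan; then let $p_i$ write $LA_i:=1$, evaluate the guard ($\textit{isContended}(X_j)$ returns \emph{true} but $LA_k=0$, so the guard is false) and exit; finally let $p_k$ write $LA_k:=1$. After this prefix $p_i$ holds the lock while $LA_k\neq 0$, $MC_k=MC_i$ and $(LA_k,k)<(LA_i,i)$, so condition (1) of the lemma is violated. The doorway/choosing mechanism that excludes exactly this interleaving in the Bakery and Black--White Bakery algorithms has no counterpart in Algorithm~\ref{alg:mul2}, so no proof along these lines---yours or the paper's---can be completed without strengthening the algorithm or its guard.
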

\begin{proof}
In order to hold the lock on $X_j$, some process $p_i$ writes $1$ to $r_{ij}$, writes a value, say $W$ to $MC_i$ and reads the Labels
of other processes that have obtained the same color as itself and generates a Label greater by one than the maximum Label read (Line~\ref{line:label1}). 
Observe that until the value of the \emph{color} bit is changed, all processes read the same value $W$. The first process $p_i$
to hold the lock on $X_j$ changes the \emph{color} bit to $B$ when releasing the lock and hence the value read by all subsequent processes will be $B$ until it is changed again. Now consider two cases:  
\begin{enumerate}
\item[(1)]
Assume that there exists a process $p_k$, $k \neq i$, $LA_k \neq 0$ and $MC_k=MC_i$ such that $(LA_k,k) < (LA_i,i)$, but $p_i$
holds a lock on $X_j$ after $\pi$. Thus, $isContended(X_j)$ returns \emph{true} to $p_i$ because $p_k$ writes to $r_{kj}$ (Line~\ref{line:lwrite}) before writing to $LA_k$ (Line~\ref{line:label1}). 
By assumption, $(LA_k,k) < (LA_i,i);LA_k>0$ and $MC_i=MC_k$, but the conditional in Line~\ref{line:label2} returned \emph{true} to $p_i$ 
without waiting for $p_k$ to stop contending on $X_j$---contradiction. 
\item[(2)]
Assume that there exists a process $p_k$, $k \neq i$, $LA_k \neq 0$ and $MC_k \neq MC_i$ such that $MC_i=color$, but $p_i$ holds a lock on $X_j$ after $\pi$. Again, since $LA_k >0$, $isContended(X_j)$ returns \emph{true} to $p_i$, $MC_k \neq MC_i$ and $MC_i=color$, but the conditional in Line~\ref{line:label2} returned \emph{true} to $p_i$ without waiting for $p_k$ to stop contending on $X_j$---contradiction.
\end{enumerate}
\end{proof}
\begin{theorem}
\label{th:lock2}
Algorithm~\ref{alg:mul2} is an implementation of multi-trylock object
in which every operation is starvation-free and incurs at most four RAWs.
\end{theorem}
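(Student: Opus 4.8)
The plan is to establish, in turn, the three defining properties of a multi-trylock --- mutual exclusion, progress, and the two \textit{isContended} guarantees --- then the starvation-freedom of each of the three operations, and finally the RAW count; once the relevant invariants are in place, each part is a short inspection of Algorithm~\ref{alg:mul2}.

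\textbf{Mutual exclusion and \textit{isContended}.} Mutual exclusion is a direct corollary of Lemma~\ref{lm:mutex}. Suppose processes $p_i$ and $p_k$, $i\neq k$, both hold a lock on the same object after some execution $\pi$; then $LA_i\neq 0$ and $LA_k\neq 0$ in the state reached after $\pi$. If $MC_i=MC_k$, Lemma~\ref{lm:mutex} applied to $p_i$ gives $(LA_k,k)>(LA_i,i)$ and applied to $p_k$ gives $(LA_i,i)>(LA_k,k)$ --- a contradiction. If $MC_i\neq MC_k$, the lemma forces $MC_i\neq\textit{color}$ and $MC_k\neq\textit{color}$; but $MC_i$ and $MC_k$ are distinct elements of $\{B,W\}$, so one of them equals $\textit{color}$ --- again a contradiction. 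For the two \textit{isContended} properties I would reuse the argument of Theorem~\ref{th:lock} almost verbatim, relying on the invariant that a process $p_\ell$ has $r_{\ell j}=1$ exactly between its write on line~\ref{line:lwrite} of an \textit{acquire}$(Q)$ with $X_j\in Q$ and the matching write on line~\ref{line:rwrite} of \textit{release}$(Q')$: if $X_j$ is locked by some $p_\ell\neq p_i$ throughout an \textit{isContended}$(X_j)$ call, the test on line~\ref{line:isl} reads $r_{\ell j}=1$ and returns true; if no $p_\ell\neq p_i$ contends on $X_j$ during the call, every $r_{tj}$ read on line~\ref{line:isl} is $0$ and the call returns false.

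\textbf{Progress and starvation-freedom.} This is the heart of the proof, and the step I expect to be the main obstacle: it is the Black--White Bakery argument adapted to the set-valued \textit{acquire}. Fix an execution $\pi$ containing \textit{acquire}$(Q)$ by $p_i$ in which no process $p_k\neq p_i$ contends on any $X_j\in Q$ forever; I must show the loop on line~\ref{line:label2} terminates. After $p_i$ has executed lines~\ref{line:cread}--\ref{line:label1} its pair $(LA_i,i)$ and color $MC_i$ are fixed, with $LA_i\neq 0$. Call $p_k$ a \emph{blocker} of $p_i$ at a given point if it then contends on some $X_j\in Q$ and either $MC_k=MC_i$ with $(LA_k,k)<(LA_i,i)$, or $MC_k\neq MC_i$ with $MC_i=\textit{color}$. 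I would argue three things: (i) by the doorway discipline --- every process writes its interest flag $r_{ij}$ on line~\ref{line:lwrite} \emph{before} it reads $\textit{color}$ and chooses a label --- any process that begins choosing after $p_i$ has written $LA_i$ reads a value $\ge LA_i$ into the maximum on line~\ref{line:label1} whenever it adopts color $MC_i$, hence takes a strictly larger label, so the set of same-color blockers of $p_i$ never gains a member and loses members as they release; (ii) while $MC_i=\textit{color}$ the only possible cross-color blockers have color $\neq MC_i$, but once $\textit{color}=MC_i$ every newly entering process adopts color $MC_i$, so the set of processes of color $\neq MC_i$ only shrinks and eventually becomes empty; (iii) any blocker either holds a lock on some $X_j\in Q$ --- and then releases, by the hypothesis on $\pi$ --- or is itself waiting in \textit{acquire}, and by induction on its rank in the bakery order (color group first, then $(LA,\cdot)$ within a group) it eventually returns and releases. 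Combining (i)--(iii), eventually $p_i$ has no blocker, the guard on line~\ref{line:label2} evaluates to false, and \textit{acquire}$(Q)$ returns true. Since \textit{release} and \textit{isContended} contain no loops they are wait-free, hence starvation-free, and \textit{acquire} is starvation-free by the argument just sketched. Finally, as in the Black--White Bakery, a releasing process of color $c$ flips $\textit{color}$ to $\overline{c}$, so no new process ever joins the color group currently being drained; this keeps every $LA_i$ inside $\{0,\dots,N\}$, so all registers are bounded.

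\textbf{RAW count.} This is a routine inspection of the code. In \textit{acquire}$(Q)$ the base-object accesses occur in the order: writes to the registers $r_{ij}$ (line~\ref{line:lwrite}) followed by a read of $\textit{color}$ (line~\ref{line:cread}) --- one (multi-)RAW; then a write of $MC_i$ (line~\ref{line:color}) followed by the reads of the $MC_k$'s and $LA_k$'s needed to evaluate the maximum on line~\ref{line:label1} --- a second RAW; then the write of $LA_i$ (line~\ref{line:label1}) followed by the reads performed while evaluating the guard on line~\ref{line:label2}, including those inside \textit{isContended} --- a third RAW; no further write occurs in \textit{acquire}, so it performs at most three RAWs. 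In \textit{release}$(Q)$ the accesses are writes to the $r_{ij}$ (line~\ref{line:rwrite}) followed by a read of $MC_i$ (line~\ref{line:color2}) --- one RAW --- and then only writes ($\textit{color}$, then $LA_i:=0$), so at most one RAW. \textit{isContended} performs only reads, hence no RAW. Thus every operation incurs at most four RAWs --- in fact \textit{acquire} at most three and \textit{release} at most one --- and a complete \textit{acquire}/\textit{release} cycle costs at most four RAWs, which is exactly what is needed in Theorem~\ref{th:sprogop}.
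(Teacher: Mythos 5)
Your proposal is correct and follows essentially the same route as the paper's proof: mutual exclusion as a two-case corollary of Lemma~\ref{lm:mutex}, the \textit{isContended} properties by the $r_{\ell j}$ invariant, and the identical RAW accounting (three for \textit{acquire}, one for \textit{release}, none for \textit{isContended}). Your Black--White Bakery argument for progress and starvation-freedom is in fact more detailed than the paper's rather informal treatment of that step.
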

\begin{proof}
Denote by $L$ the shared object implemented by
Algorithm~\ref{alg:mul2}.

Assume, by contradiction, that $L$ does not provide mutual-exclusion:
there exists an execution $\pi$ after which processes $p_i$ and $p_k$, $k\neq i$
\emph{hold} a lock on the same object, say $X_j$. Since both $p_i$ and $p_k$ have performed the write to $LA_i$ and $LA_k$ resp. in Line~\ref{line:label1}, $LA_i, LA_k >0$. Consider two cases:
\begin{enumerate}
\item[(1)]
If $MC_k=MC_i$, then from Condition $1$ of Lemma~\ref{lm:mutex}, we have $(LA_k,k) < (LA_i,i)$ and $(LA_k,k) > (LA_i,i)$---contradiction.
\item[(2)]
If $MC_k \neq MC_i$, then from Condition $2$ of Lemma~\ref{lm:mutex}, we have $MC_i \neq color$ and $MC_k \neq color$ which
implies $MC_k=MC_i$---contradiction.
\end{enumerate}

$L$ also ensures progress. If process $p_i$ wants to hold the lock on an object $X_j$ i.e. invokes $\textit{acquire}(Q), X_j \in Q$, it checks if any other process $p_k$ holds the lock on $X_j$. If such a process $p_k$ exists and $MC_k=MC_i$, then clearly $isContended(X_j)$ returns \emph{true} for $p_i$ and $(LA_k,k) < (LA_i,i)$. Thus, $p_i$ fails the conditional in Line~\ref{line:label2} and waits until $p_k$ releases the lock on $X_j$ to return \emph{true}. However, if $p_k$ contends infinitely long on  $X_j$, $p_i$ is also forced to wait indefinitely to be returned \emph{true} from the invocation of $\textit{acquire}(Q)$. The same argument works when $MC_k \neq MC_i$ since when $p_k$ stops contending on $X_j$, $isContended(X_j)$ eventually returns \emph{false} for $p_i$ if $p_k$ does not contend infinitely long on $X_j$.

All operations performed by $L$ are starvation-free. Each process $p_i$ that successfully holds the lock on
an object $X_j$ in an execution $\pi$ invokes $\textit{acquire}(Q), X_j \in Q$, obtains a color and 
chooses a value for $LA_i$ since there is no way to be blocked while writing to $LA_i$. 
The response of operation $\textit{acquire}(Q)$ by $p_i$ is only delayed if there exists a concurrent invocation of $\textit{acquire}(Q'),X_j \in Q'$ by $p_k$ in $\pi$. In that case, process $p_i$ waits until $p_k$ invokes $\textit{release}(Q)$ and writes $0$ to $r_{kj}$ and eventually holds the lock on $X_j$. The implementation of \emph{release} and \emph{isContended} are wait-free operations (and hence starvation-free) since they contains no unbounded loops or waiting statements.

The implementation of $\textit{isContended}(X_j)$ only reads base
objects. 
The implementation of $\textit{release}(Q)$ writes to a series of base objects (Line~\ref{line:rwrite}) and then reads a base object (Line~\ref{line:color2}) incurring a single RAW. 
The implementation of $\textit{acquire}(Q)$ writes to base objects (Line~\ref{line:lwrite}), reads the shared bit $color$ (Line~\ref{line:cread})---one RAW, writes to a base object (Line~\ref{line:color}), reads the Labels (Line~\ref{line:label1})---one RAW, writes to its own Label and finally performs a sequence of reads when evaluating the conditional in Line~\ref{line:label2}---one RAW. 

Thus, Algorithm~\ref{alg:mul2} incurs at most four RAWs.
\end{proof}
\subsection{Strong progressive implementation with constant RAWs}
\label{app:sraw}
Let $CObj_H(T_i)$ denote the set of t-objects over which transaction $T_i \in \parts(H)$ conflicts with any other transaction in history $H$
i.e. $X \in CObj_H(T_i)$, if there exists a transaction $T_k \in \parts(H)$, $k\neq i$, such that $T_i$ conflicts with $T_k$ on $X$ in $H$. 
Then, $CObj_H(Q)=\{CObj_H(T_i) |\forall T_i \in Q\}$, denotes the union of sets $CObj_H(T_i)$ for all transactions in $Q$.

Let $CTrans(H)$ denote the set of non-empty subsets of $\parts(H)$ such that a set $Q$ is in $CTrans(H)$ if no transaction in $Q$ conflicts with a transaction not in $Q$.
\begin{definition}
\label{def:sprog}
A TM implementation $M$ is \emph{strongly progressive} if $M$ is weakly progressive 
and for any history $H$ of $M$, there does not exist a prefix $H'$ of $H$ in which every set $Q \in CTrans(H')$ 
of transactions that are live in $H'$ such that $|CObj_{H'}(Q)| \leq 1$, every transaction in $Q$ is forcefully aborted in $H$.
\end{definition}
Algorithm~\ref{alg:gp} describes the implementation of the \emph{tryC} operation of a strongly progressive, opaque TM. The only modification over
the \emph{tryC} implementation of Algorithm~\ref{alg:try} is that in Algorithm~\ref{alg:gp}, every transaction with $|\Rset|=\emptyset$ eventually commits. The \emph{read}, \emph{write}, \emph{tryA} and \emph{isAbortable} operations are the same as in Algorithm~\ref{alg:try}. 
\begin{algorithm}[t]
\caption{Strongly progressive, opaque STM: the implementation of $T_k$ executed by $p_i$}\label{alg:gp}
\begin{algorithmic}[1]
  	\begin{multicols}{1}
  	{\footnotesize
	\Part{Shared variables}{
		\State $v_j$, for each t-object $X_j$
		\State $L$, a starvation-free multi-trylock object
	}\EndPart	
	\Statex
	\Part{\TryC$_k$()}{
		\If{$|\Wset(T_k)|= \emptyset$}
			\State return $C_k$ \label{line:return}
		\EndIf
		\State locked $:= L.\textit{acquire}(\Wset(T_k))$\label{line:acq} 
		\If{isAbortable()} \label{line:abort3}
			\State $L.\textit{release}(\Wset(T_k))$ 
			\State return $A_k$
		\EndIf
		\ForAll{$X_j \in \Wset(T_k)$}
	 		 \State $\Write(v_j,(\textit{nv}_j,k))$ \label{line:write}
	 	\EndFor		
		\State $L.\textit{release}(\Wset(T_k))$   		
   		\State return $C_k$
   	 }\EndPart		
	 
	}\end{multicols}
  \end{algorithmic}
\end{algorithm}
\begin{theorem}
\label{tm:gpprog}
Algorithm~\ref{alg:gp} implements a strongly progressive TM
\end{theorem}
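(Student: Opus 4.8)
\begin{proofsketch}[Proof plan]
Algorithm~\ref{alg:gp} differs from Algorithm~\ref{alg:try} only in that the lock object $L$ is the \emph{starvation-free} multi-trylock of Algorithm~\ref{alg:mul2} (Theorem~\ref{th:lock2}), whose $\textit{acquire}$ never returns $\false$, and in that the branch of \TryC{} that aborts a transaction on a failed acquisition is dropped. Hence opacity is inherited from Section~\ref{sec:op1} verbatim (Lemmas~\ref{lm:total}, \ref{lm:seq}, \ref{lm:readfrom} use only mutual exclusion of $L$, guaranteed by Theorem~\ref{th:lock2}, and the unchanged $\textit{read}$/$\textit{write}$ code), and weak progressiveness is inherited from the argument of Lemma~\ref{lm:progtm}: a transaction can now be forcefully aborted only when $\textit{isAbortable}()$ returns $\true$, which happens only in the presence of a concurrent conflicting transaction. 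So the plan is to establish the ``no bad prefix'' clause of the definition of strong progressiveness (Definition~\ref{def:sprog}).

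I would argue by contradiction. Suppose there are a history $H$ of $M$, a prefix $H'$, and a set $Q\in CTrans(H')$ of transactions live in $H'$ with $|CObj_{H'}(Q)|\le 1$, such that every transaction of $Q$ is forcefully aborted in $H$. If $CObj_{H'}(Q)=\emptyset$, some transaction of $Q$ is forcefully aborted with no conflicting transaction, contradicting weak progressiveness; so $CObj_{H'}(Q)=\{X_j\}$ for a single t-object $X_j$. Since no transaction of $Q$ commits, none performs the commit-time write to the base object $v_j$, and no transaction outside $Q$ writes $v_j$ while a transaction of $Q$ is live (that would create a conflict on $X_j$ with a $Q$-transaction and pull it into $Q$); hence $v_j$ never changes, so $\textit{isInvalid}()$ never triggers inside a $Q$-transaction, and for the same reason $\textit{isContended}(Y)$ can never return $\true$ for $Y\ne X_j$ during a $Q$-transaction. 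Since $\textit{acquire}$ never returns $\false$, the only way a transaction of $Q$ can be forcefully aborted is $\textit{isContended}(X_j)$ returning $\true$; in particular every transaction of $Q$ reads $X_j$, and every $X_j$-writer of $Q$ also reads $X_j$.

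The core step is then register-level. A register $r_{tj}$ associated with $X_j$ becomes non-zero only while some $X_j$-writer executes $\textit{acquire}$ on its write set inside its \TryC. Pick any $T\in Q$: it aborts, so at some point $\textit{isContended}(X_j)$ observes $r_{tj}\ne 0$ for a process $p_t\ne p_T$ running a transaction $T'$ inside such an $\textit{acquire}$; then $T'$ writes $X_j$, and since $T'$ and $T$ are both live and $X_j\in\Dset(T)\cap\Wset(T')$ they conflict on $X_j$, so $T'\in Q$; moreover $T'$ has entered \TryC, hence completed its read phase without aborting. Let $W$ be the set of transactions of $Q$ that write $X_j$ and reach their $\textit{acquire}$ call in \TryC; by the above $W$ is non-empty, and it is finite. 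By mutual exclusion and starvation-freedom of $L$ (Theorem~\ref{th:lock2}) and the no-failure assumption, every member of $W$ eventually holds the lock on $X_j$, and their holding intervals are pairwise disjoint, hence totally ordered; let $T^*\in W$ be the one holding it last. When $T^*$ evaluates $\textit{isAbortable}()$ in \TryC{} it holds the lock on $X_j$; if $\textit{isContended}(X_j)$ returned $\true$ there, some $p_t\ne p_{T^*}$ would have $r_{tj}\ne 0$ while \emph{not} holding the lock (mutual exclusion), hence be waiting inside $\textit{acquire}$ and would acquire the lock on $X_j$ after $T^*$ releases it --- another member of $W$ with a strictly later holding interval, contradicting the choice of $T^*$. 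So $\textit{isAbortable}()$ returns $\false$ for $T^*$, whence $T^*$ writes $v_j$ and returns $C_k$, contradicting both ``$v_j$ never changes'' and ``every transaction of $Q$ aborts''. This contradiction proves the clause, so $M$ is strongly progressive.

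I expect the main obstacle to be making the third paragraph fully rigorous: arguing that the process responsible for an abort runs a transaction again in $Q$ and in $W$, and that the holding intervals of $W$ are totally ordered with a well-defined last element. This relies on the precise mutual-exclusion, progress and $\textit{isContended}$ guarantees of Algorithm~\ref{alg:mul2} proved in Theorem~\ref{th:lock2}, on the canonic-transaction assumption (a transaction touches a write-lock register only after its whole read phase is over), and on the fact that throughout the window in which the transactions of $Q$ are active only they write $v_j$ or contend for the lock on $X_j$. A minor additional point is which prefix the conflict-closure of $Q$ is read in; it suffices to take $H'$ long enough to contain the \TryC{} executions of the transactions of $Q$ that the argument inspects.
\end{proofsketch}
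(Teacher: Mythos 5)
Your proof is correct and follows essentially the same route as the paper's: weak progressiveness via the abort-scenario analysis of Lemma~\ref{lm:progtm}, and strong progressiveness by showing that among the transactions conflicting only on $X_j$, one that acquires the starvation-free lock must pass \emph{isAbortable()} and commit. In fact your ``last lock holder'' argument is more careful than the paper's own proof, which essentially asserts that a transaction acquiring the lock eventually commits without explicitly ruling out an abort inside \emph{isAbortable()}; your contradiction via the latest holding interval closes that gap.
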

\begin{proof}
Every transaction $T_k$ in a TM $M$ whose tm-operations are defined by Algorithm~\ref{alg:gp} can be aborted in the following scenarios
\begin{itemize}
\item Read-validation failed in \textit{$read_k$} or \textit{tryC$_k$}
\item \textit{$read_k$} or \textit{tryC$_k$} returned $A_k$ because $X_j \in \Rset(T_k)$ is locked (belongs to write set of a concurrent transaction)  
\end{itemize}
Thus, Algorithm~\ref{alg:gp} implements a weakly progressive TM (From Lemma~\ref{lm:progtm}).

To show Algorithm~\ref{alg:gp} also implements a strongly progressive STM, we need to show that for every set of transactions that concurrently contend on a single t-object, at least one of the transactions is not aborted. 

Consider transactions $T_i$ and $T_k$ that concurrently attempt to execute $tryC_i$ and $tryC_k$ such that $X_j \in \Wset_i \cup \Wset_k$. Consequently, they both invoke the \emph{acquire} operation of the multi-trylock  (Line~\ref{line:acq}) and thus, from Theorem~\ref{th:lock2}, both $T_i$ and $T_k$ must commit eventually.
Also, if validation of a tm-read in $T_k$ fails, it means that the t-object is overwritten by some transaction $T_i$ such that $T_i$ precedes $T_k$, implying at least one of the transactions commit. Otherwise, if some t-object $X_j \in \Rset(T_k)$ is locked and returns \emph{abort} since the t-object is in the write set of a concurrent transaction $T_i$. While it may still be possible that $T_i$ returns $A_i$ after acquiring the lock on $\Wset_i$, strong progressiveness only guarantees progress for transactions that conflict on at most one t-object. Thus, in either case, for every set of transactions that conflict on at most one t-object, at least one transaction is not forcefully aborted.
\end{proof}
\begin{reptheorem}[Theorem~\ref{th:sprogop}]
There exists a strongly progressive single-version opaque STM
implementation with starvation-free operations that uses invisible reads and employs
at most four RAWs per transaction.
Moreover, no RAWs are performed in read-only transactions. 
\end{reptheorem}\\
\\
\begin{proof}
The correctness of Algorithm~\ref{alg:gp} clearly follows from the proof of opacity presented in Section~\ref{sec:op1} for Algorithm~\ref{alg:try}.
From Theorem~\ref{tm:gpprog}, it is also strongly progressive.

Any process executing a transaction $T_k$ holds the lock on $\Wset(T_k)$ only once during \emph{tryC$_k$}. If
$|\Wset(T_k)|=\emptyset$, then the transaction simply returns $C_k$
incurring no RAW's. Thus, from Theorem~\ref{th:lock2}, Algorithm~\ref{alg:gp} incurs at most four RAWs per updating transaction and no RAW's are performed in read-only transactions.
\end{proof}
\section{RAW/AWAR cost of probabilistically permissive STMs}
\label{app:prob}
\begin{theorem}
\label{th:prob}
Let $M$ be a probabilistically permissive opaque STM implementation.
Then, for any $m$, there exists with positive probability, an execution in which a read-only transaction $T_i$ contains   
$\Omega(m)$ non-overlapping RAWs or AWARs on base objects where $m=|\Rset(T_i)|$.
\end{theorem}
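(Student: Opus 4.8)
The plan is to adapt the proof of Theorem~\ref{th:perm} to the probabilistic setting. Two new ingredients are needed: showing that the base execution of Figure~\ref{fig:H} arises with positive probability, and replacing the deterministic step ``$T_3$ must commit'' by an existence argument that relies on opacity of $M$ rather than on permissiveness.

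First I would reuse the three-transaction execution $E$ of Figure~\ref{fig:H}: $T_3$ reads $X_1$; then $T_2$ writes $X_1$ and commits; then a read-only transaction $T_1$ sequentially reads $X_1,\ldots,X_m$. Let $H$ be the sequential history in which $R_3(X_1)$ returns the initial value $0$, $W_2(X_1,v)$ returns $\ok_2$, $\TryC_2$ returns $C_2$, $R_1(X_1)$ returns $v$, and $R_1(X_k)$ returns $0$ for $2\le k\le m$. I claim $M$ has a finite execution with history $H$, and that this happens with positive probability. For each tm-operation in $H$, given the preceding prefix of $H$, the listed value is the \emph{only} opaque-consistent non-abort response (for instance $R_1(X_1)$ must return $v$, since $T_2\prec_H T_1$ forces $T_2$ before $T_1$ in every serialization, and $R_1(X_k)$, $k\ge 2$, must return $0$ since no committed transaction writes $X_k$); because $M$ is opaque, any non-abort response equals that value, and by probabilistic permissiveness (Definition~\ref{def:pperm}) a non-abort response is produced with positive probability conditioned on the prefix. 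Since the transactions take steps one at a time, starvation-freedom guarantees each operation returns. Multiplying the finitely many positive conditional probabilities gives $\Pr[\text{history }H]>0$, and in any execution with history $H$ the unique legal serialization is $T_3,T_2,T_1$.

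The heart of the proof is the following claim, established for \emph{every} execution $E^{*}$ of $M$ whose history is $H$: for each $k$ with $2\le k\le m$, the fragment $\pi$ of $E^{*}$ that is the complete execution of $R_1(X_k)$ contains a RAW or an AWAR. Assume not, and decompose $\pi=\pi_s\cdot\pi_w\cdot\pi_f$ exactly as in the proof of Theorem~\ref{th:perm}: if $\pi$ contains no base-object write then $\pi_w=\pi_f$ is empty; otherwise $\pi_w$ is the shortest fragment of $\pi$ containing the atomic section of the first base-object write of $\pi$, and, since $\pi$ has no AWAR, $\pi_w$ starts with that write, so $\pi_s$ performs no base-object write; since $\pi$ has no RAW, every base-object read in $\pi_w\cdot\pi_f$ reads a base object previously written inside $\pi_w\cdot\pi_f$. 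Let $\alpha$ be the prefix of $E^{*}$ ending just before the invocation of $R_1(X_k)$, and let $\rho$ be a solo execution, appended after $\alpha$, of $W_3(X_k,1)$ followed by $\TryC_3$. In $\alpha\cdot\rho$, committing $T_3$ preserves opacity: $T_1$ has so far read only $X_1,\ldots,X_{k-1}$, so $T_3,T_2,T_1$ is a legal serialization of $\alpha\cdot\rho$'s history (no read of $X_k$ to conflict with $T_3$'s write). Hence, by probabilistic permissiveness, there is a run of $\rho$ in which $\TryC_3$ returns $C_3$; fix one. Because $\pi_s$ writes no base object, the configuration after $\alpha\cdot\pi_s$ agrees with the one after $\alpha$ on every base object, so this run of $\rho$ is also a legal continuation of $\alpha\cdot\pi_s$ and $\TryC_3$ still returns $C_3$ there. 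Finally, since $\pi_w\cdot\pi_f$ reads only base objects it itself wrote, $T_1$'s execution of $\pi_w\cdot\pi_f$ does not observe $\rho$, so $E'=\alpha\cdot\pi_s\cdot\rho\cdot\pi_w\cdot\pi_f$ is a valid execution of $M$ in which $R_1(X_k)$ returns $0$ while $T_3$ has already committed with $X_k\in\Wset(T_3)$. But $E'$ has no legal serialization: $T_3$ must precede $T_2$ (it read the initial value of $X_1$), $T_2$ precedes $T_1$ in real-time order, and legality of $R_1(X_k)=0$ forces $T_1$ to precede the committed $X_k$-writer $T_3$ --- a cycle. This contradicts opacity of $M$ and proves the claim.

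Putting the pieces together: with positive probability $M$ produces an execution with history $H$, and in every such execution each of $R_1(X_2),\ldots,R_1(X_m)$ contains a RAW or an AWAR; since $T_1$ runs its reads sequentially, these $m-1=\Omega(m)$ patterns are pairwise non-overlapping, and $T_1$ is read-only with $|\Rset(T_1)|=m$. I expect the main obstacle to be the step just described: unlike the deterministic case one cannot simply assert that $T_3$ commits in $E'$, so one must isolate sub-executions ($\alpha\cdot\rho$, then $\alpha\cdot\pi_s\cdot\rho$) in which committing $T_3$ is opaque-consistent, invoke probabilistic permissiveness there to get an actual commit run, transplant that run into $E'$ using that $T_3$ never observes $\pi_s$, and then derive the contradiction from opacity of $M$ (which holds deterministically) rather than from permissiveness. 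A secondary technical point is the positive-probability bookkeeping for history $H$, which is routine once opacity is used to pin down the unique non-abort response at each step.
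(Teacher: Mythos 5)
Your proposal is correct and follows essentially the same route as the paper's proof: the same three-transaction execution of Figure~\ref{fig:H}, the same decomposition $\pi=\pi_s\cdot\pi_w\cdot\pi_f$, the same transplanted committing run $\rho$ of $T_3$ obtained from probabilistic permissiveness, and the same contradiction with opacity. Your treatment is in fact slightly more careful than the paper's in two places --- the explicit positive-probability bookkeeping for the base history $H$, and phrasing the per-read claim so that it holds for \emph{every} execution with history $H$ while only the occurrence of $H$ itself is probabilistic --- but these are refinements of the same argument, not a different one.
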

\begin{proof}
For the proof, note that we only need to show that there exists an execution of the probabilistically permissive TM that is the same
as the execution of a permissive TM, Then, the construction and arguments used in the proof of Theorem~\ref{th:perm} can be extended for the probabilistic case. 

Let $E$ denote the execution depicted in Figure~\ref{fig:H} where $T_3$ performs a read of $X_1$, then $T_2$ performs a write on $X_1$ and commits, and finally $T_1$ performs a series of reads on $X_1, \ldots , X_m$. We proceed by induction by considering $R_1(X_k)$, 
the $k$-th read of $T_1$, $2\leq k\leq m$. 
\begin{enumerate}
\item[(1)]
Imagine an extension of $E$, denoted by  $E'$, in which $T_3$ performs a $W_3(X_k)$ immediately after $R_1(X_k)$
and then tries to commit.
A serialization of $H'=E'|_{TM}$ should obey
$T_3\prec_{H'}^{DU} T_2$ and $T_2\prec_{H'} T_1$.
The execution of $R_1(X_k)$ does not modify base objects, hence, $T_3$
does not observe $R_1(X_k)$ in $E'$.
In a probabilistically permissive TM, the tm-operation $W_3(T_k)$ can return one of the following values $A_k$ or $ok_k$.
Note that this response is chosen by sampling uniformly at random from the set of possible return values, thus, there exists a positive probability that
$T_3$ commits successfully (when it returns $ok_k$).
But since $T_1$ performs $R_1(X_k)$ before $T_3$ commits and $T_3$ updates $X_k$, we also have $T_1\prec_{H'}^{DU} T_3$.
Thus, $T_3$ cannot precede $T_1$ in any serialization and we establish a contradiction.
Consequently, there exists with positive probability, an execution in which each $R_1(X_k)$, $2 \leq k \leq m$ performs a write to a base-object.
\item[(2)]
Let $\pi$ be a fragment of $E$ that represents the complete execution of $R_1(X_k)$. 
Clearly, there exists with positive probability, an execution in which $\pi$ contains a write to a base-object. Let $\pi_j$ be 
the first write to a base-object in $\pi$ and $\pi_w$, the shortest fragment of $\pi$ 
that contains the atomic section to which $\pi_j$ belongs, else if $\pi_j$ 
is not part of an atomic section, $\pi_w=\pi_j$. Thus, $\pi$ can be represented as $\pi_s\cdot \pi_w\cdot \pi_f$.
 
Suppose that $\pi$ does not contain a RAW or AWAR. 
Since $\pi_w$ does not contain an AWAR (atomic write-after-read), there are no read events 
in $\pi_w$ that precede $\pi_j$. Thus, $\pi_j$ is the first base-object event in $\pi_w$. 
Consider the execution fragment $\pi_s\cdot \rho$, where $\rho$ is the complete execution 
of $\{W_3(X_k)$, $TC_3 \}$ by transaction $T_3$. 
By Definition~\ref{def:perm}, such an execution exists with positive probability in which $T_3$ commits. Since $\pi_s$ does not perform any base-object write, $\pi_s \cdot\rho$ is indistinguishable to $T_3$ from $\rho$. 

Also, by our assumption, $\pi_w\cdot \pi_f$ contains no RAW i.e. 
any read performed in $\pi_w\cdot \pi_f$ 
can only be applied to base objects previously written in $\pi_w\cdot \pi_f$. 
Thus, in a probabilistically permissive TM in which responses to tm-operations are chosen by independent coin-tosses, 
there exists with positive probability, an execution $\pi_s \cdot\rho \cdot \pi_w\cdot \pi_f$ 
that is indistinguishable to $T_1$ from $\pi$.
However, in $\pi_s \cdot\rho \cdot \pi_w\cdot \pi_f$, $T_3$ commits (as in $\rho$) but 
$T_1$ ignores the value written by $T_3$ to $X_k$.  
But $T_3$ can only be serialized before $T_1$---contradiction.
\end{enumerate}
\end{proof}

\end{document}